\newtheorem{theorem}{Theorem}
\newtheorem{lemma}{Lemma}
\newtheorem{proposition}{Proposition}
\providecommand{\algorithmname}{Algorithm}
\algnewcommand\algorithmicinput{\textbf{Input}:}
\algnewcommand\algorithmicoutput{\textbf{Output}:}
\algnewcommand\INPUT{\item[\algorithmicinput]}
\algnewcommand\OUTPUT{\item[\algorithmicoutput]}
\newcolumntype{L}[1]{>{\raggedright\let\newline\\\arraybackslash\hspace{0pt}}m{#1}}
\newcolumntype{C}[1]{>{\centering\let\newline\\\arraybackslash\hspace{0pt}}m{#1}}
\newcolumntype{R}[1]{>{\raggedleft\let\newline\\\arraybackslash\hspace{0pt}}m{#1}}
\newcommand*{\affaddr}[1]{#1} 
\newcommand*{\affmark}[1][*]{\textsuperscript{#1}}
\global\long\def\bx{\mathbf{x}}
\global\long\def\bY{\mathbf{Y}}
\global\long\def\by{\mathbf{y}}
\global\long\def\bZ{\mathbf{Z}}
\global\long\def\bz{\mathbf{z}}
\global\long\def\bA{\mathbf{A}}
\global\long\def\bH{\mathbf{H}}
\global\long\def\bU{\mathbf{U}}
\global\long\def\bS{\mathbf{S}}
\global\long\def\bgamma{\boldsymbol{\gamma}}
\global\long\def\bbeta{\boldsymbol{\beta}}
\global\long\def\bpi{\boldsymbol{\pi}}
\title{Principal Regression for High Dimensional Covariance Matrices}
\author{%
    Yi Zhao\affmark[1], Brian S. Caffo\affmark[2], Xi Luo\affmark[3], for the Alzheimer's Disease Neuroimaging
Initiative\footnote{Data used in preparation of this article were obtained from the Alzheimer's Disease Neuroimaging Initiative (ADNI) database (\url{adni.loni.usc.edu}). As such, the investigators within the ADNI contributed to the design and implementation of ADNI and/or provided data but did not participate in analysis or writing of this report. A complete list of ADNI investigators can be found at: \url{http://adni.loni.usc.edu/wp-content/uploads/how_to_apply/ADNI_Acknowledgement_List.pdf}} \\
    \affaddr{\affmark[1]Department of Biostatistics, Indiana University School of Medicine} \\
    \affaddr{\affmark[2]Department of Biostatistics, Johns Hopkins Bloomberg School of Public Health} \\
    \affaddr{\affmark[3]Department of Biostatistics and Data Science,\\  The University of Texas 
Health Science Center at Houston} \\
}
\date{}
\providecommand{\keywords}[1]
{
  {\small	
  \textbf{Keywords:} #1}
}
\begin{document}

\maketitle

\thispagestyle{empty}

\begin{abstract}
This manuscript presents an approach to perform generalized linear regression with multiple high dimensional covariance matrices as the outcome. Model parameters are proposed to be estimated by maximizing a pseudo-likelihood. When the data are high dimensional, the normal likelihood function is ill-posed as the sample covariance matrix is rank-deficient. Thus, a well-conditioned linear shrinkage estimator of the covariance matrix is introduced. With multiple covariance matrices, the shrinkage coefficients are proposed to be common across matrices. Theoretical studies demonstrate that the proposed covariance matrix estimator is optimal achieving the uniformly minimum quadratic loss asymptotically among all linear combinations of the identity matrix and the sample covariance matrix. Under regularity conditions, the proposed estimator of the model parameters is consistent. The superior performance of the proposed approach over existing methods is illustrated through simulation studies. Implemented to a resting-state functional magnetic resonance imaging study acquired from the Alzheimer's Disease Neuroimaging Initiative, the proposed approach identified a brain network within which functional connectivity is significantly associated with Apolipoprotein E $\varepsilon$4, a strong genetic marker for Alzheimer's disease.
\end{abstract}
\keywords{Covariance matrix estimation; Generalized linear regression; Heteroscedasticity; Shrinkage estimator}



\clearpage
\setcounter{page}{1}

\section{Introduction}

In this manuscript, we study a regression problem with covariance matrices as the outcome under a high dimensional setting. Suppose $\by_{it}\in\mathbb{R}^{p}$ is a $p$-dimensional random vector, which is the $t$th acquisition from subject $i$, for $t=1,\dots,T_{i}$ and $i=1,\dots,n$, where $T_{i}$ is the number of observations of subject $i$ and $n$ is the number of subjects. Let $T_{\max}=\max_{i}T_{i}$. The high dimensionality refers to the scenario when $T_{\max}\ll p$. The data, $\by_{it}$, are assumed to follow a normal distribution with covariance matrix $\Sigma_{i}$. Here, without loss of generality, it is assumed that the distribution mean is zero as the study interest focuses on the covariance matrices. Let $\bx_{i}\in\mathbb{R}^{q}$ denote the $q$-dimensional covariates of interest acquired from subject $i$. For the covariance matrices, we assume the following regression model, which is considered in \citet{zhao2019covariate}. For $i=1,\dots,n$,
\begin{equation}\label{eq:model}
	\log(\bgamma^\top\Sigma_{i}\bgamma)=\bx_{i}^\top\bbeta,
\end{equation}
where $\bgamma\in\mathbb{R}^{p}$ is a linear projection, and $\bbeta\in\mathbb{R}^{q}$ is the model coefficient. In $\bx_{i}$, the first element is set to one to include the intercept term. The goal is to estimate $\bgamma$ and $\bbeta$ using the observed data $\left\{(\by_{i1},\dots,\by_{iT_{i}}),\bx_{i}\right\}_{i=1}^{n}$.

One application of such a regression problem is to analyze covariate associated variations in brain coactivation in a functional magnetic resonance imaging (fMRI) study, where covariance/correlation matrices of the fMRI signals are generally utilized to reveal the coactivation patterns. Characterizing these patterns with population/individual covariates is of great interest in neuroimaging studies~\citep{seiler2017multivariate,zhao2019covariate}. Another example is the study of financial stock market data. Considering a pool of stock returns, covariance matrices over a period of time capture the comovement or synchronicity of the stocks. Firm and market-level information, such as industry type, firm's cash flow, stock size, and book-to-market ratio, plays an essential role in determining the synchronicity. Quantifying such association is an important topic in financial theory~\citep{zou2017covariance}.

To estimate $\bgamma$ and $\bbeta$, \citet{zhao2019covariate} proposed a likelihood-based approach, that is to minimize the negative log-likelihood function in the projection space.
One sufficient condition to solve the likelihood-based criterion is that the sample covariance matrices are positive definite. Thus, the likelihood estimator is ill-posed when $T_{\max}<p$ as the sample covariance matrices are rank-deficient. Additionally, it has been shown that when $p$ increases, the sample covariance matrix performs poorly and can lead to invalid conclusions. For example, the largest eigenvalue of the sample covariance matrix is not a consistent estimator, and the eigenvectors can be nearly orthogonal to the truth~\citep{johnstone2009consistency}. To circumvent difficulties raised by the high dimensionality, one solution is to impose structural assumptions, such as bandable covariance matrices, sparse covariance matrices, spiked covariance matrices, covariances with a tensor product structure, and latent graphical models \cite[see a review of][and references therein]{cai2016structured}. Another class of high-dimensional covariance matrix estimator is the shrinkage estimator. \citet{daniels2001shrinkage} considered two shrinkage estimators of the covariance matrix, a correlation shrinkage and a rotation shrinkage, offering a compromise between completely unstructured and structured estimators to improve the robustness. 
\citet{ledoit2004well} introduced a well-conditioned estimator of the covariance matrix, which is an optimal linear combination of the identity matrix and the sample covariance matrix under squared error loss. This is equivalent to the optimal linear shrinkage of the eigenvalues while retaining the eigenvectors. Instead of a linear combination, \citet{ledoit2012nonlinear} extended this work to nonlinear transformations of the sample eigenvalues and presented a way of finding the transformation that is asymptotically equivalent to the oracle linear combination. Based on Tyler's robust $M$-estimator~\citep{tyler1987distribution} and the linear shrinkage estimator~\citep{ledoit2004well}, \citet{chen2011robust} and \citet{pascal2014generalized}, in parallel, introduced robust estimators  of covariance matrices for elliptical distributed samples.

To model multiple covariance matrices, procedures include regression-type approaches introduced by \citet{anderson1973asymptotically}, \citet{chiu1996matrix}, \citet{hoff2012covariance}, \citet{fox2015bayesian}, and \citet{zou2017covariance}; (common) principal component analysis related methods by \citet{flury1984common}, \citet{boik2002spectral}, \citet{hoff2009hierarchical}, and \citet{franks2019shared}; and methods based on other types of matrix decomposition, such as the Cholesky decomposition~\citep{pourahmadi2007simultaneous}. Among these, \citet{fox2015bayesian} introduced a scalable nonparametric covariance regression model applying low-rank approximation. \citet{franks2019shared} generalized a Bayesian hierarchical model studying the heterogeneity in the covariance matrices to high dimensional settings. Compared to the above-mentioned approaches, Model~\eqref{eq:model} offers higher flexibility in modeling the relationship with the covariates. For example, $x$ can be either continuous or categorical, and one can easily include interactions and/or polynomials of the covariates.

In the high dimensional setting considered in this study, $\bgamma$ and $\bbeta$, as well as $n$ covariance matrices will be estimated under Model~\eqref{eq:model}. Because of its computational efficiency and explicit formulations of the tuning parameters, the linear shrinkage approach will be generalized to multiple covariance estimates. Interestingly, it will be shown that estimating each covariance matrix separately, such as using the shrinkage estimator proposed in \citet{ledoit2004well}, leads to suboptimal estimation accuracy for $\bgamma$, $\bbeta$ and $\Sigma_{i}$'s.
Thus, a linear shrinkage estimator of the covariance matrix is proposed, of which the shrinkage coefficients are shared across matrices. With the shrinkage estimator, it is proposed to estimate $(\bgamma,\bbeta)$ through maximizing a pseudo-likelihood.

The framework proposed in this manuscript has three major contributions. 
\begin{enumerate}[(1)]
	\item This is probably the first attempt to analyze a large number of high-dimensional covariance matrices varying with covariates in a regression setting.
	\item The proposed shrinkage estimator of the covariance matrices is well-conditioned and has uniformly minimum quadratic risk asymptotically among all linear combinations.
	\item Under regularity conditions, the proposed approach achieves consistent estimators of the parameters.
\end{enumerate}

The rest of the paper is organized as the following. Section~\ref{sec:method} introduces the proposed shrinkage estimator of the covariance matrices and the pseudo-likelihood based method of estimating $\bgamma$ and $\bbeta$. Section~\ref{sec:asmp} studies the asymptotic properties. In Section~\ref{sec:sim}, the superior performance of the proposed approach over existing methods is demonstrated through simulation studies. Section~\ref{sec:real} articulates an application to a resting-state fMRI data set acquired from the Alzheimer's Disease Neuroimaging Initiative (ADNI). Section~\ref{sec:discussion} concludes this paper with discussions. Technical proofs are collected in the supplementary materials.


\section{Method}
\label{sec:method}

Considering the regression model~\eqref{eq:model}, it is proposed to estimate the parameters by solving the following optimization problem.
\begin{eqnarray}\label{eq:opt_Sstar}
	\underset{(\bbeta,\bgamma)}{\text{minimize}} && \ell(\bbeta,\bgamma)=\frac{1}{2}\sum_{i=1}^{n}T_{i}\left\{\bx_{i}^\top\bbeta+\bgamma^\top\hat{\Sigma}_{i}\bgamma\cdot\exp(-\bx_{i}^\top\bbeta)\right\}, \nonumber \\
	\text{such that} && \bgamma^\top\bH\bgamma=1,
\end{eqnarray}
where $\hat{\Sigma}_{i}$ is an estimator of the covariance matrix $\Sigma_{i}$ to be discussed later, which is positive definite, for $i=1,\dots,n$; and $\bH$ is a positive definite matrix in $\mathbb{R}^{p\times p}$, which is set to be the average of $\hat{\Sigma}_{i}$'s, that is $\bH=\sum_{i=1}^{n}T_{i}\hat{\Sigma}_{i}/\sum_{i=1}^{n}T_{i}$. It is essential to impose a constraint on $\bgamma$, otherwise the objective function of \eqref{eq:opt_Sstar} is minimized at $\bgamma=\boldsymbol{\mathrm{0}}$ with fixed $\bbeta$. When $\hat{\Sigma}_{i}=\bS_{i}=\sum_{t=1}^{T_{i}}\by_{it}\by_{it}^\top/T_{i}$ (i.e., the sample covariance matrix), which is the proposal in \citet{zhao2019covariate}, it is equivalent to minimize the negative log-likelihood function of $\{\bgamma^\top\by_{it}\}_{i,t}$ assuming the data are normally distributed. However, when $T_{\max}=\max_{i}T_{i}<p$, problem~\eqref{eq:opt_Sstar} is ill-posed as $\bS_{i}$'s are rank-deficient. Thus, the goal of this manuscript is to propose a well-conditioned estimator of $\Sigma_{i}$ that yields optimal properties. To achieve this, a covariate-dependent linear shrinkage estimator, denoted as $\Sigma_{i}^{*}$, is proposed, which yields the minimum expected squared loss under regression model~\eqref{eq:model}, where the expectation is taken over the sample covariance matrix $\bS_{i}$.
\begin{eqnarray}\label{eq:opt_cov}
	\underset{(\mu,\rho)}{\text{minimize}} && \frac{1}{n}\sum_{i=1}^{n}\mathbb{E}\left\{\bgamma^\top\Sigma_{i}^{*}\bgamma-\exp(\bx_{i}^\top\bbeta)\right\}^{2}, \nonumber \\
	\text{such that} && \Sigma_{i}^{*}=\rho\mu\boldsymbol{\mathrm{I}}+(1-\rho)\bS_{i}, \quad \text{for } i=1,\dots,n.
\end{eqnarray}
The following theorem gives the solution to~\eqref{eq:opt_cov}.
\begin{theorem}\label{thm:Sigma_solution}
	For given $(\bgamma,\bbeta)$, the solution to optimization problem~\eqref{eq:opt_cov} is
	\begin{equation}\label{eq:sol_SigmaS}
		\Sigma_{i}^{*}=\frac{\psi^{2}}{\delta^{2}}\mu\boldsymbol{\mathrm{I}}+\frac{\phi^{2}}{\delta^{2}}\bS_{i}, \quad \text{for } i=1,\dots,n,
	\end{equation}
	and the minimum value is
	\begin{equation}
		\frac{1}{n}\sum_{i=1}^{n}\mathbb{E}\left\{\bgamma^\top\Sigma_{i}^{*}\bgamma-\exp(\bx_{i}^\top\bbeta)\right\}^{2}=\frac{\phi^{2}\psi^{2}}{\delta^{2}},
	\end{equation}
	where
	\[
		\mu=\frac{1}{n(\bgamma^\top\bgamma)}\sum_{i=1}^{n}\exp(\bx_{i}^\top\bbeta), \quad \phi^{2}=\frac{1}{n}\sum_{i=1}^{n}\phi_{i}^{2}, \quad \psi^{2}=\frac{1}{n}\sum_{i=1}^{n}\psi_{i}^{2}, \quad \delta^{2}=\frac{1}{n}\sum_{i=1}^{n}\delta_{i}^{2},
	\]
	\[
		\phi_{i}^{2}=\left\{\mu(\bgamma^\top\bgamma)-\exp(\bx_{i}^\top\bbeta)\right\}^{2}, \quad \psi_{i}^{2}=\mathbb{E}\left\{\bgamma^\top\bS_{i}\bgamma-\exp(\bx_{i}^\top\bbeta)\right\}^{2}, \quad \delta_{i}^{2}=\mathbb{E}\left\{\bgamma^\top\bS_{i}\bgamma-\mu(\bgamma^\top\bgamma)\right\}^{2};
	\]
	and Lemma~\ref{lemma:par_relation} shows that $\psi^{2}/\delta^{2}+\phi^{2}/\delta^{2}=1$.
\end{theorem}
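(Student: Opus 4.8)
The statement is a finite-dimensional quadratic optimization, so the plan is to rewrite the objective of~\eqref{eq:opt_cov} as a convex quadratic in two scalars, solve the normal equations, and read off both the minimizer and the minimum value. Fix $(\bgamma,\bbeta)$ and abbreviate $a_i=\exp(\bx_i^\top\bbeta)$ and $g_i=\bgamma^\top\bS_i\bgamma$. The one structural input I would use is that, because $\mathbb{E}\bS_i=\Sigma_i$ and model~\eqref{eq:model} holds, $\mathbb{E}(g_i)=\bgamma^\top\Sigma_i\bgamma=a_i$; it is precisely this identity that forces the cross terms to collapse into $\psi_i^{2}$, $\delta_i^{2}$ and $\phi_i^{2}$. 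Reparametrizing by $u=\rho\mu$ and $v=1-\rho$, one has $\bgamma^\top\Sigma_i^{*}\bgamma=u\,(\bgamma^\top\bgamma)+v\,g_i$, so the criterion becomes
\[
Q(u,v)=\frac1n\sum_{i=1}^{n}\mathbb{E}\bigl\{u(\bgamma^\top\bgamma)+v g_i-a_i\bigr\}^{2},
\]
a convex quadratic whose Hessian is a positive semidefinite Gram matrix, positive definite exactly when $\delta^{2}>0$; hence a stationary point is a global minimizer, and it is the unique one when $\delta^{2}>0$.

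Next I would set $\partial_u Q=\partial_v Q=0$. Using $\mathbb{E}(g_i)=a_i$, the first equation reduces to $u(\bgamma^\top\bgamma)=(1-v)\,\bar a$ with $\bar a=n^{-1}\sum_i a_i$; in the original parameters this says $\mu(\bgamma^\top\bgamma)=\bar a$, i.e.\ $\mu=\{n(\bgamma^\top\bgamma)\}^{-1}\sum_i a_i$, the value claimed in the theorem. Substituting this back and using $\mathbb{E}(g_i a_i)=a_i^{2}$, the second equation becomes $v\,\delta^{2}=\phi^{2}$ once one records the elementary identities (valid now that $\mu(\bgamma^\top\bgamma)=\bar a$)
\[
\psi_i^{2}=\mathbb{E}(g_i^{2})-a_i^{2},\qquad \delta_i^{2}=\mathbb{E}(g_i^{2})-2\bar a\,a_i+\bar a^{2},\qquad \phi_i^{2}=(\bar a-a_i)^{2}.
\]
Averaging these over $i$ gives $\psi^{2}=\overline{\mathbb{E}(g^{2})}-\overline{a^{2}}$, $\phi^{2}=\overline{a^{2}}-\bar a^{2}$ and $\delta^{2}=\overline{\mathbb{E}(g^{2})}-\bar a^{2}$, whence $\psi^{2}+\phi^{2}=\delta^{2}$, which is the relation attributed to Lemma~\ref{lemma:par_relation}. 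Therefore $v=\phi^{2}/\delta^{2}$, $\rho=1-v=\psi^{2}/\delta^{2}$, and $u=\rho\mu=(\psi^{2}/\delta^{2})\mu$, i.e.\ $\Sigma_i^{*}=(\psi^{2}/\delta^{2})\mu\bI+(\phi^{2}/\delta^{2})\bS_i$, which is~\eqref{eq:sol_SigmaS}.

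For the minimum value, plug the optimum into $Q$. At $(\mu,\rho)$ as above the residual equals $u(\bgamma^\top\bgamma)+vg_i-a_i=(g_i-a_i)-\rho(g_i-\bar a)$, and the convenient cross-term identity $\mathbb{E}\{(g_i-a_i)(g_i-\bar a)\}=\mathbb{E}(g_i^{2})-a_i^{2}=\psi_i^{2}$ gives $Q=\psi^{2}-2\rho\psi^{2}+\rho^{2}\delta^{2}$; substituting $\rho=\psi^{2}/\delta^{2}$ and $\delta^{2}-\psi^{2}=\phi^{2}$ yields $Q_{\min}=\psi^{2}-\psi^{4}/\delta^{2}=\phi^{2}\psi^{2}/\delta^{2}$, as stated.

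The computation is routine once the model identity $\mathbb{E}(g_i)=a_i$ is invoked; the only points needing attention are (i) the reparametrization $(\mu,\rho)\mapsto(u,v)$ fails to be onto only at $\rho=0$, which corresponds to the degenerate case $\psi^{2}=0$ where $\bS_i$ is already optimal and the formula still holds, and one should note $\delta^{2}>0$ for uniqueness; and (ii) the sign and averaging bookkeeping that reduces the normal equations and the optimal value to $\psi_i^{2},\delta_i^{2},\phi_i^{2}$. I expect (ii) to be the only real source of slips; there is no conceptual obstacle.
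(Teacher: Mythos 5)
Your proof is correct and arrives at the same formulas by what is at bottom the same method as the paper: reduce~\eqref{eq:opt_cov} to the first-order conditions of a quadratic criterion, with the model identity $\mathbb{E}(\bgamma^\top\bS_{i}\bgamma)=\bgamma^\top\Sigma_{i}\bgamma=\exp(\bx_{i}^\top\bbeta)$ doing all the structural work. The one genuine difference is organizational, and it is in your favor. The paper stays in the $(\mu,\rho)$ parametrization, expands the objective so the cross term vanishes to get $f(\mu,\rho)=n^{-1}\sum_{i}[\rho^{2}\phi_{i}^{2}+(1-\rho)^{2}\psi_{i}^{2}]$, and then differentiates in $\mu$ and $\rho$, justifying optimality by asserting convexity; strictly speaking $f$ is a quartic in $(\mu,\rho)$ jointly (through the term $\rho^{2}\{\mu(\bgamma^\top\bgamma)-\exp(\bx_{i}^\top\bbeta)\}^{2}$) and is \emph{not} jointly convex, so the paper's justification is only coordinate-wise and needs the extra observation that the optimal $\mu$ does not depend on $\rho$. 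Your reparametrization $u=\rho\mu$, $v=1-\rho$ turns the problem into a genuine linear least-squares with positive semidefinite Hessian, so the stationary point is provably the global minimum (unique when $\delta^{2}>0$), and you correctly flag that the change of variables is non-invertible only at $\rho=0$, where the formula degenerates harmlessly. Your bookkeeping — the identities $\psi_{i}^{2}=\mathbb{E}(g_{i}^{2})-a_{i}^{2}$, $\phi_{i}^{2}=(\bar a-a_{i})^{2}$, $\delta_{i}^{2}=\mathbb{E}(g_{i}^{2})-2\bar a a_{i}+\bar a^{2}$, the relation $\delta^{2}=\phi^{2}+\psi^{2}$ (which also gives Lemma~\ref{lemma:par_relation}), and the evaluation $Q_{\min}=\psi^{2}-\psi^{4}/\delta^{2}=\phi^{2}\psi^{2}/\delta^{2}$ — all checks out and matches the paper's conclusions.
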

\begin{lemma}\label{lemma:par_relation}
	For $\forall~i\in\{1,\dots,n\}$, $\delta_{i}^{2}=\phi_{i}^{2}+\psi_{i}^{2}$, and thus $\delta^{2}=\phi^{2}+\psi^{2}$.
\end{lemma}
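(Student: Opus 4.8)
The plan is a one-line bias--variance decomposition applied separately to each index $i$, with the averaged statement following immediately. Fix $i$ and abbreviate $W_i=\bgamma^\top\bS_i\bgamma$ (a scalar random variable), $m_i=\exp(\bx_i^\top\bbeta)$ (its model-implied mean), and $c=\mu(\bgamma^\top\bgamma)$ (the deterministic scalar appearing in the shrinkage target). In this notation $\psi_i^{2}=\mathbb{E}(W_i-m_i)^{2}$, $\phi_i^{2}=(c-m_i)^{2}$, and $\delta_i^{2}=\mathbb{E}(W_i-c)^{2}$, where the expectation is over $\bS_i$.

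The key fact I would establish first is that $\mathbb{E}(W_i)=m_i$. Since $\by_{it}$ has mean zero and covariance $\Sigma_i$, the sample covariance $\bS_i=T_i^{-1}\sum_{t=1}^{T_i}\by_{it}\by_{it}^\top$ satisfies $\mathbb{E}(\bS_i)=\Sigma_i$, hence $\mathbb{E}(W_i)=\bgamma^\top\Sigma_i\bgamma$; and regression model~\eqref{eq:model} gives $\bgamma^\top\Sigma_i\bgamma=\exp(\bx_i^\top\bbeta)=m_i$. With this in hand I would expand $\delta_i^{2}=\mathbb{E}\{(W_i-m_i)+(m_i-c)\}^{2}=\mathbb{E}(W_i-m_i)^{2}+2(m_i-c)\,\mathbb{E}(W_i-m_i)+(m_i-c)^{2}$; the cross term vanishes by unbiasedness, so $\delta_i^{2}=\psi_i^{2}+\phi_i^{2}$. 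Averaging over $i=1,\dots,n$ and using $\phi^{2}=n^{-1}\sum_i\phi_i^{2}$, $\psi^{2}=n^{-1}\sum_i\psi_i^{2}$, $\delta^{2}=n^{-1}\sum_i\delta_i^{2}$ yields $\delta^{2}=\phi^{2}+\psi^{2}$.

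There is essentially no obstacle; the only point requiring care is that the decomposition must be centered at the mean $m_i$ of $W_i$ rather than at the shrinkage-target constant $c$, so that the cross term is annihilated by unbiasedness of $\bS_i$ — which is precisely where Model~\eqref{eq:model} (through $\bgamma^\top\Sigma_i\bgamma=\exp(\bx_i^\top\bbeta)$) enters the argument.
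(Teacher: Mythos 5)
Your proof is correct and is exactly the argument the paper uses (implicitly) inside the proof of Theorem~\ref{thm:Sigma_solution}: establish $\mathbb{E}(\bgamma^\top\bS_i\bgamma)=\bgamma^\top\Sigma_i\bgamma=\exp(\bx_i^\top\bbeta)$ and then invoke the bias--variance decomposition centered at that mean so the cross term vanishes. You simply write out the expansion that the paper states without detail, so there is nothing to add.
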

\noindent According to Theorem~\ref{thm:Sigma_solution}, parameters $\phi_{i}^{2}$, $\psi_{i}^{2}$ and $\delta_{i}^{2}$ are expected values as the objective is to minimize the expected squared loss. Thus, one cannot replace $\hat{\Sigma}_{i}$ with $\Sigma_{i}^{*}$ in \eqref{eq:opt_Sstar} and solve for solution using the data. For implementation in practice, the following sample counterparts are used to compute~\eqref{eq:sol_SigmaS} and thus $\hat{\Sigma}_{i}$ in~\eqref{eq:opt_Sstar}. Let
	\[
		\hat{\delta}_{i}^{2}=\left\{\bgamma^\top\bS_{i}\bgamma-\mu(\bgamma^\top\bgamma)\right\}^{2}, \quad \hat{\psi}_{i}^{2}=\frac{1}{T_{i}}\left\{\bgamma^\top\bS_{i}\bgamma-\exp(\bx_{i}^\top\bbeta)\right\}^{2}, \quad \hat{\phi}_{i}^{2}=\hat{\delta}_{i}^{2}-\hat{\psi}_{i}^{2},
	\]
	\[
		\hat{\delta}^{2}=\frac{1}{n}\sum_{i=1}^{n}\hat{\delta}_{i}^{2}, \quad \hat{\psi}^{2}=\frac{1}{n}\sum_{i=1}^{n}\min(\hat{\psi}_{i}^{2},\hat{\delta}_{i}^{2}), \quad \hat{\phi}^{2}=\frac{1}{n}\sum_{i=1}^{n}\hat{\phi}_{i}^{2},
	\]
and
	\begin{equation}\label{eq:S_gbshrink}
		\bS_{i}^{*}=\frac{\hat{\psi}^{2}}{\hat{\delta}^{2}}\mu\boldsymbol{\mathrm{I}}+\frac{\hat{\phi}^{2}}{\hat{\delta}^{2}}\bS_{i}, \quad \text{for }i=1,\dots,n.
	\end{equation}
In Section~\ref{sec:asmp}, we show that $\bS_{i}^{*}$ is a consistent estimator of $\Sigma_{i}^{*}$ and is uniformly optimal asymptotically among all the linear combinations of the sample covariance matrices and the identity matrix regarding the quadratic risk. The objective function $\ell(\bbeta,\bgamma)$ is an approximation of the negative log-likelihood function if replacing $\hat{\Sigma}_{i}$ with the proposed shrinkage estimator $\bS_{i}^{*}$. Thus, optimizing~\eqref{eq:opt_Sstar} can be considered as a pseudo-likelihood approach under the normality assumption.

The proof of Theorem~\ref{thm:Sigma_solution} and Lemma~\ref{lemma:par_relation} is presented in Section~\ref{appendix:sub:proof_SigmaSolution} of the supplementary materials. Formulation~\eqref{eq:opt_cov} introduces a shrinkage estimator of the covariance matrix, where the shrinkage is shared across subjects and is optimal under the squared error loss. For each subject, $\Sigma_{i}^{*}$ is a linear combination of the sample covariance matrix $\bS_{i}$ and the identity matrix. The weighting parameters, $\rho$ and $\mu$, are population level parameters that are shared across subjects. This is equivalent to imposing a linear shrinkage on the sample eigenvalues. Assuming $\bgamma$ is a common eigenvector of all the covariance matrices, $\mu$ is the average eigenvalue corresponding to $\bgamma$. The level of shrinkage is determined by the leverage between the accuracy of $\bS_{i}$'s and the variation in the eigenvalues. If $\bS_{i}$'s are accurate or the errors are small relative to the variation in the eigenvalues, less shrinkage will be imposed; otherwise, if $\bS_{i}$'s are inaccurate and the errors are comparable or even higher than the eigenvalue variability, the sample covariance matrices will be shrank more. 

Algorithm~\ref{alg:covreg} summarizes the optimization procedure. As problem~\eqref{eq:opt_Sstar} is nonconvex, a series of random initializations of $(\bgamma,\bbeta)$ is considered and the one that achieves the minimum value of the objective function is the estimate. The initial values of $\bgamma$ can be set as the eigenvectors of the average sample covariance matrices, $\bar{\bS}=\sum_{i=1}^{n}T_{i}\bS_{i}/\sum_{i=1}^{n}T_{i}$; and the initial values of $\bbeta$ is the corresponding solution to~\eqref{eq:opt_Sstar} by replacing $\hat{\Sigma}_{i}$ with a well-conditioned estimator, such as the estimator proposed in \citet{ledoit2004well}. When $p<\sum_{i=1}^{n}T_{i}$, $\bar{\bS}$ is of full rank, and the sample eigenvectors are consistent estimators assuming all the covariance matrices have the same eigendecomposition. Step 3 in the algorithm updates the covariance matrix estimators with a global shrinkage parameter. In Section~\ref{sec:sim}, through simulation studies, we show that it improves the performance in estimating the covariance matrices and $\bbeta$ with lower bias and higher stability. The details of updating $\bgamma$ and $\bbeta$ in Step 4 can be found in Algorithm 1 in \citet{zhao2019covariate}. 

For higher-order components, one can first remove the identified components and use the new data to estimate the next with an additional orthogonality constraint, that is, the new component is orthogonal to the identified ones. Different from Algorithm 2 in \citet{zhao2019covariate}, there is no need to include a rank-completion step as $\bS_{i}^{*}$ is introduced to render the rank-deficiency issue. To determine the number of components, the metric of average deviation from diagonality proposed in \citet{zhao2019covariate} is adopted. Let $\Gamma^{(k)}\in\mathbb{R}^{p\times k}$ denote the first $k$ estimated components, the average deviation from diagonality is defined as
\begin{equation}
	\mathrm{DfD}(\Gamma^{(k)})=\prod_{i=1}^{n}\left(\frac{\det\{\mathrm{diag}(\Gamma^{(k)\top}\bS_{i}^{*}\Gamma^{(k)})\}}{\det(\Gamma^{(k)\top}\bS_{i}^{*}\Gamma^{(k)})}\right)^{T_{i}/\sum_{i}T_{i}},
\end{equation}
where $\mathrm{diag}(\bA)$ is a diagonal matrix of the diagonal elements in a square matrix $\bA$, and $\det(\bA)$ is the determinant of $\bA$. If $\Gamma^{(k)}$ is a common diagonalization of $\bS_{i}^{*}$'s, that is, $\Gamma^{(k)\top}\bS_{i}^{*}\Gamma^{(k)}$ is a diagonal matrix, for $\forall~i=1,\dots,n$, then $\mathrm{DfD}(\Gamma^{(k)})=1$. In practice, $k$ can be chosen before $\mathrm{DfD}$ increases far away from one or before a sudden jump occurs.
\begin{algorithm}
	\caption{\label{alg:covreg}The optimization algorithm for problems~\eqref{eq:opt_Sstar} and~\eqref{eq:opt_cov}.}
	\begin{algorithmic}[1]
		\INPUT $\{(\by_{i1},\dots,\by_{iT_{i}}),\bx_{i}\}_{i=1}^{n}$

		\State \textbf{initialization}: $(\bgamma^{(0)},\bbeta^{(0)})$

		\Repeat \; for iteration $s=0,1,2,\dots$

		\State  \; for $i=1,\dots,n$, update
			\[
				\bS_{i}^{*(s+1)}=\frac{\hat{\psi}^{2(s)}}{\hat{\delta}^{2(s)}}\mu^{(s)}\boldsymbol{\mathrm{I}}+\frac{\hat{\phi}^{2(s)}}{\hat{\delta}^{2(s)}}\bS_{i},
			\]
			\; \; \; \; where $(\hat{\psi}^{2},\hat{\phi}^{2},\hat{\delta}^{2},\mu)$ are set to the value with $\bgamma=\bgamma^{(s)}$ and $\bbeta=\bbeta^{(s)}$,

		\State \; update $\bgamma$ and $\bbeta$ by solving~\eqref{eq:opt_Sstar} with $\hat{\Sigma}_{i}=\bS_{i}^{*(s+1)}$, denoted as $\bgamma^{(s+1)}$ and $\bbeta^{(s+1)}$, respectively,

		\Until{the objective function in~\eqref{eq:opt_Sstar} converges;}

		\State consider a random series of initializations, repeat Steps 1--5, and choose the results with the minimum objective value.

		\OUTPUT $(\hat{\bgamma},\hat{\bbeta})$
	\end{algorithmic}
\end{algorithm}




\section{Asymptotic Properties}
\label{sec:asmp}

In this section, we study the asymptotic properties of the proposed estimators. For $i=1,\dots,n$, it is assumed that $\Sigma_{i}$ has the eigendecomposition of $\Sigma_{i}=\Pi_{i}\Lambda_{i}\Pi_{i}^\top$, where $\Lambda_{i}=\mathrm{diag}\{\lambda_{i1},\dots,\lambda_{ip}\}$ is a diagonal matrix and $\Pi_{i}=(\bpi_{i1},\dots,\bpi_{ip})$ is an orthonormal rotation matrix; $\{\lambda_{i1},\dots,\lambda_{ip}\}$ are the eigenvalues and the columns of $\Pi_{i}$ are the corresponding eigenvectors. Let $\bZ_{i}=\bY_{i}\Pi_{i}$, where $\bY_{i}=(\by_{i1},\dots,\by_{iT_{i}})^\top\in\mathbb{R}^{T_{i}\times p}$ is the data matrix of subject $i$. Under the normality assumption, the columns of $\bZ_{i}=(z_{itj})_{t,j}$ are uncorrelated, and the rows, $\bz_{it}=(z_{i1},\dots,z_{ip})\in\mathbb{R}^{p}$ for $t=1,\dots,T_{i}$, are normally distributed with mean zero and covariance matrix $\Lambda_{i}$. The following assumptions are imposed.
\begin{description}
	\item[Assumption A1] There exists a constant $C_{1}$ independent of $T_{\max}$ such that $p/T_{\max}\leq C_{1}$, where $T_{\max}=\max_{i}T_{i}$.
	\item[Assumption A2] Let $N=\sum_{i=1}^{n}T_{i}$, $p/N\rightarrow0$ as $n,T_{\min}\rightarrow\infty$, where $T_{\min}=\min_{i}T_{i}$.
	\item[Assumption A3] There exists a constant $C_{2}$ independent of $T_{\min}$ and $T_{\max}$ such that $\sum_{j=1}^{p}\mathbb{E}(z_{i1j}^{8})/p\leq C_{2}$, for $\forall ~i\in\{1,\dots,n\}$.
	\item[Assumption A4] Let $\mathcal{Q}$ denote the set of all the quadruples that are made of four distinct integers between $1$ and $p$, for $\forall~i\in\{1,\dots,n\}$,
		\begin{equation}
			\lim_{T_{i}\rightarrow\infty}\frac{p^{2}}{T_{i}^{2}}\frac{\sum_{(j,k,l,m)\in\mathcal{Q}}\left\{\mathrm{Cov}(z_{i1j}z_{i1k},z_{i1l}z_{i1m})\right\}^{2}}{|\mathcal{Q}|}=0,
		\end{equation}
	where $|\mathcal{Q}|$ is the cardinality of set $\mathcal{Q}$.
	\item[Assumption A5] All the covariance matrices share the same set of eigenvectors, i.e., $\Pi_{i}=\Pi$, for $i=1,\dots,n$. For each $\Sigma_{i}$, there exists (at least) a column, indexed by $j_{i}$, such that $\bgamma=\bpi_{ij_{i}}$ and Model~\eqref{eq:model} is satisfied.
\end{description}
Assumption A1 allows the data dimension, $p$, to be greater than the (maximum) number of observations, $T_{\max}$, and to grow at the same rate as $T_{\max}$ does. This is a common regularity condition for shrinkage estimators~\citep{ledoit2004well}. Assumption A2 guarantees that the average sample covariance matrix $\bar{\bS}=\sum_{i=1}^{n}T_{i}\bS_{i}/N$ utilized in the initial step of Algorithm~\ref{alg:covreg} is positive definite. Together with Assumption A5, the eigenvectors of $\bar{\bS}$ are consistent estimators of $\Pi$~\citep{anderson1963asymptotic}. Assumptions A3 and A4 regulate $\bz_{it}$ on higher-order moments, which is equivalent to imposing restrictions on the higher-order moments of $\by_{it}$. When the data are assumed to be normally distributed, both A3 and A4 are satisfied. Assumption A5 assumes that all the covariance matrices share the same eigenspace, though the ordering of the eigenvectors may differ. When $p/T_{\min}\rightarrow0$, \citet{zhao2019covariate} relaxed this assumption to partial common diagonalization and demonstrated the method robustness through numerical examples. Studying the asymptotic properties under the relaxation is difficult and not available in existing literature, especially when $p> T_{\max}$. 

Taking the eigenvectors of $\bar{\bS}$ as the initial values of $\bgamma$, the following proposition demonstrates the consistency of the proposed estimator.
\begin{proposition}\label{prop:asymp_alg1}
	Under Assumptions A1--A5, the estimator of $\bgamma$ and $\bbeta$ obtained by Algorithm~\ref{alg:covreg} are asymptotically consistent.
\end{proposition}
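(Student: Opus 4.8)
The plan is a three-layer argument: show that the initializer is consistent, then that one pass through Steps~3--4 of Algorithm~\ref{alg:covreg} sends a consistent iterate to a consistent iterate, and finally that this propagates to the loop's output and through the random restarts. Throughout, write $(\bgamma_0,\bbeta_0)$ for the true parameters of the component being estimated: $\bgamma_0$ is the relevant common eigenvector $\bpi_j$, normalized by the population analogue $\bgamma_0^\top\bar\Sigma\,\bgamma_0=1$ of the constraint with $\bar\Sigma=\lim N^{-1}\sum_iT_i\Sigma_i$, and $\log(\bgamma_0^\top\Sigma_i\bgamma_0)=\bx_i^\top\bbeta_0$ for all $i$; consistency will mean $\|\hat\bgamma-\bgamma_0\|_2\xrightarrow{p}0$ and $\hat\bbeta\xrightarrow{p}\bbeta_0$ as $n,T_{\min}\to\infty$ with $p$ obeying A1--A2. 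For the initializer I would use that $p/N\to0$ (A2) makes $\bar\bS$ positive definite with probability tending to one and $\bar\bS\to\bar\Sigma$ in operator norm, while A5 forces $\bar\Sigma$ to have eigenbasis $\Pi$; the eigenvector-perturbation result already invoked in the paper (\citealp{anderson1963asymptotic}) then gives $\bgamma^{(0)}\xrightarrow{p}\bgamma_0$ after rescaling to the sample constraint surface, and the companion $\bbeta^{(0)}$ --- the solution of \eqref{eq:opt_Sstar} with a well-conditioned plug-in such as \citet{ledoit2004well} --- is consistent given $\bgamma^{(0)}$, by the generalized-linear-model argument of \citet{zhao2019covariate}.

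For the inductive step, suppose $(\bgamma^{(s)},\bbeta^{(s)})\xrightarrow{p}(\bgamma_0,\bbeta_0)$. The key point is that the shrinkage weight vanishes asymptotically: because $\hat\psi_i^{2}=T_i^{-1}\{\bgamma^\top\bS_i\bgamma-\exp(\bx_i^\top\bbeta)\}^2$ carries an extra factor $T_i^{-1}$ and $\bgamma^\top\bS_i\bgamma=O_p(1)$ uniformly on compact $\bgamma$-sets (using $\lambda_{\max}(\bS_i)=O_p(1)$ under A1), one has $\hat\psi^2\xrightarrow{p}0$, while $\hat\delta^2\xrightarrow{p}n^{-1}\sum_i\{\exp(\bx_i^\top\bbeta_0)-n^{-1}\sum_j\exp(\bx_j^\top\bbeta_0)\}^2$, which is positive in the non-degenerate case; hence $\hat\psi^2/\hat\delta^2\xrightarrow{p}0$ and $\hat\phi^2/\hat\delta^2\xrightarrow{p}1$. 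Consequently $\bgamma^\top\bS_i^{*(s+1)}\bgamma=\bgamma^\top\bS_i\bgamma+o_p(1)\xrightarrow{p}\bgamma^\top\Sigma_i\bgamma$ uniformly near $\bgamma_0$ (law of large numbers within subjects, fourth moments from A3--A4), and --- crucially --- the covariate-weighted aggregate $N^{-1}\sum_iT_ie^{-\bx_i^\top\bbeta}\bS_i^{*(s+1)}$ converges in operator norm to $N^{-1}\sum_iT_ie^{-\bx_i^\top\bbeta}\Sigma_i$, uniformly over compact $\bbeta$-sets: this is the same $p/N\to0$ phenomenon that gives $\bar\bS\to\bar\Sigma$, and it holds despite each $\bS_i$ being rank-deficient; it also keeps $\lambda_{\min}(\bH)$ bounded away from $0$, so the sample constraint surface does not degenerate. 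I would then show that $N^{-1}\ell(\bbeta,\bgamma)$ with $\hat\Sigma_i=\bS_i^{*(s+1)}$ converges uniformly over the compact constraint set $\times$ compact $\bbeta$-set to the population criterion $L(\bbeta,\bgamma)=\tfrac12\lim N^{-1}\sum_iT_i\{\bx_i^\top\bbeta+\bgamma^\top\Sigma_i\bgamma\,e^{-\bx_i^\top\bbeta}\}$ --- which is the population objective of \citet{zhao2019covariate} and is, under A5 and the identifiability conditions used there, locally uniquely minimized at $(\bgamma_0,\bbeta_0)$. Standard extremum-estimator reasoning applied to the joint minimization then yields $(\bgamma^{(s+1)},\bbeta^{(s+1)})\xrightarrow{p}(\bgamma_0,\bbeta_0)$; concretely the $\bgamma$-part is the smallest-generalized-eigenvalue eigenvector of $\bH^{-1}\{\sum_iT_ie^{-\bx_i^\top\bbeta^{(s)}}\bS_i^{*(s+1)}\}$, consistent by the operator-norm limit and the local optimality of $\bgamma_0$, and the $\bbeta$-part solves a strictly convex criterion whose inputs $\{\bgamma^\top\bS_i^{*(s+1)}\bgamma\}_i$ converge to $\{\exp(\bx_i^\top\bbeta_0)\}_i$.

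To reach the output of the algorithm, combine the two steps by induction: $(\bgamma^{(s)},\bbeta^{(s)})\xrightarrow{p}(\bgamma_0,\bbeta_0)$ for every fixed $s$, and since the $T_i^{-1}$ factor forces $\hat\psi^2\to0$ regardless of the (bounded) iterate, the one-step update converges locally uniformly to the deterministic map that returns the minimizer of $L$ --- a map that is constant near the component and equal to $(\bgamma_0,\bbeta_0)$, so its unique nearby fixed point is $(\bgamma_0,\bbeta_0)$ and the loop's limit $(\hat\bgamma,\hat\bbeta)$ --- which, started from the consistent $\bgamma^{(0)}$, stays in this basin --- is consistent. Because Step~6 runs finitely many initializations and keeps the smallest-objective one, and each converges to the true pair of some component with limiting objective values that separate the components, the returned estimator is consistent for the selected component's $(\bgamma_0,\bbeta_0)$. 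The degenerate case in which the eigenvalues along the component do not vary with $\bx_i$ (so $\hat\delta^2\to0$ and the regression criterion no longer pins down $\bgamma_0$) would be handled separately: there $\bgamma_0$ is identified directly by the common eigenstructure, so its consistency follows from the initialization step alone, while $\bbeta_0$ collapses to its intercept.

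The hardest part, I expect, is the uniform operator-norm concentration of the covariate-weighted aggregates $N^{-1}\sum_iT_ie^{-\bx_i^\top\bbeta}\bS_i$ of the \emph{rank-deficient} sample covariance matrices when $p\to\infty$ --- the step that makes the $\bgamma$-update well-posed and consistent despite $p>T_{\max}$ --- which is a Ledoit--Wolf/Anderson-type estimate leaning on A1--A4 and $p/N\to0$, to be carried out uniformly over the random, growing-dimensional parameter sets (and it is here that a uniform-in-$p$ eigengap/curvature condition, implicit in A5, is really used). Two further technical nuisances are controlling the shrinkage weights at the random iterate --- showing $\hat\psi^2/\hat\delta^2\to0$ uniformly near the truth, and the borderline behavior when $\hat\delta^2\downarrow0$ --- and taming the non-convexity, i.e.\ guaranteeing that the local minimum found by the iterations and by the random restarts is the one attached to the targeted component, which is exactly what the consistency of the initializer secures.
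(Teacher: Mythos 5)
Your proposal is correct in outline and shares the paper's skeleton --- consistency of the initializer $\bgamma^{(0)}$ as an eigenvector of $\bar{\bS}$ under A2 and A5 via \citet{anderson1963asymptotic}, followed by propagation of consistency through the update --- but it is far more detailed than the paper's own proof, which is a two-sentence delegation: the authors simply note the initializer is consistent and then say ``replace $\bgamma$ with its estimate in Theorems~\ref{thm:Si_consistent}--\ref{thm:optimal2} and Theorem~\ref{thm:beta_consist}; the consistency of $\bbeta$ follows,'' with Theorem~\ref{thm:beta_consist} itself resting on Lemma~\ref{lemma:beta_equiv} and a citation to pseudo-likelihood consistency \citep{gong1981pseudo}. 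Where you differ is in mechanism: you replace the pseudo-likelihood citation with an explicit extremum-estimator argument (uniform convergence of $N^{-1}\ell$ to a population criterion minimized at the truth), you make the per-iteration propagation an explicit induction with a fixed-point argument, and you observe directly that $\hat{\psi}_i^{2}$ carries a factor $T_i^{-1}$ so the shrinkage weight $\hat{\psi}^{2}/\hat{\delta}^{2}$ vanishes and $\bS_i^{*}$ asymptotically agrees with $\bS_i$ in the relevant quadratic form --- an observation consistent with Lemma~\ref{lemma:shpar_consist} but not spelled out in the paper. You also flag issues the paper's proof is silent on: the uniform (over $\bgamma$ and $\bbeta$) operator-norm concentration of the covariate-weighted aggregates needed to make the $\bgamma$-update well-posed when $p>T_{\max}$, the degenerate case $\hat{\delta}^{2}\downarrow 0$, and the selection among random restarts. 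These are genuine gaps in the published argument rather than errors in yours; your plan buys a self-contained and more rigorous proof at the cost of the hard uniform-concentration step you correctly identify as the crux, while the paper's version buys brevity by leaning on its supporting theorems (all stated for \emph{given} $(\bgamma,\bbeta)$) and an external consistency result, leaving the iteration and uniformity issues implicit.
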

\noindent To proof Proposition~\ref{prop:asymp_alg1}, we first study the asymptotic properties of $\bS_{i}^{*}$ and show that $\bS_{i}^{*}$ is the optimal linear shrinkage estimator of the covariance matrix under the squared loss. This is accomplished under the assumption that $\bgamma$ is given. As the initialization of $\bgamma$ is already a consistent estimator, the consistency of the solution after iteration follows. For $\bbeta$, it is firstly shown that the association between the shrinkage estimator, $\Sigma_{i}^{*}$, and the covariates is the same as the covariance matrix, $\Sigma_{i}$, does (Lemma~\ref{lemma:beta_equiv}). Thus, it is equivalent to optimize problems~\eqref{eq:opt_Sstar} and \eqref{eq:opt_cov} to solve for $\bbeta$, and the solution is a consistent estimator of $\bbeta$ based on the pseudo-likelihood theory~\citep{gong1981pseudo}. In the iteration step of Algorithm~\ref{alg:covreg}, $\bS_{i}^{*}$ improves the estimation of the covariance matrices with lower squared loss, and in consequence, improves the estimation of $\bgamma$ and $\bbeta$. In Section~\ref{sec:sim}, the improvement is demonstrated through simulation studies.


In Section~\ref{sec:method}, the optimization problem~\eqref{eq:opt_cov} introduces a linear combination of the sample covariance matrix and the identity matrix, $\Sigma_{i}^{*}$, that achieves the minimum expected squared error. From Theorem~\ref{thm:Sigma_solution}, the solution has population-level parameters. Thus, the sample counterpart, $\bS_{i}^{*}$, is introduced. The following Lemma~\ref{lemma:shpar_bound} first shows that asymptotically, the weighting parameters in $\Sigma_{i}^{*}$ are well-behaved. Lemma~\ref{lemma:shpar_consist} demonstrates that the corresponding sample counterpart of the weighting parameters are consistent estimators. Theorem~\ref{thm:Si_consistent} demonstrates that $\bS_{i}^{*}$ performs as well as $\Sigma_{i}^{*}$ does asymptotically.
\begin{lemma}\label{lemma:shpar_bound}
	For given $(\bgamma,\bbeta)$, let $T_{\min}=\min_{i}T_{i}$, as $T_{\min}\rightarrow\infty$, $\mu$, $\phi^{2}$, $\psi^{2}$ and $\delta^{2}$ are bounded.
\end{lemma}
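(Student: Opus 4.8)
The plan is to treat the four quantities separately. The key observation is that $\mu$, the $\phi_i^2$, and hence $\phi^2$, are deterministic functions of the fixed pair $(\bgamma,\bbeta)$ and the design, so all the probabilistic content sits in $\psi_i^2$ (and, through Lemma~\ref{lemma:par_relation}, in $\delta^2$). Throughout, $\bgamma$ is a fixed nonzero vector, so $\bgamma^\top\bgamma>0$; and since $\bbeta$ is fixed and the design is bounded --- equivalently, by Model~\eqref{eq:model} and Assumption A5, $\exp(\bx_i^\top\bbeta)=\bgamma^\top\Sigma_i\bgamma=\lambda_{ij_i}$ is a bounded eigenvalue of $\Sigma_i$ --- the constant $M:=\sup_i\exp(\bx_i^\top\bbeta)$ is finite uniformly in $n$. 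It then follows directly from the definitions that $0<\mu=(n\bgamma^\top\bgamma)^{-1}\sum_{i=1}^n\exp(\bx_i^\top\bbeta)\le M/(\bgamma^\top\bgamma)$, and, since $\mu\,\bgamma^\top\bgamma\in(0,M]$ and $\exp(\bx_i^\top\bbeta)\in(0,M]$, that $\phi_i^2=\{\mu\,\bgamma^\top\bgamma-\exp(\bx_i^\top\bbeta)\}^2\le M^2$; averaging gives $\phi^2\le M^2$.

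Next I would bound $\psi_i^2$. Since $\mathbb{E}(\bS_i)=\Sigma_i$ and, under Model~\eqref{eq:model} and Assumption A5, $\bgamma^\top\Sigma_i\bgamma=\exp(\bx_i^\top\bbeta)$, we have $\psi_i^2=\mathrm{Var}(\bgamma^\top\bS_i\bgamma)$. Writing $\bgamma^\top\bS_i\bgamma=T_i^{-1}\sum_{t=1}^{T_i}(\bgamma^\top\by_{it})^2$ as an average of $T_i$ i.i.d.\ terms yields $\psi_i^2=T_i^{-1}\{\mathbb{E}(\bgamma^\top\by_{i1})^4-(\bgamma^\top\Sigma_i\bgamma)^2\}$, so it suffices to bound $\mathbb{E}(\bgamma^\top\by_{i1})^4$ uniformly in $i$. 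By Assumption A5, $\bgamma^\top\by_{i1}=z_{i1j_i}$; under normality $z_{i1j_i}\sim\mathcal{N}(0,\exp(\bx_i^\top\bbeta))$, so $\mathbb{E}(\bgamma^\top\by_{i1})^4=3\exp(2\bx_i^\top\bbeta)\le 3M^2$ and $\psi_i^2\le 2M^2/T_i\le 2M^2/T_{\min}$. Consequently $\psi^2=n^{-1}\sum_i\psi_i^2\le 2M^2/T_{\min}\to0$ is bounded, and by Lemma~\ref{lemma:par_relation}, $\delta^2=\phi^2+\psi^2\le M^2+2M^2/T_{\min}$ is bounded as well.

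The one step that requires care is this uniform fourth-moment control: Assumption A3 bounds the coordinate average $p^{-1}\sum_{j=1}^p\mathbb{E}(z_{i1j}^8)$ rather than an individual $\mathbb{E}(z_{i1j_i}^4)$, so to cover the non-Gaussian case rigorously one would either invoke the normality maintained in Section~\ref{sec:asmp} (under which $\mathbb{E}(z_{i1j_i}^4)=3\lambda_{ij_i}^2$, bounded because the eigenvalues are) or add an explicit uniform moment bound on the projected data $z_{i1j}$. With that in hand everything else reduces to the deterministic bounds of the first paragraph and the elementary i.i.d.\ variance computation for the quadratic form $\bgamma^\top\bS_i\bgamma$; this is the only place where any real argument is needed, and even there it is routine.
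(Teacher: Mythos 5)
Your proof is correct in substance but takes a genuinely different route from the paper's for the stochastic part. The paper never works with the projected scalar directly: it bounds $\mu$ and $\phi^2$ by $\mu\le\frac1n\sum_i\|\Sigma_i\|_2^2\le\sqrt{C_2}$ via the (normalized) Frobenius norm of $\Lambda_i$ and Assumption A3, and it controls $\psi_i^2$ by the matrix-level inequality $\psi_i^2\le(\bgamma^\top\bgamma)^2\,\mathbb{E}\|\bS_i-\Sigma_i\|_F^2\le C_1\sqrt{C_2}$, which is where Assumption A1 ($p/T_{\max}\le C_1$) enters; the conclusion there is only boundedness, $\psi^2\le C_1\sqrt{C_2}$. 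You instead exploit Assumption A5 to reduce $\bgamma^\top\bS_i\bgamma$ to an i.i.d.\ average of $z_{itj_i}^2$ and compute $\psi_i^2=T_i^{-1}\{\mathbb{E}(z_{i1j_i}^4)-\lambda_{ij_i}^2\}$ exactly, which is more elementary, avoids A1 entirely, and yields the strictly sharper statement $\psi^2=O(1/T_{\min})\to0$ rather than mere boundedness --- at the price of needing a uniform bound on the individual fourth moment $\mathbb{E}(z_{i1j_i}^4)$, which you correctly note is not literally what A3 supplies (A3 bounds only the coordinate average of eighth moments) and so must come from the normality maintained in Section~\ref{sec:asmp}. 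The one premise you import that the paper does not state as an assumption is $M:=\sup_i\exp(\bx_i^\top\bbeta)<\infty$; the paper instead extracts the analogous bound on $\bgamma^\top\Sigma_i\bgamma$ from A3 through $\|\Sigma_i\|_2^2\le\|\Lambda_i\|_F^2\le\sqrt{C_2}$, so to match the paper's hypotheses exactly you should either cite that chain in place of ``the design is bounded'' or state the uniform eigenvalue bound as an explicit condition. With that substitution your deterministic bounds on $\mu$ and $\phi^2$ are essentially the paper's (yours are in fact slightly tighter, $\phi_i^2\le M^2$ versus the paper's $2C_2$ from the cruder $a^2+b^2$ split), and the appeal to Lemma~\ref{lemma:par_relation} for $\delta^2$ is identical.
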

\begin{lemma}\label{lemma:shpar_consist}
	For given $(\bgamma,\bbeta)$, as $T_{\min}\rightarrow\infty$,
	\begin{enumerate}[(i)]
		\item $\mathbb{E}(\hat{\delta}_{i}^{2}-\delta_{i}^{2})^{2}\rightarrow0$, for $i=1,\dots,n$, and thus $\mathbb{E}(\hat{\delta}^{2}-\delta^{2})^{2}\rightarrow 0$;
		\item $\mathbb{E}(\hat{\psi}_{i}^{2}-\psi_{i}^{2})^{2}\rightarrow0$, for $i=1,\dots,n$, and thus $\mathbb{E}(\hat{\psi}^{2}-\psi^{2})^{2}\rightarrow 0$;
		\item $\mathbb{E}(\hat{\phi}_{i}^{2}-\phi_{i}^{2})^{2}\rightarrow0$, for $i=1,\dots,n$, and thus $\mathbb{E}(\hat{\phi}^{2}-\phi^{2})^{2}\rightarrow 0$.
	\end{enumerate}
\end{lemma}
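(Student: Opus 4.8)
The plan is to push everything through the single scalar $a_{i}:=\bgamma^\top\bS_{i}\bgamma=T_{i}^{-1}\sum_{t=1}^{T_{i}}(\bgamma^\top\by_{it})^{2}$. Under the normality assumption and Assumption A5 with Model~\eqref{eq:model} holding at the given $(\bgamma,\bbeta)$, the $\bgamma^\top\by_{it}$ are i.i.d.\ $N(0,b_{i})$ across $t$ with $b_{i}:=\exp(\bx_{i}^\top\bbeta)=\bgamma^\top\Sigma_{i}\bgamma$, so $a_{i}$ is the mean of $T_{i}$ i.i.d.\ scaled $\chi^{2}_{1}$ summands; set $V_{i}:=a_{i}-b_{i}$ and $d_{i}:=b_{i}-\mu(\bgamma^\top\bgamma)=b_{i}-n^{-1}\sum_{j}b_{j}$ (the latter is deterministic and does not depend on the normalization of $\bgamma$). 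The first step I would carry out is to record the moment estimates for a centered average of i.i.d.\ variables: $\mathbb{E}V_{i}^{2}=O(T_{i}^{-1})$, $\mathbb{E}V_{i}^{3}=O(T_{i}^{-2})$, $\mathbb{E}V_{i}^{4}=O(T_{i}^{-2})$, where the constants are polynomial in $b_{i}$ and therefore uniform in $i$ because $\sup_{i}b_{i}<\infty$ (and $\sup_{i}|d_{i}|<\infty$) under the regularity conditions of Lemma~\ref{lemma:shpar_bound}. I would also record the identities $\psi_{i}^{2}=\mathbb{E}V_{i}^{2}$, $\phi_{i}^{2}=d_{i}^{2}$, $\delta_{i}^{2}=\mathbb{E}V_{i}^{2}+d_{i}^{2}$ and $\hat{\delta}_{i}^{2}=(V_{i}+d_{i})^{2}$, $\hat{\psi}_{i}^{2}=T_{i}^{-1}V_{i}^{2}$, $\hat{\phi}_{i}^{2}=\hat{\delta}_{i}^{2}-\hat{\psi}_{i}^{2}$, which make all subsequent manipulations purely algebraic.

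Next I would dispatch the per-subject claims. For (i): $\hat{\delta}_{i}^{2}$ is unbiased, so $\mathbb{E}(\hat{\delta}_{i}^{2}-\delta_{i}^{2})^{2}=\mathrm{Var}(\hat{\delta}_{i}^{2})$; expanding $(V_{i}+d_{i})^{2}=V_{i}^{2}+2d_{i}V_{i}+d_{i}^{2}$ and using $\mathrm{Var}(V_{i}^{2})\le\mathbb{E}V_{i}^{4}$ together with $\mathrm{Cov}(V_{i}^{2},V_{i})=\mathbb{E}V_{i}^{3}$ (since $\mathbb{E}V_{i}=0$) gives $\mathrm{Var}(\hat{\delta}_{i}^{2})=O(T_{i}^{-1})$, uniformly in $i$. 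For (ii): from $\hat{\psi}_{i}^{2}=T_{i}^{-1}V_{i}^{2}$ one has $\mathbb{E}(\hat{\psi}_{i}^{2})^{2}=O(T_{i}^{-4})$, $\mathbb{E}\hat{\psi}_{i}^{2}=O(T_{i}^{-2})$ and $\psi_{i}^{2}=O(T_{i}^{-1})$, so $\mathbb{E}(\hat{\psi}_{i}^{2}-\psi_{i}^{2})^{2}=\mathbb{E}(\hat{\psi}_{i}^{2})^{2}-2\psi_{i}^{2}\mathbb{E}\hat{\psi}_{i}^{2}+\psi_{i}^{4}=O(T_{i}^{-2})$ (here $\hat{\psi}_{i}^{2}$ and $\psi_{i}^{2}$ each vanish separately, so the bound is immediate). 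For (iii): $\hat{\phi}_{i}^{2}-\phi_{i}^{2}=(\hat{\delta}_{i}^{2}-\delta_{i}^{2})-(\hat{\psi}_{i}^{2}-\psi_{i}^{2})$ by the identity $\phi_{i}^{2}=\delta_{i}^{2}-\psi_{i}^{2}$ of Lemma~\ref{lemma:par_relation}, so $\mathbb{E}(\hat{\phi}_{i}^{2}-\phi_{i}^{2})^{2}\le2\mathbb{E}(\hat{\delta}_{i}^{2}-\delta_{i}^{2})^{2}+2\mathbb{E}(\hat{\psi}_{i}^{2}-\psi_{i}^{2})^{2}=O(T_{i}^{-1})$. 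In all three cases the per-subject $L^{2}$ error is $\le C/T_{i}$ with $C$ independent of $i$, hence $\rightarrow0$ as $T_{\min}\rightarrow\infty$.

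Finally I would lift each per-subject bound to the average. For $\hat{\delta}^{2}$ and $\hat{\phi}^{2}$ this is just convexity of $x\mapsto x^{2}$: $\mathbb{E}(\hat{\delta}^{2}-\delta^{2})^{2}=\mathbb{E}(n^{-1}\sum_{i}(\hat{\delta}_{i}^{2}-\delta_{i}^{2}))^{2}\le n^{-1}\sum_{i}\mathbb{E}(\hat{\delta}_{i}^{2}-\delta_{i}^{2})^{2}\le C/T_{\min}\rightarrow0$, and similarly for $\hat{\phi}^{2}$. For $\hat{\psi}^{2}=n^{-1}\sum_{i}\min(\hat{\psi}_{i}^{2},\hat{\delta}_{i}^{2})$ one extra observation is needed: by Lemma~\ref{lemma:par_relation} $\psi_{i}^{2}\le\delta_{i}^{2}$, so $\min(\psi_{i}^{2},\delta_{i}^{2})=\psi_{i}^{2}$, and since $t\mapsto\min(t,s)$ is $1$-Lipschitz, $|\min(\hat{\psi}_{i}^{2},\hat{\delta}_{i}^{2})-\psi_{i}^{2}|\le\max(|\hat{\psi}_{i}^{2}-\psi_{i}^{2}|,|\hat{\delta}_{i}^{2}-\delta_{i}^{2}|)\le|\hat{\psi}_{i}^{2}-\psi_{i}^{2}|+|\hat{\delta}_{i}^{2}-\delta_{i}^{2}|$; squaring, using (i)--(ii), and applying the same convexity step gives $\mathbb{E}(\hat{\psi}^{2}-\psi^{2})^{2}\le C/T_{\min}\rightarrow0$. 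The computations themselves are routine; the only places that genuinely need care are the uniformity in $i$ of the moment constants for $V_{i}$ (which rests on $\sup_{i}\exp(\bx_{i}^\top\bbeta)<\infty$, cf.\ Lemma~\ref{lemma:shpar_bound}) and the $1$-Lipschitz handling of the truncation in $\hat{\psi}^{2}$, so that its average error can be controlled by the per-subject bounds of parts (i) and (ii).
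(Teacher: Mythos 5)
Your proposal is correct and follows essentially the same route as the paper: both reduce everything to moments of $\bgamma^\top\bS_{i}\bgamma=T_{i}^{-1}\sum_{t}z_{itj_{i}}^{2}$ under Assumption A5, and both control the truncation in $\hat{\psi}^{2}$ by a two-sided bound equivalent to your $1$-Lipschitz observation for $\min$ (using $\psi_{i}^{2}\le\delta_{i}^{2}$). Your packaging via $V_{i}=\bgamma^\top\bS_{i}\bgamma-\exp(\bx_{i}^\top\bbeta)$ and $d_{i}$, with the identities $\hat{\delta}_{i}^{2}=(V_{i}+d_{i})^{2}$, $\delta_{i}^{2}=\mathbb{E}V_{i}^{2}+d_{i}^{2}$, $\hat{\psi}_{i}^{2}=T_{i}^{-1}V_{i}^{2}$, is in fact slightly cleaner than the paper's: the paper derives Chebyshev-type in-probability statements for $\bgamma^\top\bS_{i}\bgamma$ and $(\bgamma^\top\bS_{i}\bgamma)^{2}$ and then asserts the $L^{2}$ limits, whereas your bias--variance decomposition (unbiasedness of $\hat{\delta}_{i}^{2}$ plus $\mathbb{E}V_{i}^{k}=O(T_{i}^{-\lceil k/2\rceil})$, requiring only the eighth-moment bound of Assumption A3 rather than normality per se) bounds the mean squared errors directly and gives explicit $O(T_{i}^{-1})$ rates. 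No gaps.
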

\begin{theorem}\label{thm:Si_consistent}
	For $\forall~i\in\{1,\dots,n\}$, $\bS_{i}^{*}$ is a consistent estimator of $\Sigma_{i}^{*}$, that is, as $T_{\min}=\min_{i}T_{i}\rightarrow\infty$,
	\begin{equation}
		\mathbb{E}\|\bS_{i}^{*}-\Sigma_{i}^{*}\|^{2}\rightarrow 0.
	\end{equation}
	Thus, the asymptotic expected loss of $\bS_{i}^{*}$ and $\Sigma_{i}^{*}$ are identical, that is,
	\begin{equation}
		\mathbb{E}\left\{\bgamma^\top\bS_{i}^{*}\bgamma-\exp(\bx_{i}^\top\bbeta)\right\}^{2}-\mathbb{E}\left\{\bgamma^\top\Sigma_{i}^{*}\bgamma-\exp(\bx_{i}^\top\bbeta)\right\}^{2}\rightarrow 0.
	\end{equation}
\end{theorem}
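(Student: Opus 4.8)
The plan is to exploit the explicit linear forms of $\bS_i^*$ and $\Sigma_i^*$ together with Lemmas~\ref{lemma:shpar_bound} and~\ref{lemma:shpar_consist}. Write $a_i = \psi^2/\delta^2$, $b_i = \phi^2/\delta^2$ for the population weights (which by Theorem~\ref{thm:Sigma_solution} do not actually depend on $i$, so I will drop the subscript) and $\hat a = \hat\psi^2/\hat\delta^2$, $\hat b = \hat\phi^2/\hat\delta^2$ for their sample counterparts. Then
\[
	\bS_i^* - \Sigma_i^* = (\hat a - a)\,\mu\bI + (\hat b - b)\,\bS_i,
\]
so by the triangle inequality in the Frobenius norm and the elementary bound $(x+y)^2 \le 2x^2 + 2y^2$,
\[
	\mathbb{E}\|\bS_i^* - \Sigma_i^*\|^2 \le 2\mu^2 \|\bI\|^2\,\mathbb{E}(\hat a - a)^2 + 2\,\mathbb{E}\bigl\{(\hat b - b)^2\|\bS_i\|^2\bigr\}.
\]
Here $\|\bI\|^2$ is $p$ (or $1$, depending on the normalization of the norm used in the paper — I will match whichever convention Section~\ref{sec:method} fixed), which under Assumption~A1 is $O(T_{\max})$; this factor must be controlled by the decay rate coming from Lemma~\ref{lemma:shpar_consist}. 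The first term is handled by showing $\mathbb{E}(\hat a - a)^2 \to 0$ (with the right rate), and the second by a Cauchy--Schwarz split $\mathbb{E}\{(\hat b - b)^2\|\bS_i\|^2\} \le \sqrt{\mathbb{E}(\hat b - b)^4}\sqrt{\mathbb{E}\|\bS_i\|^4}$ together with a moment bound on $\|\bS_i\|$ (finite fourth moment of the Frobenius norm of the sample covariance, which follows from Assumption~A3 via standard Wishart-type moment computations).

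The key intermediate step is therefore to upgrade Lemma~\ref{lemma:shpar_consist}, which gives $\mathbb{E}(\hat\delta^2 - \delta^2)^2 \to 0$ etc., into convergence of the \emph{ratios} $\hat a \to a$ and $\hat b \to b$. Since $\hat a = \hat\psi^2/\hat\delta^2$ and $a = \psi^2/\delta^2$, I would write
\[
	\hat a - a = \frac{\hat\psi^2 - \psi^2}{\hat\delta^2} + \psi^2\!\left(\frac{1}{\hat\delta^2} - \frac{1}{\delta^2}\right) = \frac{\hat\psi^2 - \psi^2}{\hat\delta^2} - \frac{\psi^2(\hat\delta^2 - \delta^2)}{\hat\delta^2\,\delta^2},
\]
and control this provided $\delta^2$ is bounded away from zero and $\hat\delta^2$ is not too small with high probability. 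Boundedness of $\delta^2$, $\psi^2$, $\phi^2$, $\mu$ from above is exactly Lemma~\ref{lemma:shpar_bound}; the lower bound $\delta^2 \ge c > 0$ needs to be argued separately — it should follow from Lemma~\ref{lemma:par_relation} ($\delta_i^2 = \phi_i^2 + \psi_i^2$) together with $\psi_i^2 = \mathbb{E}\{\bgamma^\top\bS_i\bgamma - \exp(\bx_i^\top\bbeta)\}^2$ being a genuine (non-degenerate) variance that does not vanish, plus the truncation $\min(\hat\psi_i^2,\hat\delta_i^2)$ built into the definition of $\hat\psi^2$, which keeps $\hat a \in [0,1]$ and hence $\hat b = 1 - \hat a$ automatically bounded. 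Once $\hat a, \hat b \in [0,1]$ deterministically, the troublesome $\|\bS_i\|^2$ in the second term is multiplied by a bounded quantity, and a single Cauchy--Schwarz against the fourth-moment bound on $\|\bS_i\|$ closes the estimate; this sidesteps needing a fourth-moment bound on $\hat b - b$.

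The second display in the theorem then follows immediately from the first: abbreviating $u_i = \bgamma^\top\bS_i^*\bgamma - \exp(\bx_i^\top\bbeta)$ and $v_i = \bgamma^\top\Sigma_i^*\bgamma - \exp(\bx_i^\top\bbeta)$, we have $u_i - v_i = \bgamma^\top(\bS_i^* - \Sigma_i^*)\bgamma$, so $|\mathbb{E}u_i^2 - \mathbb{E}v_i^2| = |\mathbb{E}(u_i - v_i)(u_i + v_i)| \le \sqrt{\mathbb{E}(u_i-v_i)^2}\,\sqrt{\mathbb{E}(u_i+v_i)^2}$; the first factor is bounded by $\|\bgamma\|^2\sqrt{\mathbb{E}\|\bS_i^*-\Sigma_i^*\|^2} \to 0$ by the first part, and the second factor is bounded because $\mathbb{E}v_i^2 = \phi^2\psi^2/\delta^2$ is bounded by Lemma~\ref{lemma:shpar_bound} and $\mathbb{E}u_i^2$ is bounded by the same moment estimates on $\|\bS_i\|$. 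I expect the main obstacle to be the bookkeeping around the dimension factor $p \asymp T_{\max}$ multiplying $\|\bI\|^2$: the rate in the upgraded version of Lemma~\ref{lemma:shpar_consist} must beat this growth, which is where Assumption~A4 (the vanishing of the normalized sum of squared covariances of the $z_{i1j}z_{i1k}$) does the real work, exactly as in the Ledoit--Wolf argument; everything else is Cauchy--Schwarz and the triangle inequality.
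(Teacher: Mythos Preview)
Your overall architecture is close to the paper's, but there is one genuine gap and one missed simplification that matter.

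\textbf{The gap.} Your lower bound $\delta^{2}\ge c>0$ is argued via ``$\psi_{i}^{2}$ being a genuine (non-degenerate) variance that does not vanish.'' But $\psi_{i}^{2}=\mathbb{E}\{\bgamma^{\top}\bS_{i}\bgamma-\exp(\bx_{i}^{\top}\bbeta)\}^{2}=\mathrm{Var}(\bgamma^{\top}\bS_{i}\bgamma)=O(1/T_{i})$, so $\psi^{2}\to 0$ as $T_{\min}\to\infty$ and gives no lower bound at all. A lower bound on $\delta^{2}$ would have to come from $\phi^{2}>0$, i.e.\ genuine variation of $\exp(\bx_{i}^{\top}\bbeta)$ across $i$, which you never invoke and which is not among Assumptions~A1--A5. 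Without it, your ratio decomposition for $\hat a-a$ has an uncontrolled $1/\hat\delta^{2}$, and the truncation $\hat a\in[0,1]$ only gives boundedness, not convergence. The paper sidesteps this entirely: it proves a separate ``ratio lemma'' (its Lemma~A.1) showing that $\mathbb{E}\bigl(a^{2}/(\hat\delta^{\kappa_{1}}\delta^{\kappa_{2}})\bigr)\to 0$ whenever $\mathbb{E}a^{2}\to 0$ and the ratio is a.s.\ dominated by $2(\hat\delta^{2}+\delta^{2})$, via a case split on whether $\delta^{2}$ is small. This is the step your proposal is missing.

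\textbf{The missed simplification.} Since $a+b=1$ and $\hat a+\hat b=1$ (the latter because $\hat\phi^{2}=\hat\delta^{2}-\hat\psi^{2}$ after the truncation), one has $\hat a-a=-(\hat b-b)$, hence
\[
\bS_{i}^{*}-\Sigma_{i}^{*}=\Bigl(\tfrac{\hat\phi^{2}}{\hat\delta^{2}}-\tfrac{\phi^{2}}{\delta^{2}}\Bigr)(\bS_{i}-\mu\bI),
\]
a single scalar times $(\bS_{i}-\mu\bI)$. This is exactly what the paper uses. It removes the separate $\mu\|\bI\|$ term you were worried about and, after averaging over $i$, produces the clean bound
\[
\frac{1}{n}\sum_{i}\|\bS_{i}^{*}-\Sigma_{i}^{*}\|^{2}\;\le\;\frac{(\hat\phi^{2}\delta^{2}-\phi^{2}\hat\delta^{2})^{2}}{\hat\delta^{2}\delta^{4}}\;\le\;\hat\delta^{2},
\]
where the last inequality uses only $\hat\phi^{2}\le\hat\delta^{2}$ and $\phi^{2}\le\delta^{2}$. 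This is the envelope condition that feeds into the ratio lemma; no dimension factor $p$ ever appears, and no Cauchy--Schwarz against $\|\bS_{i}\|^{4}$ is needed.

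Your second display (equality of asymptotic expected losses) is fine: your $u_{i},v_{i}$ argument via $|\mathbb{E}u_{i}^{2}-\mathbb{E}v_{i}^{2}|\le\sqrt{\mathbb{E}(u_{i}-v_{i})^{2}}\sqrt{\mathbb{E}(u_{i}+v_{i})^{2}}$ is essentially the same Cauchy--Schwarz step the paper uses.
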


Next, we show that $\bS_{i}^{*}$ uniformly achieves the minimum quadratic risk asymptotically over all linear combinations of the sample covariance matrix and the identity matrix. For given $(\bgamma,\bbeta)$, let $\Sigma_{i}^{**}$ denote the solution to the following optimization problem,
\begin{eqnarray}\label{eq:opt_sample}
	\underset{\rho_{1},\rho_{2}}{\text{minimize}} && \frac{1}{n}\sum_{i=1}^{n}\left\{\bgamma^\top\Sigma_{i}^{**}\bgamma-\exp(\bx_{i}^\top\bbeta)\right\}^{2}, \nonumber \\
	\text{such that} && \Sigma_{i}^{**}=\rho_{1}\boldsymbol{\mathrm{I}}+\rho_{2}\bS_{i}, \quad \text{for } i=1,\dots,n.
\end{eqnarray}
\begin{theorem}\label{thm:optimal1}
	$\bS_{i}^{*}$ is a consistent estimator of $\Sigma_{i}^{**}$, that is, as $T_{\min}=\min_{i}T_{i}\rightarrow\infty$, for $i=1,\dots,n$,
	\begin{equation}
		\mathbb{E}\|\bS_{i}^{*}-\Sigma_{i}^{**}\|^{2}\rightarrow 0.
	\end{equation}
	Then, $\bS_{i}^{*}$ has the same asymptotic expected loss as $\Sigma_{i}^{**}$ does, that is,
	\begin{equation}
		\mathbb{E}\left\{\bgamma^\top\bS_{i}^{*}\bgamma-\exp(\bx_{i}^\top\bbeta)\right\}^{2}-\mathbb{E}\left\{\bgamma^\top\Sigma_{i}^{**}\bgamma-\exp(\bx_{i}^\top\bbeta)\right\}^{2}\rightarrow 0.
	\end{equation}
\end{theorem}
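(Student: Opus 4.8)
The plan is to bridge through the population optimum $\Sigma_{i}^{*}$ of Theorem~\ref{thm:Sigma_solution}. Because $\|\bS_{i}^{*}-\Sigma_{i}^{**}\|^{2}\leq 2\|\bS_{i}^{*}-\Sigma_{i}^{*}\|^{2}+2\|\Sigma_{i}^{*}-\Sigma_{i}^{**}\|^{2}$ and Theorem~\ref{thm:Si_consistent} already supplies $\mathbb{E}\|\bS_{i}^{*}-\Sigma_{i}^{*}\|^{2}\to 0$, it suffices to prove $\mathbb{E}\|\Sigma_{i}^{*}-\Sigma_{i}^{**}\|^{2}\to 0$; that is, the minimizer of the \emph{realized} quadratic loss over linear combinations of $\bI$ and $\bS_{i}$ converges to the minimizer of its expectation. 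The second display then follows from the first exactly as the corresponding display in Theorem~\ref{thm:Si_consistent}: write the gap between the two mean squared errors as $\mathbb{E}\{\bgamma^{\top}(\bS_{i}^{*}-\Sigma_{i}^{**})\bgamma\,(\bgamma^{\top}(\bS_{i}^{*}+\Sigma_{i}^{**})\bgamma-2\exp(\bx_{i}^{\top}\bbeta))\}$ and apply the Cauchy--Schwarz inequality with $\mathbb{E}\|\bS_{i}^{*}\|^{2}$ and $\mathbb{E}\|\Sigma_{i}^{**}\|^{2}$ bounded.

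For the reduction, observe that $\bgamma^{\top}(\rho_{1}\bI+\rho_{2}\bS_{i})\bgamma=\rho_{1}a+\rho_{2}b_{i}$ with $a=\bgamma^{\top}\bgamma$ nonrandom and $b_{i}=\bgamma^{\top}\bS_{i}\bgamma$, so \eqref{eq:opt_sample} is an ordinary least squares fit of $c_{i}:=\exp(\bx_{i}^{\top}\bbeta)$ on the single regressor $b_{i}$ with an intercept absorbing $\rho_{1}a$; hence $\rho_{2}^{**}=\widehat{\mathrm{Cov}}(b,c)/\widehat{\mathrm{Var}}(b)$ and $\rho_{1}^{**}=(\bar{c}-\rho_{2}^{**}\bar{b})/a$, with hats and bars denoting empirical moments over $i$. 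By Theorem~\ref{thm:Sigma_solution}, $\rho_{2}^{*}=\phi^{2}/\delta^{2}$, $\rho_{1}^{*}=\mu\psi^{2}/\delta^{2}$, and $\mu a=\bar{c}$. Thus $\Sigma_{i}^{*}-\Sigma_{i}^{**}=(\rho_{1}^{*}-\rho_{1}^{**})\bI+(\rho_{2}^{*}-\rho_{2}^{**})\bS_{i}$, so $\mathbb{E}\|\Sigma_{i}^{*}-\Sigma_{i}^{**}\|^{2}\leq 2\mathbb{E}(\rho_{1}^{*}-\rho_{1}^{**})^{2}\|\bI\|^{2}+2\mathbb{E}\{(\rho_{2}^{*}-\rho_{2}^{**})^{2}\|\bS_{i}\|^{2}\}$, and it remains to control these two terms.

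I would establish first the concentration of the empirical moments feeding $\rho_{2}^{**}$. Under Model~\eqref{eq:model} and Assumption~A5, $\mathbb{E}b_{i}=\bgamma^{\top}\Sigma_{i}\bgamma=c_{i}$; writing $b_{i}=c_{i}+\epsilon_{i}$, a short computation of the type behind Lemma~\ref{lemma:par_relation} gives $\mathbb{E}\,\widehat{\mathrm{Var}}(b)=\delta^{2}-\psi^{2}/n$ and $\mathbb{E}\,\widehat{\mathrm{Cov}}(b,c)=\phi^{2}$, while by normality $\mathrm{Var}(b_{i})$ and $\mathrm{Var}(b_{i}^{2})$ are $O(1/T_{i})$ and uniformly bounded by Lemma~\ref{lemma:shpar_bound}, so $\mathbb{E}(\widehat{\mathrm{Var}}(b)-\delta^{2})^{2}$, $\mathbb{E}(\widehat{\mathrm{Cov}}(b,c)-\phi^{2})^{2}$ and $\mathbb{E}(\bar{b}-\bar{c})^{2}$ are $O(1/(nT_{\min}))+O(1/n^{2})\to 0$ by Assumption~A2. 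Since $\delta^{2}=\phi^{2}+\psi^{2}\geq\phi^{2}$ stays bounded away from zero whenever the fitted values $c_{i}$ are not asymptotically degenerate (if $\phi^{2}\to 0$, then $\rho_{2}^{*}=0$ and $\Sigma_{i}^{*}=\mu\bI$, a case handled directly since then $\Sigma_{i}^{**}\to\mu\bI$ as well), expanding the ratio gives $\mathbb{E}(\rho_{2}^{*}-\rho_{2}^{**})^{2}\to 0$, and with a little more care $\mathbb{E}(\rho_{2}^{*}-\rho_{2}^{**})^{4}\to 0$; combined with $\bar{b}\to\bar{c}$, also $\mathbb{E}(\rho_{1}^{*}-\rho_{1}^{**})^{2}\to 0$. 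Finally $\mathbb{E}\{(\rho_{2}^{*}-\rho_{2}^{**})^{2}\|\bS_{i}\|^{2}\}\leq\{\mathbb{E}(\rho_{2}^{*}-\rho_{2}^{**})^{4}\}^{1/2}\{\mathbb{E}\|\bS_{i}\|^{4}\}^{1/2}\to 0$, since $\mathbb{E}\|\bS_{i}\|^{4}$ is uniformly bounded under Assumptions~A1 and A3 by the same moment estimates used for Lemma~\ref{lemma:shpar_bound}. This proves $\mathbb{E}\|\Sigma_{i}^{*}-\Sigma_{i}^{**}\|^{2}\to 0$ and hence the theorem.

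The main obstacle is the passage from $L^{2}$-control of the numerator $\widehat{\mathrm{Cov}}(b,c)$ and denominator $\widehat{\mathrm{Var}}(b)$ to $L^{2}$- and $L^{4}$-control of the ratio $\rho_{2}^{**}$: the inverse $\widehat{\mathrm{Var}}(b)^{-1}$ is not uniformly bounded, so one must show the empirical design inherits the curvature lower bound $\delta^{2}>0$ with overwhelming probability and dispose of the complementary near-collinear event by a uniform-integrability argument, using that $\widehat{\mathrm{Var}}(b)=n^{-2}\sum_{i<j}(b_{i}-b_{j})^{2}$ has a left tail decaying fast enough in $n$ that $\widehat{\mathrm{Var}}(b)^{-1}$ has bounded moments once $n$ is large. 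This is the standard gap between consistency of an empirical objective and consistency of its minimizer; the remaining ingredient, the uniform bound on $\mathbb{E}\|\bS_{i}\|^{4}$, is routine given Assumption~A3.
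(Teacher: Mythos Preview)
Your high-level plan — bridge through $\Sigma_{i}^{*}$ using Theorem~\ref{thm:Si_consistent}, then show $\mathbb{E}\|\Sigma_{i}^{*}-\Sigma_{i}^{**}\|^{2}\to 0$ by reading~\eqref{eq:opt_sample} as an OLS fit of $c_{i}=\exp(\bx_{i}^{\top}\bbeta)$ on $b_{i}=\bgamma^{\top}\bS_{i}\bgamma$ — is sound and is essentially a reorganization of the paper's argument. The paper does not detour through $\Sigma_{i}^{*}$: it writes both $\bS_{i}^{*}$ and $\Sigma_{i}^{**}$ in the form $\mu\bI+\text{(coeff)}(\bS_{i}-\mu\bI)$, introduces the auxiliary $\alpha=n^{-1}\sum_{i}\{(\bgamma^{\top}\Sigma_{i}\bgamma)(\bgamma^{\top}\bS_{i}\bgamma)-\mu^{2}(\bgamma^{\top}\bgamma)^{2}\}$ with $\mathbb{E}\alpha=\phi^{2}$, and shows $(\mu-\rho_{1}-\rho_{2}\mu)^{2}\to 0$ and $\mathbb{E}(\hat{\phi}^{2}/\hat{\delta}^{2}-\rho_{2})^{2}\to 0$ via the chain $\hat{\phi}^{2}/\hat{\delta}^{2}\to\phi^{2}/\hat{\delta}^{2}\to\alpha/\hat{\delta}^{2}\to\rho_{2}$. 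Your route reuses Theorem~\ref{thm:Si_consistent} to absorb the first link, which is a modest economy; the substantive work — matching the empirical OLS coefficients to $\phi^{2}/\delta^{2}$ — is the same in both.

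Where your write-up goes wrong is the asymptotic regime. The theorem is stated for \emph{fixed} $n$ with $T_{\min}\to\infty$ (cf.\ the explicit ``fixed $n\in\mathbb{N}^{+}$'' in Theorem~\ref{thm:optimal2}). Your appeal to Assumption~A2 is therefore misplaced: A2 concerns $p/N\to 0$ as $n,T_{\min}\to\infty$ and says nothing about killing $O(1/n^{2})$ terms here. More seriously, your proposed fix for the ratio — arguing that the left tail of $\widehat{\mathrm{Var}}(b)=n^{-2}\sum_{i<j}(b_{i}-b_{j})^{2}$ decays fast ``in $n$'' so that $\widehat{\mathrm{Var}}(b)^{-1}$ has bounded moments ``once $n$ is large'' — is unavailable when $n$ is fixed. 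The correct observation is that $\phi^{2}=n^{-1}\sum_{i}(c_{i}-\bar{c})^{2}$ is a \emph{deterministic constant}; in the nondegenerate case $\phi^{2}>0$, one has $\psi^{2}=O(1/T_{\min})\to 0$, so $\delta^{2}\to\phi^{2}$ and $\widehat{\mathrm{Var}}(b)\to\phi^{2}$ in $L^{2}$, which together with the a.s.\ bound $|\rho_{2}^{**}|\le\{\widehat{\mathrm{Var}}(c)/\widehat{\mathrm{Var}}(b)\}^{1/2}$ and a truncation on $\{\widehat{\mathrm{Var}}(b)<\phi^{2}/2\}$ handles the inverse. The paper packages exactly this truncation device as Lemma~\ref{appendix:lemma:converg_seq}, which controls ratios of the form $a^{2}/(\hat{\delta}^{\kappa_{1}}\delta^{\kappa_{2}})$ uniformly, including the degenerate limit $\delta^{2}\to 0$; invoking that lemma (or your own version of it, for $\widehat{\mathrm{Var}}(b)$ in place of $\hat{\delta}^{2}$) is what closes the gap, not large-$n$ tail behaviour. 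Once you drop the references to A2 and $n$-asymptotics and route the ratio step through such a truncation argument, your proof goes through.
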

\begin{theorem}\label{thm:optimal2}
	Assume $(\bgamma,\bbeta)$ is given. With a fixed $n\in\mathbb{N}^{+}$, for any sequence of linear combinations $\{\hat{\Sigma}_{i}\}_{i=1}^{n}$ of the identity matrix and the sample covariance matrix, where the combination coefficients are constant over $i\in\{1,\dots,n\}$, the estimator $\bS_{i}^{*}$ verifies:
	\begin{equation}
  		\lim_{T\rightarrow\infty}\inf_{T_{i}\geq T}\left[\frac{1}{n}\sum_{i=1}^{n}\mathbb{E}\left\{\bgamma^\top\hat{\Sigma}_{i}\bgamma-\exp(\bx_{i}^\top\bbeta)\right\}^{2}-\frac{1}{n}\sum_{i=1}^{n}\mathbb{E}\left\{\bgamma^\top\bS_{i}^{*}\bgamma-\exp(\bx_{i}^\top\bbeta)\right\}^{2}\right]\geq 0.
  	\end{equation}
	In addition, every sequence of $\{\hat{\Sigma}_{i}\}_{i=1}^{n}$ that performs as well as $\{\bS_{i}^{*}\}_{i=1}^{n}$ is identical to $\{\bS_{i}^{*}\}_{i=1}^{n}$ in the limit:
  	\begin{equation}
  		\lim_{T\rightarrow\infty}\left[\frac{1}{n}\sum_{i=1}^{n}\mathbb{E}\left\{\bgamma^\top\hat{\Sigma}_{i}\bgamma-\exp(\bx_{i}^\top\bbeta)\right\}^{2}-\frac{1}{n}\sum_{i=1}^{n}\mathbb{E}\left\{\bgamma^\top\bS_{i}^{*}\bgamma-\exp(\bx_{i}^\top\bbeta)\right\}^{2}\right]=0 
  	\end{equation}
  	\begin{equation}
  	\Leftrightarrow \quad \mathbb{E}\|\hat{\Sigma}_{i}-\bS_{i}^{*}\|^{2} \rightarrow 0, \quad \text{for } i=1,\dots,n.
  	\end{equation}
\end{theorem}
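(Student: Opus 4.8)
The plan is to reduce the whole statement to a two‑dimensional, strictly convex quadratic minimization in the common shrinkage coefficients $(\rho_{1},\rho_{2})$, and then to transfer the resulting pointwise (in $(T_{1},\dots,T_{n})$) minimality of $\{\Sigma_{i}^{*}\}$ to the data‑driven $\{\bS_{i}^{*}\}$ using the $L^{2}$‑consistency of Theorem~\ref{thm:Si_consistent}. Fix $(\bgamma,\bbeta)$, put
\[
	L(\hat{\Sigma})=\frac{1}{n}\sum_{i=1}^{n}\mathbb{E}\left\{\bgamma^\top\hat{\Sigma}_{i}\bgamma-\exp(\bx_{i}^\top\bbeta)\right\}^{2},
\]
and write an admissible estimator as $\hat{\Sigma}_{i}=\rho_{1}\boldsymbol{\mathrm{I}}+\rho_{2}\bS_{i}$ with $(\rho_{1},\rho_{2})$ constant in $i$. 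Since $\bgamma^\top\hat{\Sigma}_{i}\bgamma=\rho_{1}(\bgamma^\top\bgamma)+\rho_{2}(\bgamma^\top\bS_{i}\bgamma)$ and, by Assumption A5 with Model~\eqref{eq:model}, $\mathbb{E}(\bgamma^\top\bS_{i}\bgamma)=\bgamma^\top\Sigma_{i}\bgamma=\exp(\bx_{i}^\top\bbeta)$, the restriction of $L$ to this family is a quadratic form in $(\rho_{1},\rho_{2})$ whose coefficients are assembled from $\bgamma^\top\bgamma$, $n^{-1}\sum_{i}\exp(\bx_{i}^\top\bbeta)$, $n^{-1}\sum_{i}\exp(2\bx_{i}^\top\bbeta)$, and $\psi^{2}=n^{-1}\sum_{i}\mathrm{Var}(\bgamma^\top\bS_{i}\bgamma)$. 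Completing the square, first in $\rho_{1}$ then in $\rho_{2}$, shows it is strictly convex with unique minimizer $(\rho_{1}^{\star},\rho_{2}^{\star})=(\psi^{2}\mu/\delta^{2},\ \phi^{2}/\delta^{2})$ --- exactly the coefficients of $\Sigma_{i}^{*}$ in~\eqref{eq:sol_SigmaS} --- and minimal value $\phi^{2}\psi^{2}/\delta^{2}$. Thus this is Theorem~\ref{thm:Sigma_solution}, re‑read as: $\{\Sigma_{i}^{*}\}$ is the best constant‑coefficient linear combination under $L$, for every fixed $(T_{1},\dots,T_{n})$.

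For the first display, combine this pointwise minimality with Theorem~\ref{thm:Si_consistent}: for every $(T_{1},\dots,T_{n})$ one has $L(\hat{\Sigma})-L(\bS^{*})\geq L(\Sigma^{*})-L(\bS^{*})$, while $L(\Sigma^{*})-L(\bS^{*})\to0$ as $T_{\min}\to\infty$ by Theorem~\ref{thm:Si_consistent} (with Lemma~\ref{lemma:shpar_bound} keeping $L(\Sigma^{*})=\phi^{2}\psi^{2}/\delta^{2}$ bounded). Since requiring $T_{i}\geq T$ for all $i$ forces $T_{\min}\geq T$, taking $\liminf_{T\to\infty}\inf_{T_{i}\geq T}$ of the left‑hand side yields the claimed inequality $\geq0$.

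For the equivalence, the implication $\mathbb{E}\|\hat{\Sigma}_{i}-\bS_{i}^{*}\|^{2}\to0\ \Rightarrow\ L(\hat{\Sigma})-L(\bS^{*})\to0$ is routine: both $\hat{\Sigma}_{i}$ and $\bS_{i}^{*}$ lie in $\mathrm{span}\{\boldsymbol{\mathrm{I}},\bS_{i}\}$, so $\mathbb{E}\|\hat{\Sigma}_{i}-\bS_{i}^{*}\|^{2}\to0$ forces the two scalar coefficient pairs to coincide in $L^{2}$, hence $\mathbb{E}(\bgamma^\top\hat{\Sigma}_{i}\bgamma-\bgamma^\top\bS_{i}^{*}\bgamma)^{2}\to0$, and a Cauchy--Schwarz split of the difference of second moments (using boundedness of $\mathbb{E}\{\bgamma^\top\bS_{i}^{*}\bgamma-\exp(\bx_{i}^\top\bbeta)\}^{2}$ from Lemma~\ref{lemma:shpar_bound} and Theorem~\ref{thm:Si_consistent}) closes it. Conversely, if $L(\hat{\Sigma})-L(\bS^{*})\to0$ then, adding $L(\bS^{*})-L(\Sigma^{*})\to0$, we get $L(\hat{\Sigma})-L(\Sigma^{*})\to0$; writing $\boldsymbol{\rho}=(\rho_{1},\rho_{2})$, $\boldsymbol{\rho}^{\star}=(\rho_{1}^{\star},\rho_{2}^{\star})$, and using $L(\hat{\Sigma})-L(\Sigma^{*})=\tfrac{1}{2}(\boldsymbol{\rho}-\boldsymbol{\rho}^{\star})^\top\nabla^{2}L\,(\boldsymbol{\rho}-\boldsymbol{\rho}^{\star})$ with $\det\nabla^{2}L$ a fixed positive multiple of $(\bgamma^\top\bgamma)^{2}\delta^{2}=(\bgamma^\top\bgamma)^{2}(\phi^{2}+\psi^{2})$ (Lemma~\ref{lemma:par_relation}), the smallest eigenvalue of $\nabla^{2}L$ stays bounded away from $0$, so $\rho_{1}\to\rho_{1}^{\star}$ and $\rho_{2}\to\rho_{2}^{\star}$. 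Finally, writing $\hat{\Sigma}_{i}-\bS_{i}^{*}=(\rho_{1}-\hat{\psi}^{2}\mu/\hat{\delta}^{2})\boldsymbol{\mathrm{I}}+(\rho_{2}-\hat{\phi}^{2}/\hat{\delta}^{2})\bS_{i}$ and invoking $\rho_{1}\to\rho_{1}^{\star}$, $\rho_{2}\to\rho_{2}^{\star}$, Lemma~\ref{lemma:shpar_consist} (so $\hat{\psi}^{2}\mu/\hat{\delta}^{2},\hat{\phi}^{2}/\hat{\delta}^{2}$ converge in $L^{2}$ to $\rho_{1}^{\star},\rho_{2}^{\star}$), and the moment bounds on $\bS_{i}$ used in the proof of Theorem~\ref{thm:Si_consistent}, the triangle inequality gives $\mathbb{E}\|\hat{\Sigma}_{i}-\bS_{i}^{*}\|^{2}\to0$.

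The main obstacle is precisely the non‑degeneracy invoked in the converse direction. Because the loss depends on $\hat{\Sigma}_{i}$ only through the scalar $\bgamma^\top\hat{\Sigma}_{i}\bgamma$, the quadratic $L$ can flatten in the limit: here $\psi^{2}=n^{-1}\sum_{i}\mathrm{Var}(\bgamma^\top\bS_{i}\bgamma)\to0$ (each $\bgamma^\top\bS_{i}\bgamma$ is an average of $T_{i}$ i.i.d.\ terms and concentrates, even when $p>T_{\max}$), so $\delta^{2}\to\phi^{2}$, and $\nabla^{2}L$ degenerates unless $\phi^{2}=n^{-1}\sum_{i}\{\mu(\bgamma^\top\bgamma)-\exp(\bx_{i}^\top\bbeta)\}^{2}$ is bounded below --- that is, unless the covariates genuinely spread the targeted quantities $\bgamma^\top\Sigma_{i}\bgamma=\exp(\bx_{i}^\top\bbeta)$ across subjects. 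This heterogeneity, implicit in the regression setup and in the uniqueness claim itself, is what makes the optimal linear combination identifiable in the limit and has to be used; the remaining point, minor by comparison, is verifying $\mathbb{E}\|\bS_{i}\|^{2}=O(1)$ uniformly as $p,T_{\max}\to\infty$ under Assumptions A1--A4, so that convergence of the scalar coefficients upgrades to convergence of $\mathbb{E}\|\hat{\Sigma}_{i}-\bS_{i}^{*}\|^{2}$.
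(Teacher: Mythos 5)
Your argument for the first display hinges on the pointwise bound $\frac{1}{n}\sum_{i}\mathbb{E}\{\bgamma^\top\hat{\Sigma}_{i}\bgamma-\exp(\bx_{i}^\top\bbeta)\}^{2}\geq\frac{1}{n}\sum_{i}\mathbb{E}\{\bgamma^\top\Sigma_{i}^{*}\bgamma-\exp(\bx_{i}^\top\bbeta)\}^{2}$ for every fixed $(T_{1},\dots,T_{n})$, which you justify by Theorem~\ref{thm:Sigma_solution}. That inequality is valid only when the combination coefficients $(\rho_{1},\rho_{2})$ are nonrandom: $\Sigma_{i}^{*}$ minimizes the \emph{expected} loss over \emph{deterministic} coefficients. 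The class the theorem is meant to cover (in the Ledoit--Wolf tradition that this result follows) includes data-dependent coefficients --- note that $\bS_{i}^{*}$ itself has random coefficients $\hat{\psi}^{2}\mu/\hat{\delta}^{2}$ and $\hat{\phi}^{2}/\hat{\delta}^{2}$, and the second half of the theorem quantifies over competitors that can coincide with $\bS_{i}^{*}$ in the limit. For a data-dependent competitor the bound fails at finite $T$: taking $\hat{\Sigma}_{i}=\Sigma_{i}^{**}$ (the pathwise minimizer of the realized loss from~\eqref{eq:opt_sample}) gives $\mathbb{E}[\text{loss}(\Sigma^{**})]\leq\mathbb{E}[\text{loss}(\Sigma^{*})]$, generically strictly. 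This is exactly why the paper introduces $\Sigma_{i}^{**}$ and proves Theorem~\ref{thm:optimal1}: the correct lower bound is the realized loss of $\Sigma_{i}^{**}$, which dominates every linear combination \emph{pathwise} (random coefficients included), and Theorem~\ref{thm:optimal1} then transfers this to $\bS_{i}^{*}$. Your proof never invokes $\Sigma_{i}^{**}$ or Theorem~\ref{thm:optimal1}, so as written it establishes optimality only within the strictly smaller deterministic-coefficient class.

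The same substitution would repair your converse direction, and more cheaply than the route you take. Expanding the realized loss around the pathwise minimizer $\Sigma_{i}^{**}$ kills the cross terms by the first-order conditions, so the aggregate loss gap equals $\frac{1}{n}\sum_{i}\mathbb{E}\{\bgamma^\top(\hat{\Sigma}_{i}-\Sigma_{i}^{**})\bgamma\}^{2}$ exactly --- a sum of nonnegative terms, which immediately yields the per-$i$ convergence without your Hessian lower bound. You are right, and candid, that your strict-convexity step degenerates as $\psi^{2}\rightarrow 0$ unless $\phi^{2}$ is bounded below (i.e., the $\exp(\bx_{i}^\top\bbeta)$ genuinely vary across subjects); that identifiability issue does resurface in both treatments at the very last step, when the scalar convergence $\mathbb{E}\{\bgamma^\top(\hat{\Sigma}_{i}-\bS_{i}^{*})\bgamma\}^{2}\rightarrow 0$ for the fixed $\bgamma$ is upgraded to convergence of the matrix norm $\mathbb{E}\|\hat{\Sigma}_{i}-\bS_{i}^{*}\|^{2}$. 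But the central missing idea in your proposal is the comparison with the pathwise oracle $\Sigma_{i}^{**}$; without it, both the first display and the converse implication are only proved for nonrandom shrinkage weights.
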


The difference between $\Sigma_{i}^{**}$ and $\Sigma_{i}^{*}$ is that $\Sigma_{i}^{**}$ minimizes the squared loss instead of the expected loss, while asymptotically they are equivalent (Theorems~\ref{thm:Si_consistent} and \ref{thm:optimal1}). Theorem~\ref{thm:optimal2} presents the main result that, with a fixed sample size $n$, the proposed shrinkage estimator $\{\bS_{i}^{*}\}_{i=1}^{n}$ achieves the uniformly minimum (average) quadratic risk asymptotically among all linear combinations of the identity matrix and the sample covariance matrix. Here, ``average'' implies an average over the subjects, and ``asymptotically'' refers to that the number of observations within each subject increases to infinity. Therefore, $\bS_{i}^{*}$ is asymptotically optimal. In addition, it is guaranteed that $\bS_{i}^{*}$ is positive definite (see a discussion in Section~\ref{appendix:sub:Si_willcondition} of the supplementary materials). Thus, there exits unique solution to the optimization problem~\eqref{eq:opt_Sstar}.


Next, we study the asymptotic properties of the model coefficient estimator. Let $\hat{\bbeta}$ denote the solution to the optimization problem~\eqref{eq:opt_Sstar}.
\begin{lemma}\label{lemma:beta_equiv}
	For given $\bgamma$, assume the linear shrinkage estimator, $\Sigma_{i}^{*}$, satisfies
	\begin{equation}
		\mathbb{E}(\bgamma^\top\Sigma_{i}^{*}\bgamma)=\exp(\bx_{i}^\top\bbeta^{*}), \quad \text{for } i=1,\dots,n,
	\end{equation}
	then
	\begin{equation}
		\bbeta^{*}=\bbeta.
	\end{equation}
\end{lemma}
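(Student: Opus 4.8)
The plan is to substitute the closed form of $\Sigma_{i}^{*}$ from Theorem~\ref{thm:Sigma_solution}, evaluate the quadratic form $\mathbb{E}(\bgamma^\top\Sigma_{i}^{*}\bgamma)$ in terms of $\exp(\bx_{i}^\top\bbeta)$, and then invert the exponential link using the identifiability of $\bbeta$ in Model~\eqref{eq:model}.

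First I would write $\bgamma^\top\Sigma_{i}^{*}\bgamma=\tfrac{\psi^{2}}{\delta^{2}}\mu(\bgamma^\top\bgamma)+\tfrac{\phi^{2}}{\delta^{2}}\bgamma^\top\bS_{i}\bgamma$ from \eqref{eq:sol_SigmaS}, take expectations using $\mathbb{E}(\bS_{i})=\Sigma_{i}$ (zero-mean data), and note that under Assumption A5 with Model~\eqref{eq:model}, $\bgamma=\bpi_{ij_{i}}$ gives $\bgamma^\top\Sigma_{i}\bgamma=\exp(\bx_{i}^\top\bbeta)$. Plugging in $\mu(\bgamma^\top\bgamma)=n^{-1}\sum_{j}\exp(\bx_{j}^\top\bbeta)$ together with $\psi^{2}/\delta^{2}+\phi^{2}/\delta^{2}=1$ from Lemma~\ref{lemma:par_relation} leaves $\mathbb{E}(\bgamma^\top\Sigma_{i}^{*}\bgamma)$ as a convex combination of the grand mean $n^{-1}\sum_{j}\exp(\bx_{j}^\top\bbeta)$ and $\exp(\bx_{i}^\top\bbeta)$, with weights $\psi^{2}/\delta^{2}$ and $\phi^{2}/\delta^{2}$.

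Second, I would show this convex combination reduces to $\exp(\bx_{i}^\top\bbeta)$, i.e.\ that $\Sigma_{i}^{*}$ carries exactly the same covariate association as $\Sigma_{i}$. This is immediate when $\psi^{2}=0$ or when all $\exp(\bx_{i}^\top\bbeta)$ coincide; in general it holds in the limit, since $\psi_{i}^{2}=\mathrm{Var}(\bgamma^\top\bS_{i}\bgamma)=O(T_{i}^{-1})$ while $\delta^{2}=\phi^{2}+\psi^{2}$ with $\phi^{2}$ a fixed positive constant (the empirical variance of the $\exp(\bx_{i}^\top\bbeta)$'s), so $\psi^{2}/\delta^{2}\to 0$ as $T_{\min}\to\infty$.

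Finally, combining the hypothesis $\mathbb{E}(\bgamma^\top\Sigma_{i}^{*}\bgamma)=\exp(\bx_{i}^\top\bbeta^{*})$ with the previous step gives $\bx_{i}^\top(\bbeta^{*}-\bbeta)=0$ for every $i$; stacking these as $\bX(\bbeta^{*}-\bbeta)=\boldsymbol{\mathrm{0}}$ and using that the covariate matrix $\bX=(\bx_{1},\dots,\bx_{n})^\top$ has full column rank (the identifiability condition already needed for $\bbeta$ in Model~\eqref{eq:model}) forces $\bbeta^{*}=\bbeta$. The main obstacle is the second step: because the common shrinkage target $\mu\boldsymbol{\mathrm{I}}$ pulls $\bgamma^\top\Sigma_{i}^{*}\bgamma$ toward the grand mean, the identity $\mathbb{E}(\bgamma^\top\Sigma_{i}^{*}\bgamma)=\exp(\bx_{i}^\top\bbeta)$ is only asymptotic in general, and establishing $\psi^{2}/\delta^{2}\to 0$ (the same fact underlying the asymptotic unbiasedness of the pseudo-likelihood for $\bbeta$, cf.\ Lemma~\ref{lemma:shpar_bound}) is where the real content lies; the first and last steps are routine algebra and a rank argument.
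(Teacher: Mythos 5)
Your first step coincides with the paper's: both substitute \eqref{eq:sol_SigmaS} and arrive at
$\mathbb{E}(\bgamma^\top\Sigma_{i}^{*}\bgamma)=\frac{\psi^{2}}{\delta^{2}}\mu(\bgamma^\top\bgamma)+\frac{\phi^{2}}{\delta^{2}}\exp(\bx_{i}^\top\bbeta)$.
From there the two arguments genuinely diverge. The paper never attempts to show that this convex combination equals $\exp(\bx_{i}^\top\bbeta)$ for each $i$; instead it averages the identity over $i$, uses $\mu(\bgamma^\top\bgamma)=n^{-1}\sum_{i}\exp(\bx_{i}^\top\bbeta)$ together with $\psi^{2}/\delta^{2}+\phi^{2}/\delta^{2}=1$ to conclude $n^{-1}\sum_{i}\exp(\bx_{i}^\top\bbeta^{*})=n^{-1}\sum_{i}\exp(\bx_{i}^\top\bbeta)$, and from the equality of these two averages asserts $\bbeta^{*}=\bbeta$. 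You instead argue that the shrinkage weight vanishes, $\psi^{2}/\delta^{2}\rightarrow 0$ (correct: $\psi_{i}^{2}=\mathrm{Var}(\bgamma^\top\bS_{i}\bgamma)=O(T_{i}^{-1})$ under Assumption A5, while $\phi^{2}$ is the fixed empirical variance of the $\exp(\bx_{i}^\top\bbeta)$'s), so the pointwise identity $\mathbb{E}(\bgamma^\top\Sigma_{i}^{*}\bgamma)=\exp(\bx_{i}^\top\bbeta)$ holds in the limit, and then you finish with a full-column-rank argument on $\bX$.

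Two remarks. First, your route delivers the conclusion only asymptotically as $T_{\min}\rightarrow\infty$, whereas the lemma is stated, and the paper's proof is written, as an exact finite-sample identity; strictly you have proved a weaker statement, and you need the additional (unstated in the lemma) hypotheses that $\phi^{2}>0$ and that $(\bx_{1},\dots,\bx_{n})^\top$ has full column rank. Second, you have nonetheless put your finger on a real subtlety that the paper elides: for finite $T_{i}$ with $\psi^{2}>0$ and non-constant $\exp(\bx_{i}^\top\bbeta)$, the quantity $\mathbb{E}(\bgamma^\top\Sigma_{i}^{*}\bgamma)$ is strictly pulled toward the grand mean and does not equal $\exp(\bx_{i}^\top\bbeta)$, and the paper's final inference from equality of the two averaged exponentials to $\bbeta^{*}=\bbeta$ is itself a nontrivial leap (distinct coefficient vectors can produce the same average of exponentials), whereas your pointwise-plus-rank step is airtight once the pointwise identity is available. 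If you adopt your route, state explicitly that you are proving the lemma in the limit $T_{\min}\rightarrow\infty$; if you want the paper's finite-sample version, you must use its averaging argument and then justify the passage from equal averages to equal coefficients, which is where its proof is thinnest.
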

\begin{theorem}\label{thm:beta_consist}
	For given $\bgamma$, assume Assumptions A1--A5 are satisfied, $\hat{\bbeta}$ is a consistent estimator of $\bbeta$ as $n,T_{\min}\rightarrow\infty$, where $T_{\min}=\min_{i}T_{i}$.
\end{theorem}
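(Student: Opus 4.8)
The plan is to treat the objective in \eqref{eq:opt_Sstar} with $\hat\Sigma_i=\bS_i^*$, for the given $\bgamma$, as a pseudo--log-likelihood for $\bbeta$ in which the shrinkage weights act as estimated nuisance parameters, and to invoke the consistency theory for pseudo-maximum-likelihood estimators \citep{gong1981pseudo}. The two structural facts that drive the argument are: for fixed $\bgamma$, $\ell(\cdot,\bgamma)$ is convex in $\bbeta$; and the shrinkage in $\bS_i^*$ is asymptotically negligible, so the criterion collapses onto the genuine Gaussian log-likelihood of the projected scalars $\{\bgamma^\top\by_{it}\}$, whose maximizer in $\bbeta$ is consistent by standard arguments. \emph{Step 1 (convexity and the estimating equation).} Each summand of $\ell(\cdot,\bgamma)$ has the form $\tfrac12 T_i\{u+a_i e^{-u}\}$ with $u=\bx_i^\top\bbeta$ and $a_i=\bgamma^\top\bS_i^*\bgamma>0$ (positive since $\bS_i^*$ is positive definite, as noted after Theorem~\ref{thm:optimal2}, and $\bgamma\neq\boldsymbol{\mathrm{0}}$); since $u\mapsto u+a e^{-u}$ is strictly convex, $\ell(\cdot,\bgamma)$ is convex and coercive, and strictly convex, whenever the weighted design $\sum_i T_i\bx_i\bx_i^\top$ is nonsingular. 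Hence $\hat\bbeta$ exists, is unique, and is the unique root of
\begin{equation*}
\Psi_N(\bb):=\frac{1}{N}\sum_{i=1}^{n}T_i\,\bx_i\left\{1-\bgamma^\top\bS_i^*\bgamma\cdot\exp(-\bx_i^\top\bb)\right\}=\boldsymbol{\mathrm{0}},\qquad N=\sum_{i=1}^{n}T_i .
\end{equation*}

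\emph{Step 2 (the shrinkage vanishes).} Under Assumption A5, $\bgamma=\bpi_{ij_{i}}$ is an eigenvector of $\Sigma_i$, so $\bgamma^\top\bS_i\bgamma=T_i^{-1}\sum_{t}z_{itj_{i}}^{2}$ has mean $\lambda_{ij_{i}}=\exp(\bx_i^\top\bbeta)$ and, by Assumption A3, variance $\psi_i^2=\mathrm{Var}(\bgamma^\top\bS_i\bgamma)=O(1/T_i)$ uniformly in $i$; hence $\psi^2\to0$. Combined with Lemma~\ref{lemma:shpar_bound} (which bounds $\mu,\phi^2,\delta^2$ and, under a nondegenerate design, keeps $\delta^2$ away from $0$) and the identity $\psi^2/\delta^2+\phi^2/\delta^2=1$ of Lemma~\ref{lemma:par_relation}, this gives $\psi^2/\delta^2\to0$, $\phi^2/\delta^2\to1$, and Lemma~\ref{lemma:shpar_consist} upgrades these to $\hat\psi^2/\hat\delta^2\xrightarrow{p}0$, $\hat\phi^2/\hat\delta^2\xrightarrow{p}1$. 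Writing $\bgamma^\top\bS_i^*\bgamma=(\hat\phi^2/\hat\delta^2)\,\bgamma^\top\bS_i\bgamma+(\hat\psi^2/\hat\delta^2)\,\mu(\bgamma^\top\bgamma)$, using Theorem~\ref{thm:Si_consistent} for $\mathbb{E}\|\bS_i^*-\Sigma_i^*\|^2\to0$, and decomposing $\bgamma^\top\Sigma_i^*\bgamma-\exp(\bx_i^\top\bbeta)=(\phi^2/\delta^2)\{\bgamma^\top\bS_i\bgamma-\exp(\bx_i^\top\bbeta)\}+(\psi^2/\delta^2)\{\mu(\bgamma^\top\bgamma)-\exp(\bx_i^\top\bbeta)\}$ (whose two pieces have squared $L^2$-norms $\psi_i^2$ and $\phi_i^2$), a second-moment bound yields $\sup_i\mathbb{E}\{\bgamma^\top\bS_i^*\bgamma-\exp(\bx_i^\top\bbeta)\}^2\to0$; this is consistent with Lemma~\ref{lemma:beta_equiv}, which identifies $\exp(\bx_i^\top\bbeta)$ as the limiting mean of $\bgamma^\top\Sigma_i^*\bgamma$.

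\emph{Step 3 (consistency).} The covariates being bounded, $\exp(-\bx_i^\top\bb)$ is bounded on compact $\bb$-sets, so Cauchy--Schwarz gives $\mathbb{E}\|\Psi_N(\bbeta)\|\le \frac{1}{N}\sum_i T_i\|\bx_i\|\exp(-\bx_i^\top\bbeta)\left\{\mathbb{E}(\bgamma^\top\bS_i^*\bgamma-\exp(\bx_i^\top\bbeta))^2\right\}^{1/2}\to0$, i.e. $\Psi_N(\bbeta)\xrightarrow{p}\boldsymbol{\mathrm{0}}$ at the true value; the same bound shows $N^{-1}\ell(\bb,\bgamma)$ converges pointwise in probability to the deterministic, strictly convex $\bar\ell_N(\bb)=\tfrac{1}{2N}\sum_i T_i\{\bx_i^\top\bb+\exp(\bx_i^\top\bbeta)\exp(-\bx_i^\top\bb)\}$, whose Hessian $\tfrac{1}{2N}\sum_i T_i\bx_i\bx_i^\top\exp(\bx_i^\top\bbeta-\bx_i^\top\bb)$ is positive definite and whose gradient vanishes at $\bb=\bbeta$, so $\bbeta$ is its unique minimizer. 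Convexity of $N^{-1}\ell(\cdot,\bgamma)$, pointwise convergence to the strictly convex $\bar\ell_N$, and uniqueness of the minimizer then give $\hat\bbeta\xrightarrow{p}\bbeta$ by the convexity argument for minimizers of convex random criteria; this is precisely the conclusion delivered by the pseudo-MLE consistency result of \citet{gong1981pseudo} with the shrinkage weights as the estimated nuisance parameters, which by Step~2 converge to the limit at which the pseudo-likelihood and the Gaussian log-likelihood of $\{\bgamma^\top\by_{it}\}$ coincide.

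I expect the main obstacle to be Step~2: one must control $\mathbb{E}\{\bgamma^\top\bS_i^*\bgamma-\exp(\bx_i^\top\bbeta)\}^2$ \emph{uniformly over $i$} while $n\to\infty$ and $T_{\min}\to\infty$ jointly, and while the weights $\hat\psi^2/\hat\delta^2,\hat\phi^2/\hat\delta^2$ are random functions of the whole data set. Coupling the two limits, exploiting the degeneracy $\psi^2/\delta^2\to0$ without sacrificing the uniformity in $i$, and isolating the mild design/identifiability condition on the covariates that must supplement Assumptions A1--A5, are the delicate points; everything else is either convexity bookkeeping or a direct appeal to Theorem~\ref{thm:Si_consistent} and Lemmas~\ref{lemma:shpar_bound}--\ref{lemma:beta_equiv}.
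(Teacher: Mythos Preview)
Your proposal is correct and lands on the same argument the paper uses: invoke Lemma~\ref{lemma:beta_equiv} together with the pseudo-maximum-likelihood consistency theory of \citet{gong1981pseudo}. The paper's proof is literally one sentence to that effect, so your three steps are supplying the scaffolding the paper leaves implicit.

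One point of comparison worth noting: your Step~2 works harder than necessary. You argue that the shrinkage weight $\psi^2/\delta^2\to0$ so that $\bS_i^*$ collapses onto $\bS_i$, but Lemma~\ref{lemma:beta_equiv} already gives the exact identity $\mathbb{E}(\bgamma^\top\Sigma_i^*\bgamma)=\exp(\bx_i^\top\bbeta)$ for \emph{any} shrinkage level, not just in the limit. Combined with Theorem~\ref{thm:Si_consistent} ($\bS_i^*$ consistent for $\Sigma_i^*$), this means the pseudo-likelihood is centered correctly regardless of whether the shrinkage vanishes, and the Gong--Samaniego machinery applies directly with the shrinkage weights as nuisance parameters. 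Your route via $\psi^2\to0$ is valid under the nondegenerate-design caveat you flag (so that $\phi^2$, hence $\delta^2$, stays bounded away from zero), but the paper sidesteps that caveat entirely by leaning on the exact identity. Your convexity bookkeeping in Steps~1 and~3 is a genuine addition and is correct.
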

\noindent Lemma~\ref{lemma:beta_equiv} implies that under the rotation $\bgamma$, the expectation of the shrinkage estimator, $\Sigma_{i}^{*}$, has the same association with the covariates as the true covariance matrix, $\Sigma_{i}$, does. $\bS_{i}^{*}$ is a consistent estimator of $\Sigma_{i}^{*}$ and is positive definite. This substantiates the choice of $\bS_{i}^{*}$ replacing the sample covariance matrix $\bS_{i}$ in the optimization problem. Theorem~\ref{thm:beta_consist} states the consistency of $\hat{\bbeta}$.


\section{Simulation Study}
\label{sec:sim}

\subsection{\texorpdfstring{$\bgamma$}{} is known}
\label{sub:sim_gamma_known}

In this section, we focus on examining the performance of the proposed method in estimating the covariance matrices and model coefficients by assuming the projection $\bgamma$ is known. Three methods are compared. (i) Estimate each individual covariance matrix using the estimator proposed in \citet{ledoit2004well} and replace $\hat{\Sigma}_{i}$ with it in the optimization problem~\eqref{eq:opt_Sstar}. We denote this approach as LW-CAP (\textbf{L}edoit and \textbf{W}olf based \textbf{C}ovariate \textbf{A}ssisted \textbf{P}rincipal regression), where the shrinkage is estimated on each individual covariance matrix. (ii) Estimate the covariance matrices using the proposed shrinkage estimator $\bS_{i}^{*}$ in~\eqref{eq:S_gbshrink}. We denote this approach as CS-CAP (\textbf{C}ovariate dependent \textbf{S}hrinkage CAP), where the shrinkage parameters are assumed to be shared across subjects. (iii) Estimate each individual covariance matrix using the sample covariance matrix and plug into the optimization problem~\eqref{eq:opt_Sstar}. This is the CAP approach proposed in \citet{zhao2019covariate}, which is only applicable when $T_{\min}=\min_{i}T_{i}>p$.

The covariance matrices are generated using the eigendecomposition $\Sigma_{i}=\Pi\Lambda_{i}\Pi^\top$, where $\Pi=(\bpi_{1},\dots,\bpi_{p})$ is an orthonormal matrix in $\mathbb{R}^{p\times p}$ and $\Lambda_{i}=\mathrm{diag}\{\lambda_{i1},\dots,\lambda_{ip}\}$ is a diagonal matrix with the diagonal elements to be the eigenvalues, for $i=1,\dots,n$. In $\Lambda_{i}$, the diagonal elements are exponentially decaying, where eigenvalues of the second and the fourth dimension (D2 and D4) satisfy the log-linear model in~\eqref{eq:model}. We consider a case with a single predictor $X$ (thus $q=2$), which is generated from a Bernoulli distribution with probability $0.5$ to be one. For D2, the coefficient $\beta_{1}=-1$; and for D4, $\beta_{1}=1$. For the rest dimensions, $\lambda_{ij}$, for $i=1,\dots,n$, is generated from a log-normal distribution, where the mean of the corresponding normal distribution decreases from $5$ to $-1$ over $j$. Cases when $p=20,50,100$ are considered.

We first compare the three approaches, LW-CAP, CS-CAP and CAP, under sample sizes $n=50$ and $T_{i}=T=50$ for all $i$ and present the result in Table~\ref{table:sim_gammaknown_comp}. In the estimation, for dimension $j$, $\bgamma$ is set to be $\bpi_{j}$. In Table~\ref{table:sim_gammaknown_comp}, we present the bias and the mean squared error (MSE) in estimating the eigenvalues and the model coefficient in D2 and D4. From the table, for both the eigenvalues and $\beta_{1}$, CS-CAP yields lower estimation bias and MSE than LW-CAP does. When $p<T$, CS-CAP achieves a similar estimation bias as the CAP approach does in estimating the covariance matrices, while the MSE is slightly lower. For the estimation of $\beta_{1}$, CS-CAP yields slightly lower bias. As the dimension $p$ increases, the bias and MSE of eigenvalue estimates from LW-CAP increase; while the bias and MSE of the estimates from CS-CAP are similar at all $p$ settings. This demonstrates the superiority of the proposed estimator in estimating the covariance matrices. Figure~\ref{fig:sim_gammaknown_asmp} presents the estimation bias and MSE of CS-CAP estimator at various levels of $T$ when fixing $n=50$ when $p=20$. From the figure, as the number of observations within each subject increases, the estimates converge to the truth.

\begin{table}
	\caption{\label{table:sim_gammaknown_comp}Bias and mean squared error (MSE) in estimating the eigenvalues of the covariance matrices and the $\beta_{1}$ coefficient with sample sizes $n=50$ and $T_{i}=T=50$, for $i=1,\dots,n$, when $\bgamma$ is known.}
	\begin{center}
		\begin{tabular}{l l l r r r c r r r}
			\hline
			& & & \multicolumn{3}{c}{Eigenvalue} && \multicolumn{3}{c}{$\beta_{1}$} \\
			\cline{4-6}\cline{8-10}
			& & & \multicolumn{1}{c}{LW-CAP} & \multicolumn{1}{c}{CS-CAP} & \multicolumn{1}{c}{CAP} && \multicolumn{1}{c}{LW-CAP} & \multicolumn{1}{c}{CS-CAP} & \multicolumn{1}{c}{CAP} \\
			\hline
			& & Bias & -6.520 & -1.175 & -1.175 && -0.053 & 0.001 & -0.003 \\
			& \multirow{-2}{*}{D2} & MSE & 225.360 & 204.686 & 206.117 && 0.006 & 0.004 & 0.004 \\
			\cline{2-10}
			& & Bias & -7.422 & -1.223 & -1.223 && -0.040 & 0.001 & 0.005 \\
			\multirow{-4}{*}{$p=20$} & \multirow{-2}{*}{D4} & MSE & 277.888 & 249.881 & 251.595 && 0.005 & 0.004 & 0.004 \\
			\hline
			& & Bias & -7.975 & -1.428 & \multicolumn{1}{c}{-} && 0.028 & 0.008 & \multicolumn{1}{c}{-} \\
			& \multirow{-2}{*}{D2} & MSE & 224.326 & 202.141 & \multicolumn{1}{c}{-} && 0.004 & 0.003 & \multicolumn{1}{c}{-} \\
			\cline{2-10}
			& & Bias & -8.641 & -1.242 & \multicolumn{1}{c}{-} && -0.012 & 0.001 & \multicolumn{1}{c}{-} \\
			\multirow{-4}{*}{$p=50$} & \multirow{-2}{*}{D4} & MSE & 295.221 & 248.254 & \multicolumn{1}{c}{-} && 0.004 & 0.004 & \multicolumn{1}{c}{-} \\
			\hline
			& & Bias & -8.923 & -0.973 & \multicolumn{1}{c}{-} && 0.010 & -0.001 & \multicolumn{1}{c}{-} \\
			& \multirow{-2}{*}{D2} & MSE & 260.268 & 203.151 & \multicolumn{1}{c}{-} && 0.004 & 0.003 & \multicolumn{1}{c}{-} \\
			\cline{2-10}
			& & Bias & -10.487 & -1.705 & \multicolumn{1}{c}{-} && -0.011 & -0.007 & \multicolumn{1}{c}{-} \\
			\multirow{-4}{*}{$p=100$} & \multirow{-2}{*}{D4} & MSE & 331.864 & 245.754 & \multicolumn{1}{c}{-} && 0.003 & 0.003 & \multicolumn{1}{c}{-} \\
			\hline
		\end{tabular}
	\end{center}
\end{table}
\begin{figure}
	\begin{center}
		\subfloat[Bias of $\hat{\lambda}_{ij}$]{\includegraphics[width=0.25\textwidth]{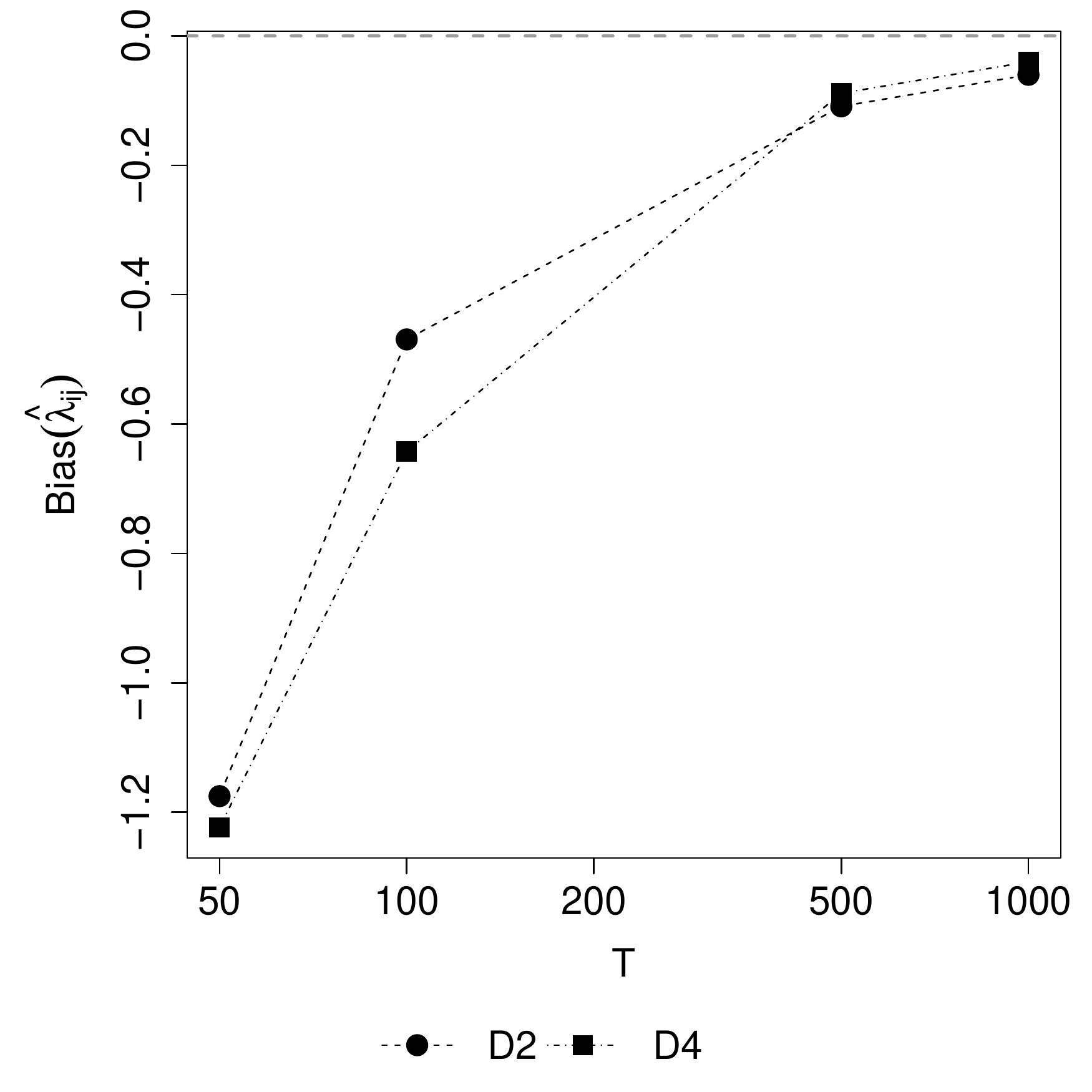}}
		\subfloat[MSE of $\hat{\lambda}_{ij}$]{\includegraphics[width=0.25\textwidth]{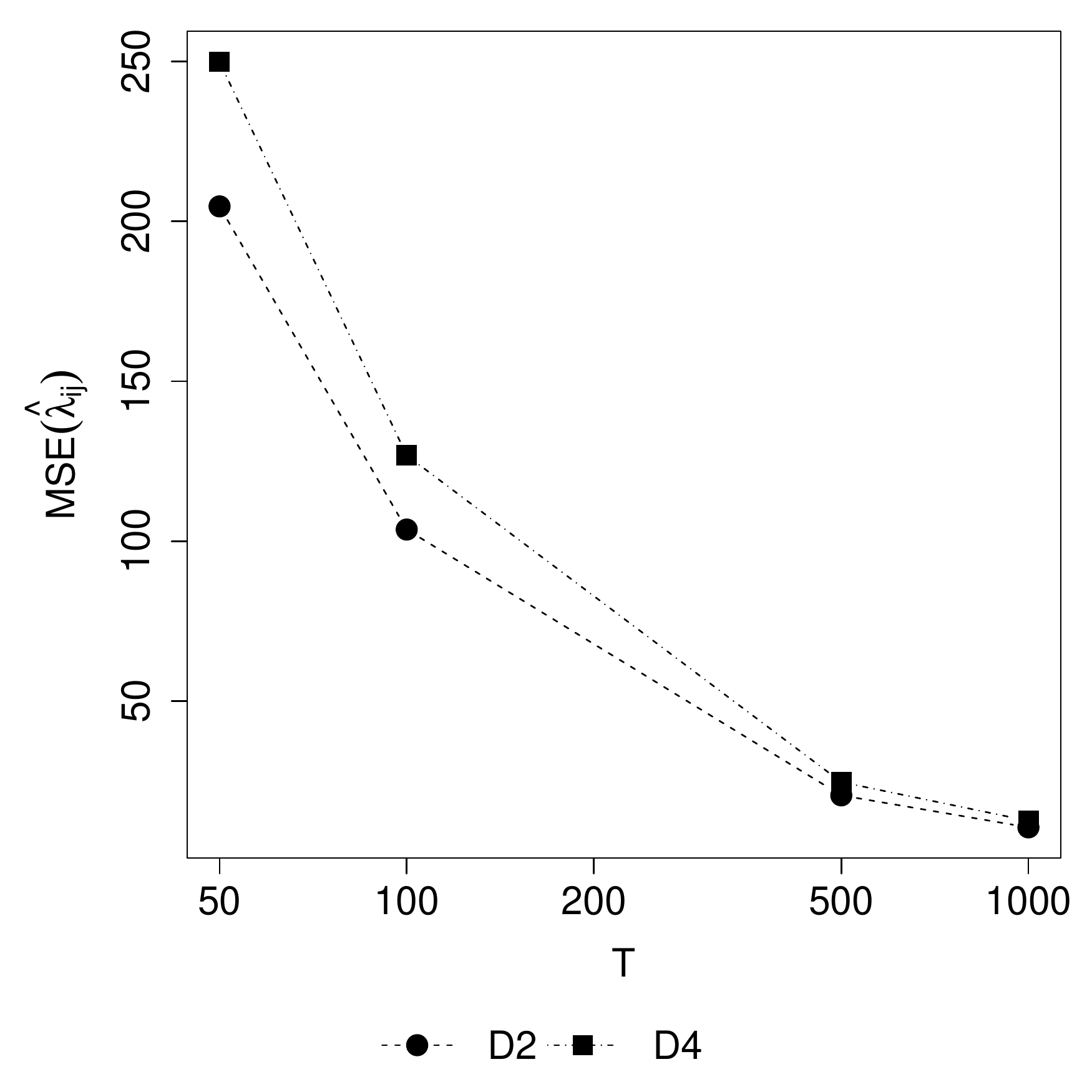}}
		\subfloat[Bias of $\hat{\beta}_{1}$]{\includegraphics[width=0.25\textwidth]{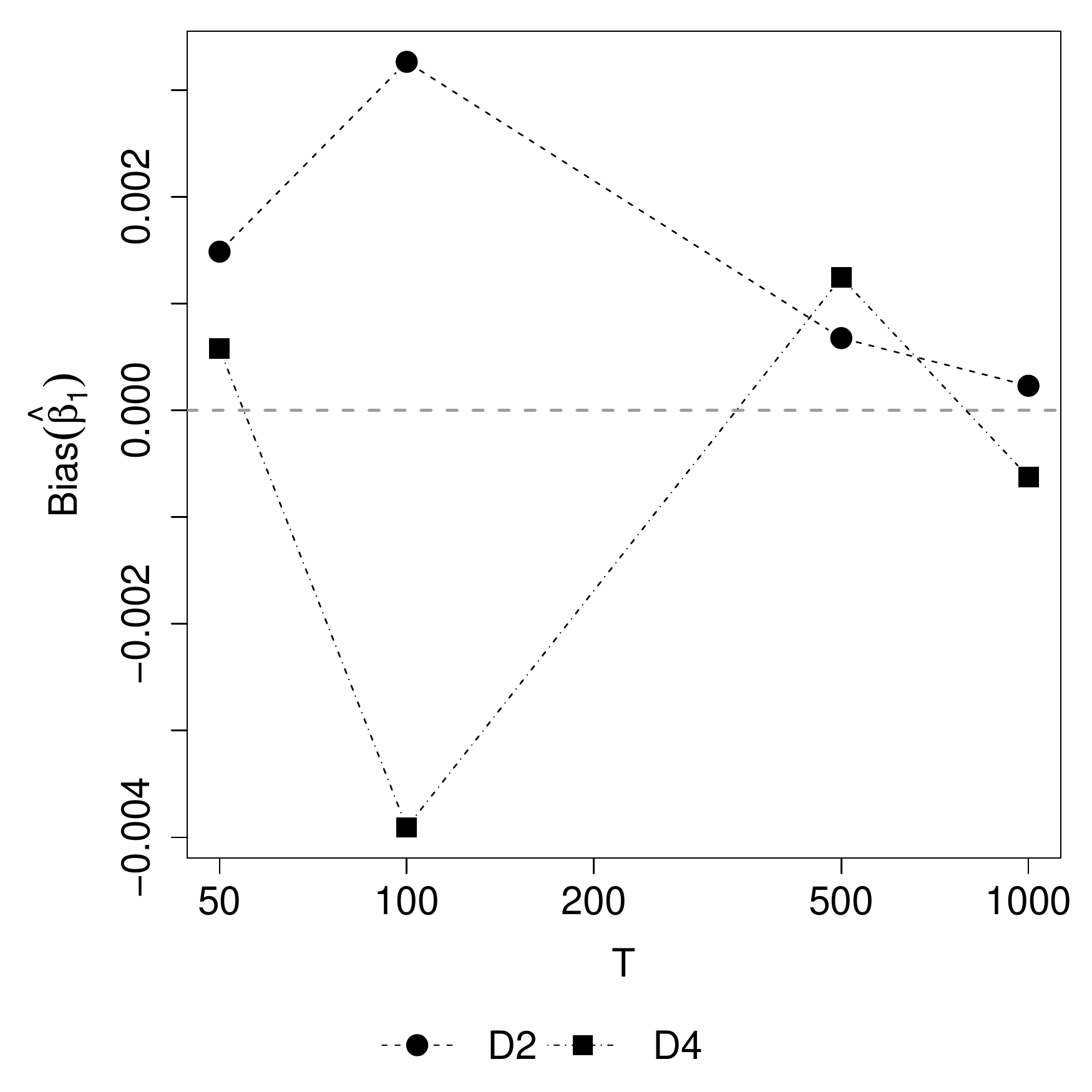}}
		\subfloat[MSE of $\hat{\beta}_{1}$]{\includegraphics[width=0.25\textwidth]{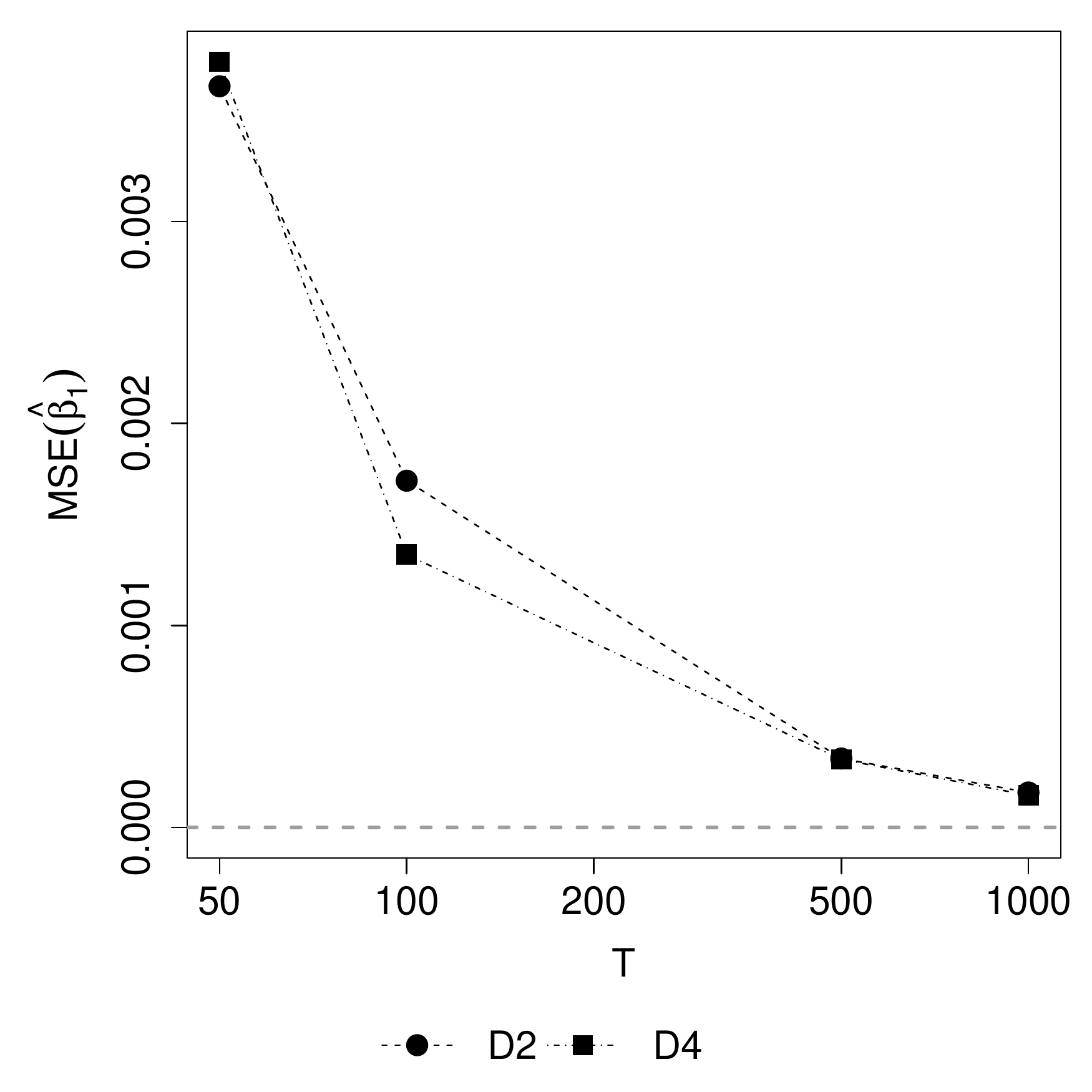}}
	\end{center}
	\caption{\label{fig:sim_gammaknown_asmp}Bias and mean squared error (MSE) in estimating the eigenvalues of the covariance matrices and the $\beta_{1}$ coefficient using CS-CAP with the number of subjects $n=50$ at various numbers of observations from each subject with $p=20$ when $\bgamma$ is known.}
\end{figure}

\subsection{\texorpdfstring{$\bgamma$}{} is unknown}
\label{sub:sim_gamma_unknown}

In this section, we evaluate the performance of the CS-CAP approach when $\bgamma$ is unknown and estimated by solving optimization problem~\eqref{eq:opt_Sstar} using Algorithm~\ref{alg:covreg}. The data are generated following the same procedure as in Section~\ref{sub:sim_gamma_known}. To evaluate the performance in estimating the projection $\bgamma$, we consider a similarity metric measured by $|\langle \hat{\bgamma},\bgamma\rangle|$, where $\langle\cdot,\cdot\rangle$ is the inner product of two vectors and $\hat{\bgamma}$ denotes the estimate of $\bgamma$. When this metric is one, the two vectors are identical (up to sign flipping); and when this metric is zero, the two vectors are orthogonal. Case where $p=100$ is studied. The performance of the CS-CAP approach is firstly compared to the LW-CAP approach with sample sizes $n=100$ and $T_{i}=T=100$. The results are presented in Table~\ref{table:sim_gammaunknown_comp}. From the table, the CS-CAP approach improves the performance with much lower MSE in estimating the eigenvalues, and lower MSE and higher coverage probability (CP) in estimating the $\beta$ coefficient. After iterations, the CS-CAP approach yields an estimate of the projection with much higher similarity to the truth. To further examine the performance of the CS-CAP approach under finite sample size, combinations of sample sizes $n=50,100,500,1000$ and $T_{i}=T=50,100,500,1000$ are considered. Figure~\ref{fig:sim_gammaunknown_asmp} presents the performance in estimating the second dimension (D2), including the bias, the MSE and the CP of $\hat{\beta}_{1}$, the MSE of $\hat{\lambda}_{ij}$, and the similarity of $\hat{\bgamma}$ to the eigenvector of D2 (Section~\ref{appendix:sub:sim_gamma_unknown} of the supplementary materials presents the results of the fourth dimension, D4). From the figure, as $n,T\rightarrow\infty$, all estimates converge to the truth.

\begin{table}
	\caption{\label{table:sim_gammaunknown_comp}Bias, mean squared error (MSE), and coverage probability (CP) from 500 bootstrap samples in estimating the $\beta_{1}$ coefficient, the similarity of $\hat{\bgamma}$ to $\bpi_{j}$ and the standard error (SE), and the MSE in estimating the eigenvalues $\hat{\lambda}_{ij}$, for $j=2,4$. Data dimension $p=100$, sample size $n=100$ and $T_{i}=T=100$.}
	\begin{center}
		\begin{tabular}{l l r r r c r c r}
			\hline
			& & \multicolumn{3}{c}{$\hat{\beta}_{1}$} && \multicolumn{1}{c}{$\hat{\bgamma}$} && \multicolumn{1}{c}{$\hat{\lambda}_{ij}$} \\
			\cline{3-5}\cline{7-7}\cline{9-9}
			& \multicolumn{1}{c}{\multirow{-2}{*}{Method}} & \multicolumn{1}{c}{Bias} & \multicolumn{1}{c}{MSE} & \multicolumn{1}{c}{CP} && \multicolumn{1}{c}{$|\langle\hat{\bgamma},\bpi_{2}\rangle|$ (SE)} && \multicolumn{1}{c}{MSE} \\
			\hline
			& LW-CAP & -0.027 & 0.002 & 0.782 && 0.653 (0.033) && 1812.091 \\
			\multirow{-2}{*}{D2} & CS-CAP & -0.023 & 0.001 & 0.855 && 0.931 (0.012) && 173.225 \\
			\hline
			& LW-CAP & 0.018 & 0.002 & 0.770 && 0.666 (0.027) && 2186.265 \\
			\multirow{-2}{*}{D4} & CS-CAP & 0.019 & 0.001 & 0.845 && 0.926 (0.011) && 231.856 \\
			\hline
		\end{tabular}
	\end{center}
\end{table}
\begin{figure}
	\begin{center}
		\subfloat[Bias of $\hat{\beta}_{1}$]{\includegraphics[width=0.3\textwidth]{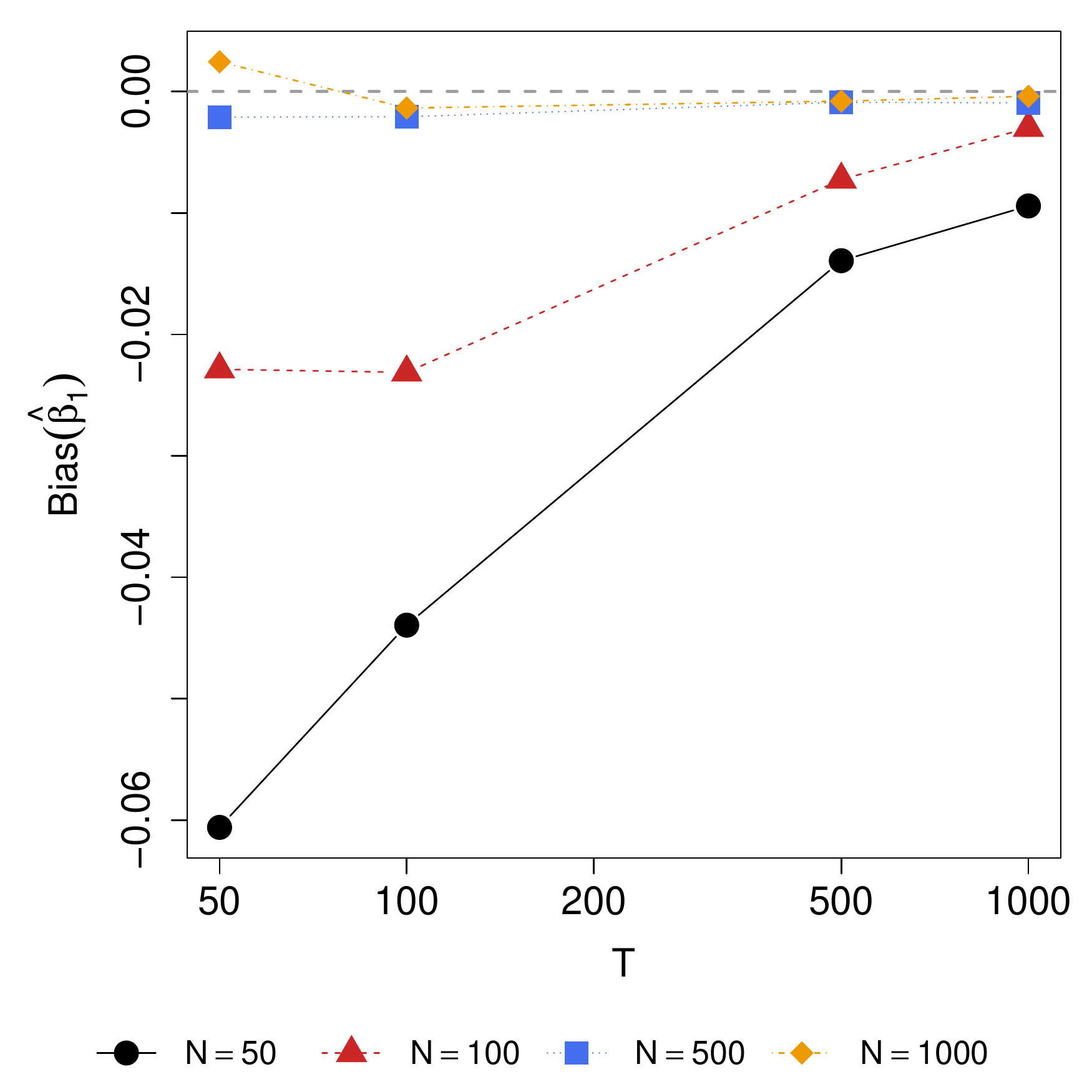}}
		\enskip{}
		\subfloat[MSE of $\hat{\beta}_{1}$]{\includegraphics[width=0.3\textwidth]{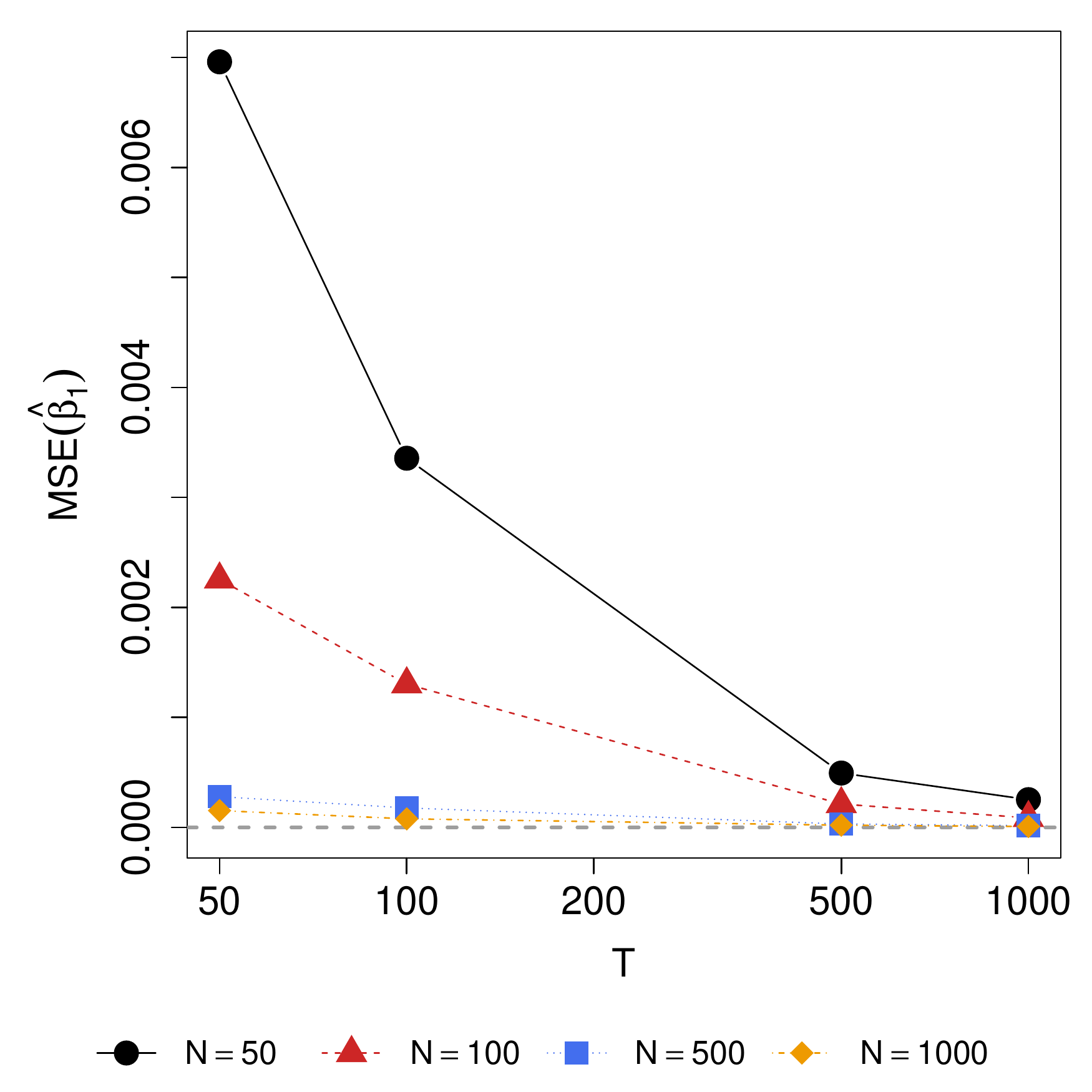}}
		\enskip{}
		\subfloat[Coverage probability of $\hat{\beta}_{1}$]{\includegraphics[width=0.3\textwidth]{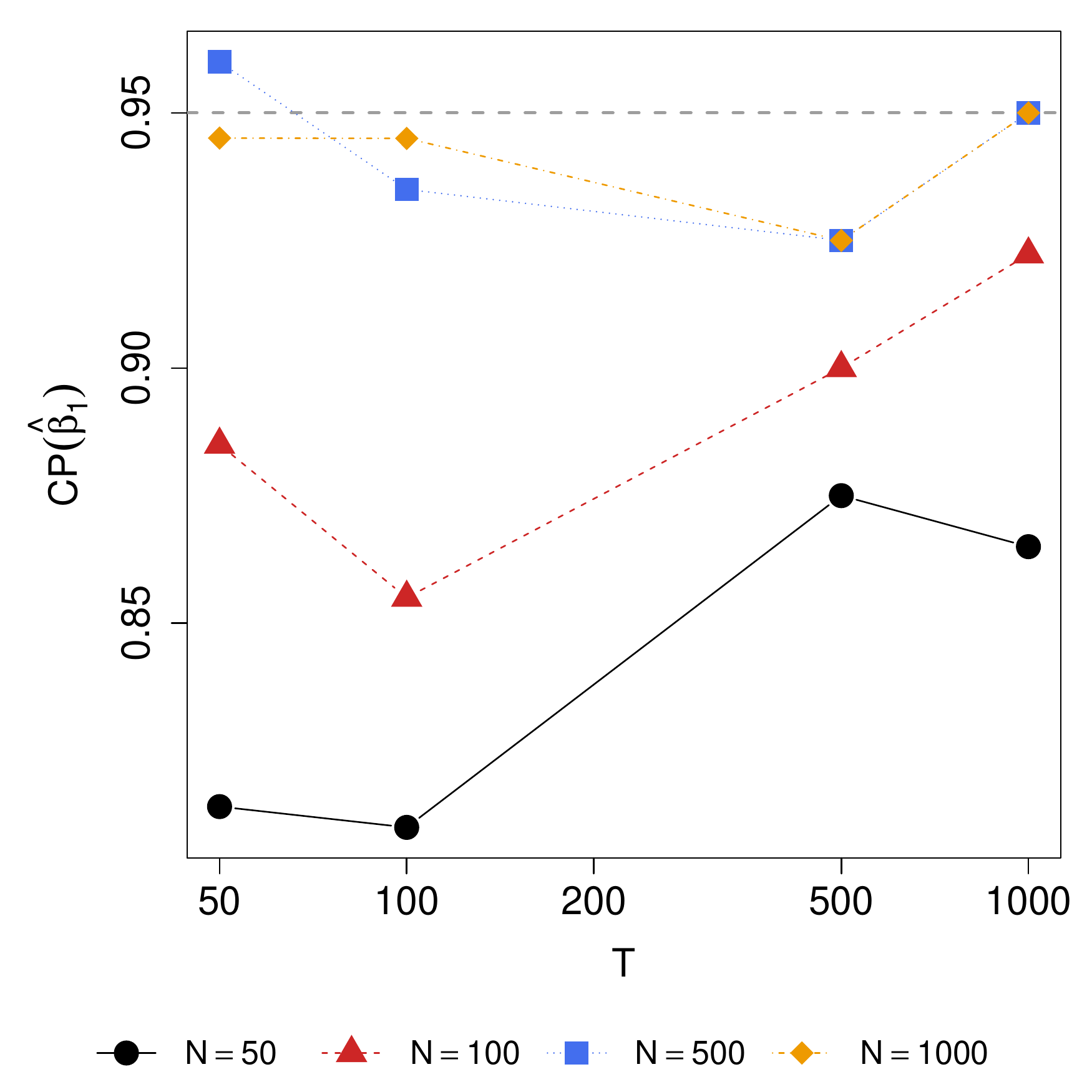}}

		\subfloat[MSE of $\hat{\lambda}_{ij}$]{\includegraphics[width=0.3\textwidth]{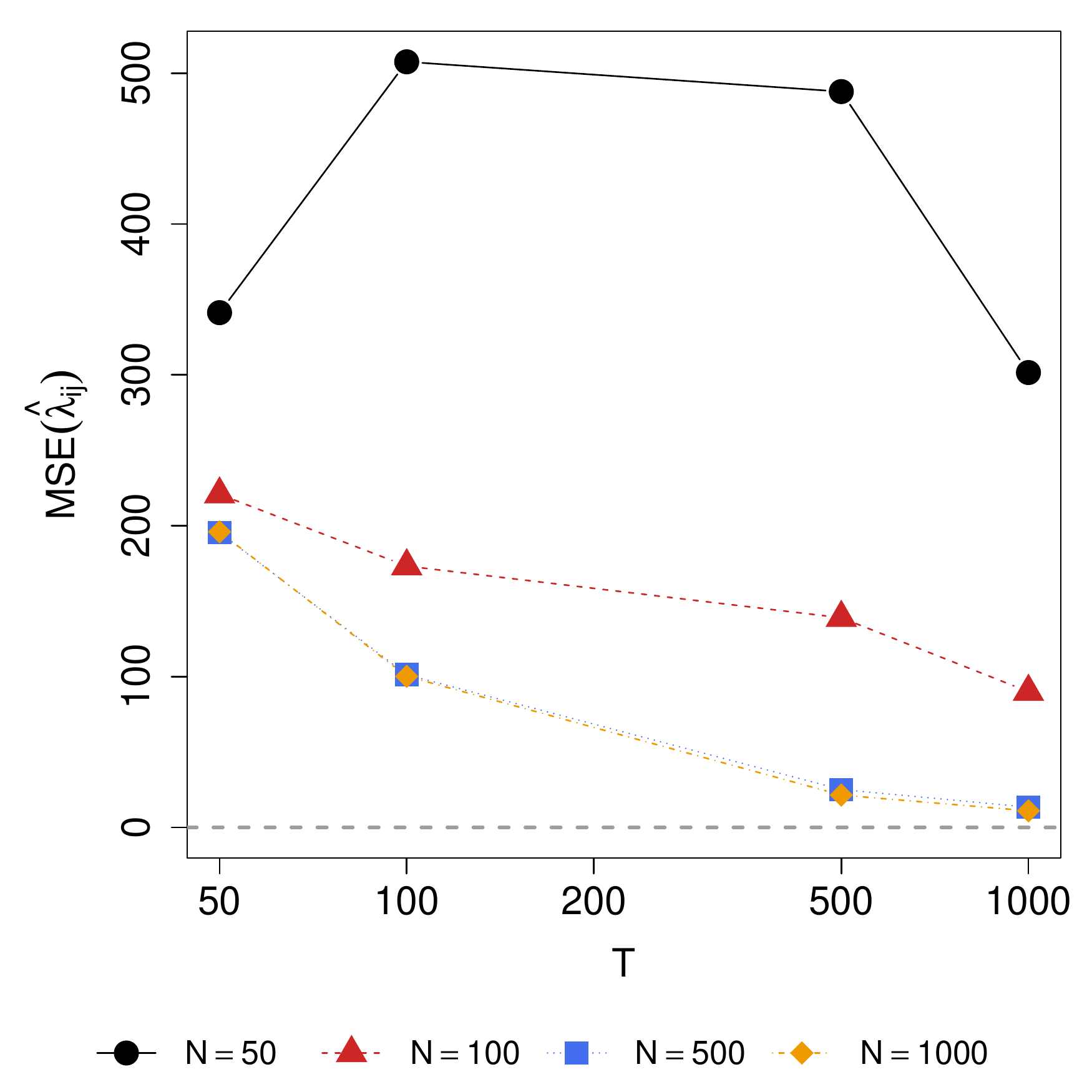}}
		\enskip{}
		\subfloat[Similarity metric $|\langle\hat{\bgamma},\bpi_{2}\rangle|$]{\includegraphics[width=0.3\textwidth]{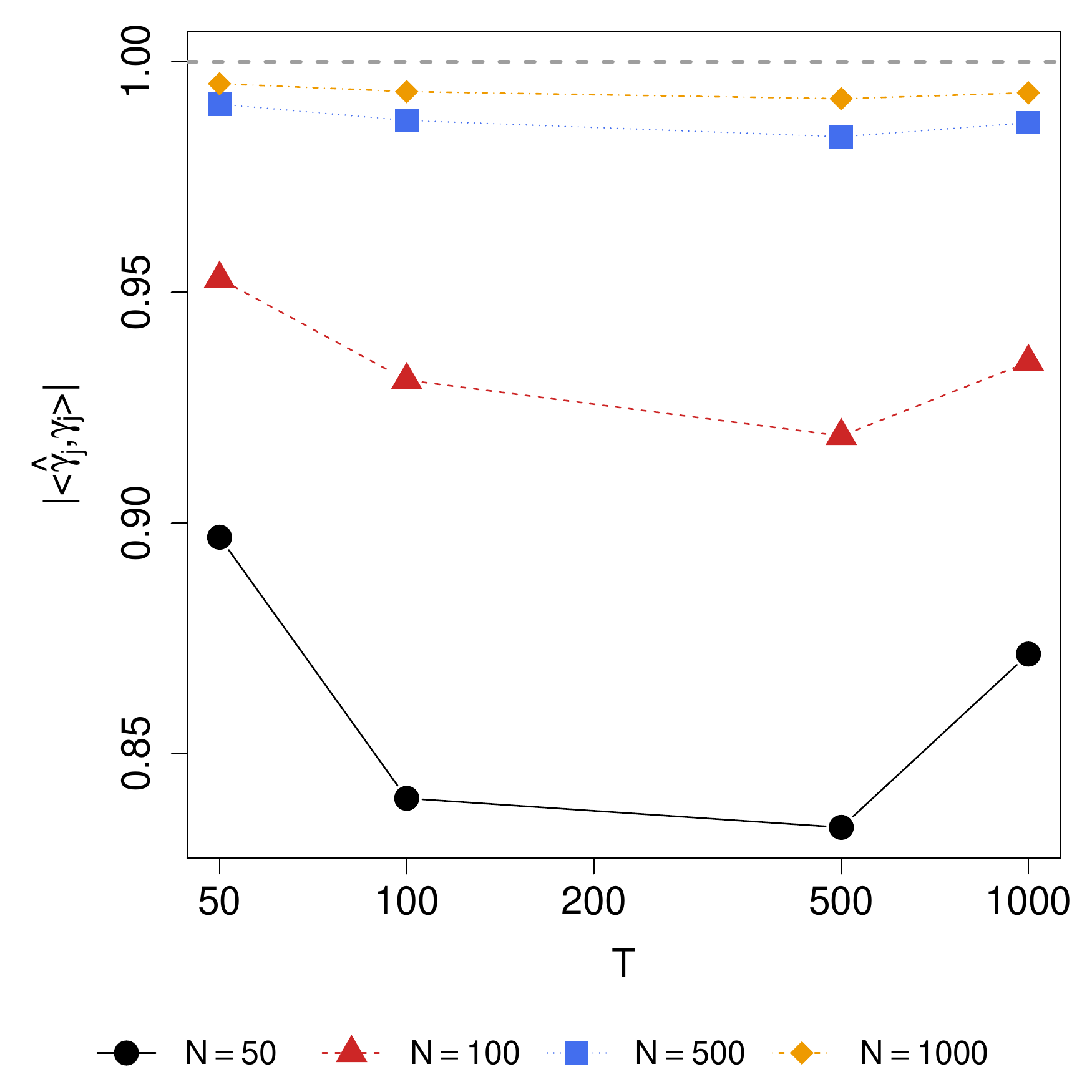}}
	\end{center}
	\caption{\label{fig:sim_gammaunknown_asmp}Estimation performance of CS-CAP in estimating the second dimension (D2) when $\bgamma$ is unknown. For $\hat{\beta}_{1}$, (a) bias, (b) mean squared error (MSE) and (c) coverage probability (CP) are presented, where CP is obtained from 500 bootstrap samples. For the eigenvalues $\hat{\lambda}_{ij}$, (d) MSE is presented. For $\hat{\bgamma}$, (e) similarity to $\bpi_{2}$ is presented. Data dimension $p=100$. Sample sizes vary from $n=50,100,500,100$ and $T_{i}=T=50,100,500,1000$.}
\end{figure}


\section{The Alzheimer's Disease Neuroimaging Initiative Study}
\label{sec:real}

Data used in this study are obtained from the Alzheimer's Disease Neuroimaging Initiative (ADNI) database (\url{adni.loni.usc.edu}). The ADNI was launched in 2003 as a public-private partnership, led by Principal Investigator Michael W. Weiner, MD. The primary goal of ADNI has been to test whether serial magnetic resonance imaging (MRI), positron emission tomography (PET), other biological markers, and clinical and neuropsychological assessment can be combined to measure the progression of mild cognitive impairment (MCI) and early Alzheimer's disease (AD). 

We apply the proposed approach to ADNI resting-state functional magnetic resonance imaging (fMRI) data acquired at the baseline screening. AD is an irreversible neurodegenerative disease that destroys memory and related brain functions causing problems in cognition and behavior. 
Apolipoprotein E $\varepsilon$4 (APOE-$\varepsilon$4) has been consistently identified as a strong genetic risk factor for AD. With an increasing number of APOE-$\varepsilon$4 alleles, the lifetime risk of developing AD increases, and the age of onset decreases~\citep{corder1993gene}. Thus, APOE-$\varepsilon$4 is generally treated as a potential therapeutic target~\citep{safieh2019apoe4}. In AD studies, resting-state fMRI is another emerging biomarker for diagnosis~\citep{koch2012diagnostic}. It is important to articulate the genetic impact on brain functional architecture. In this study, $n=194$ subjects diagnosed with either MCI or AD are analyzed. Resting-state fMRI data collected at the initial screening are preprocessed. Time courses are extracted from $p=75$ brain regions, including 60 cortical and 15 subcortical regions grouped into 10 functional modules, using the Harvard-Oxford Atlas in FSL~\citep{smith2004advances}. For each time course, a subsample is taken with an effective sample size of $T=67$ to remove the temporal dependence. In the regression model, APOE-$\varepsilon$4, sex and age are entered as the covariates.

The CS-CAP approach is applied to identify brain subnetworks within which the functional connectivity demonstrates a significant association with APOE-$\varepsilon$4. Using the deviation from diagonality criterion, CS-CAP identifies three components. The model coefficients and 95\% bootstrap confidence interval from 500 bootstrap samples are presented in Table~\ref{table:real_beta}. From the table, C3 is significantly associated with APOE-$\varepsilon$4 and age; C1 and C2 are significantly associated with sex and age. To better interpret C3, a fused lasso regression~\citep{tibshirani2005sparsity} is employed to sparsify the loading profile, similarly as in the sparse principal component analysis proposed in \citet{zou2006sparse}. The fused lasso regularization is defined based on the modular information to impose local smoothness and consistency~\citep{grosenick2013interpretable,zhao2020sparse}. Figure~\ref{subfig:real_C3_loading} presents the sparse loading profile colored by the corresponding functional module, and Figure~\ref{subfig:real_C3_riverplot} is the river plot illustrating the loading configuration. In C3, all regions with negative loadings are subcortical regions. Contributions to positive loadings are from regions in the default mode network (DMN), the ventral- and dorsal-attention networks, and the somato-motor network. Figure~\ref{subfig:real_C3_brainmap} presents these regions on a brain map. C3 is negatively associated with APOE-$\varepsilon$4 indicating that functional connectivity between regions in the same sign among APOE-$\varepsilon$4 carriers is lower, while connectivity between regions in the opposite signs among APOE-$\varepsilon$4 carriers is higher. The findings are in line with existing knowledge about AD. Compared to APOE-$\varepsilon$4 non-carriers, more functional connectivity between the left hippocampus and the insular/prefrontal cortex while more functional disconnection of the hippocampus has been observed in APOE-$\varepsilon$4 carriers~\citep{de2017apoe}. Alterations in DMN connectivity in cognitively normal APOE-$\varepsilon$4 carriers have been reported across all age groups~\citep{badhwar2017resting}. Increased connectivity in the limbic system, including the hippocampus, the amygdala and the thalamus, has been detected in individuals with memory impairment~\citep{gour2011basal,gour2014functional}, though the effect of APOE-$\varepsilon$4 carriage lacks consensus~\citep{badhwar2017resting}. It was shown that the limbic hyperconnectivity is positively associated with the memory performance, suggesting the preservation of brain function due to increased connectivity in the medial temporal lobe pathology~\citep{gour2014functional}.

\begin{table}
	\caption{\label{table:real_beta}Model coefficient estimate and 95\% bootstrap confidence interval using the PS-CAP approach. The intervals are obtained over 500 bootstrap samples.}
	\begin{center}
		\begin{tabular}{l r r r}
			\hline
			& \multicolumn{1}{c}{APOE-$\varepsilon$4} & \multicolumn{1}{c}{Sex} & \multicolumn{1}{c}{Age} \\
			\hline
			C1 & $0.012$ $(-0.031, ~~0.263)$ & $-0.431$ $(-0.636, -0.230)$ & $-0.227$ $(-0.319, -0.129)$ \\
			C2 & $0.049$ $(-0.191, ~~0.309)$ & $-0.544$ $(-0.867, -0.186)$ & $-0.232$ $(-0.383, -0.066)$ \\
			C3 & $-0.156$ $(-0.270, -0.045)$ & $-0.061$ $(-0.201, ~~0.075)$ & $-0.241$ $(-0.328, -0.172)$ \\
			\hline
		\end{tabular}
	\end{center}
\end{table}
\begin{figure}
	\begin{center}
		\subfloat[\label{subfig:real_C3_loading}Sparse loading profile of C3.]{\includegraphics[width=\textwidth]{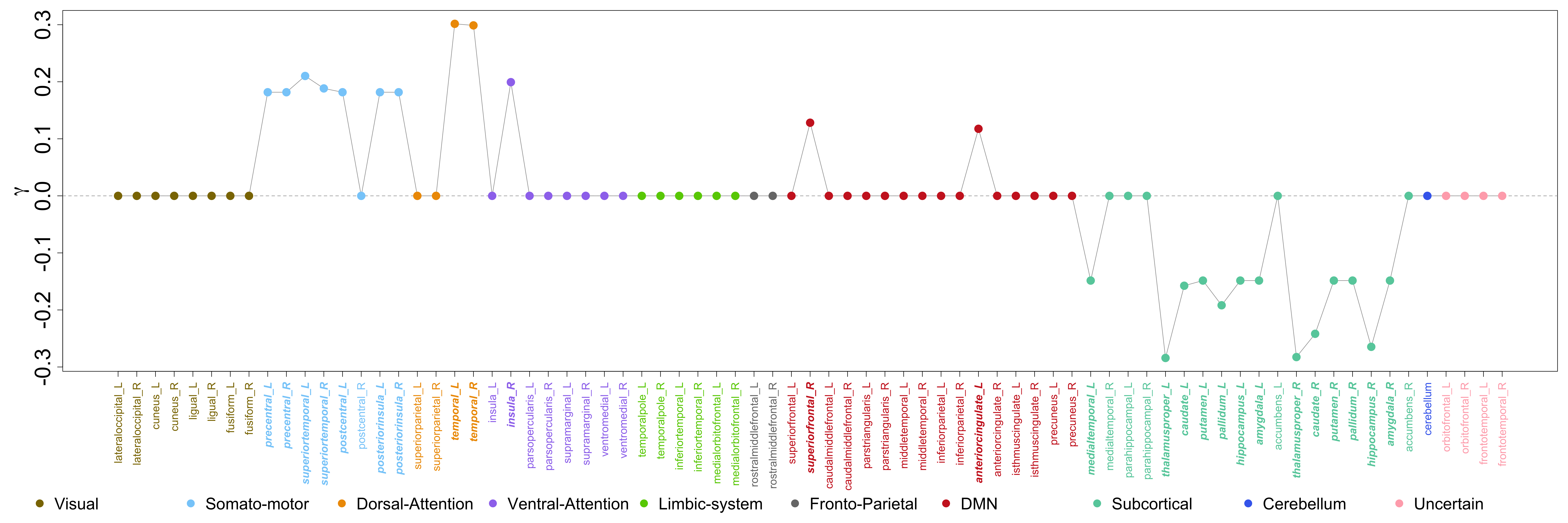}}

		\subfloat[\label{subfig:real_C3_riverplot}River plot of C3 loading.]{{\includegraphics[width=0.45\textwidth,trim={3cm 25cm 5cm 25cm},clip]{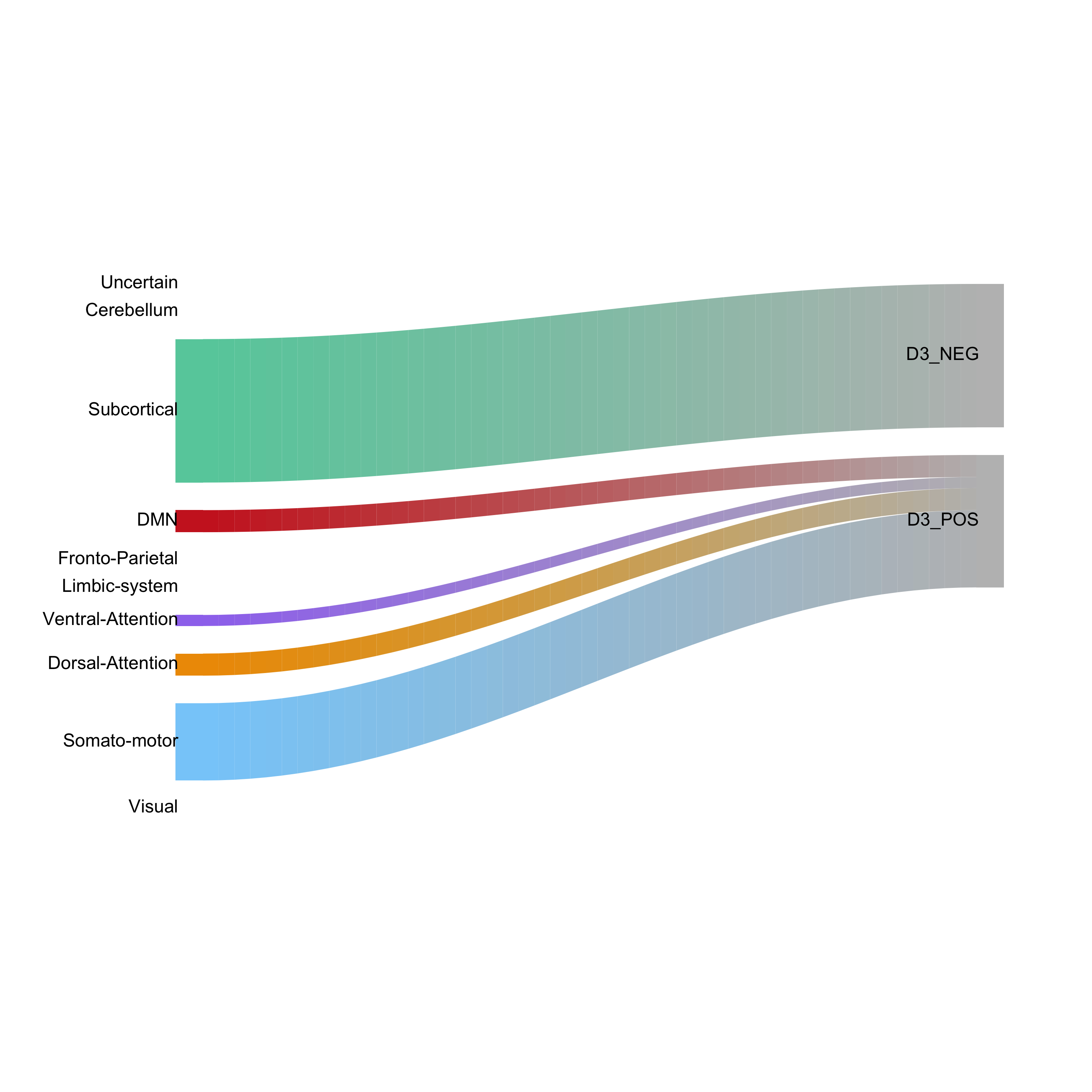}}}
		\enskip{}
		\subfloat[\label{subfig:real_C3_brainmap}Brain map of C3.]{\includegraphics[width=0.45\textwidth]{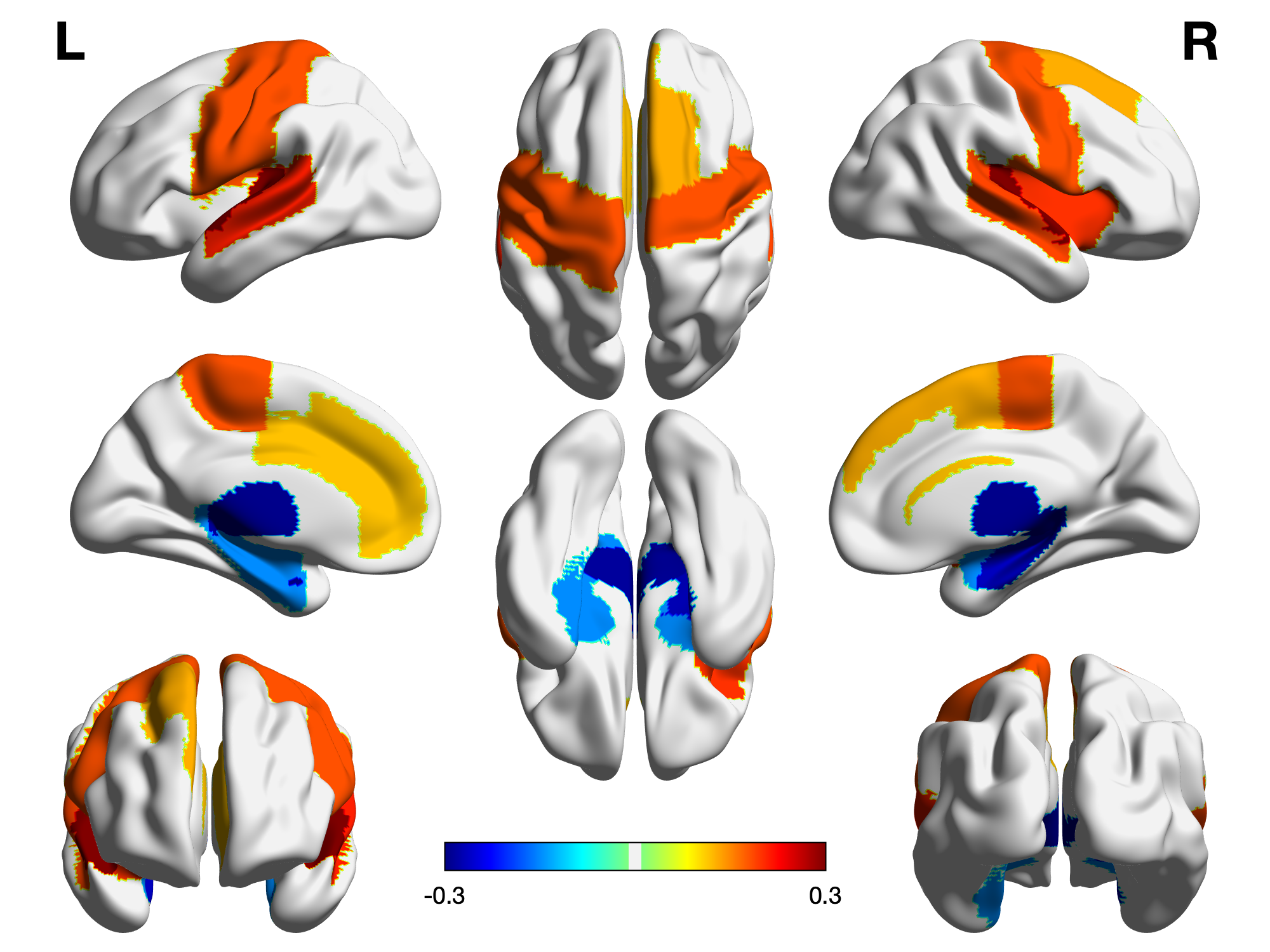}}
	\end{center}
	\caption{\label{fig:real_C3}(a)The sparsified loading profile, (b) the module river plot, and (c) regions with nonzero loadings in a brain map of C3. In (a) and (b), the figure and the legend are colored by brain functional modules. In (c), the brain maps are colored by the loading weights.}
\end{figure}

\section{Discussion}
\label{sec:discussion}

In this study, we introduce an approach to perform linear regression with multiple high dimensional covariance matrices as the outcome. A linear shrinkage estimator of the covariance matrix is firstly introduced, where the shrinkage coefficients are shared parameters across subjects. It is showed that the proposed estimator is optimal achieving the uniformly minimum quadratic loss asymptotically among all linear combinations of the identity matrix and the sample covariance matrix. Utilizing the well-conditioned estimator of the covariance matrices, a pseudo-likelihood based approach is considered to estimate the linear projection parameter and the model coefficient. Through simulation studies, the proposed approach demonstrates superior performance in estimating the covariance matrices and the model coefficients with lower estimation bias and variation over the existing methods. Applying to a resting-state fMRI data set acquired from ADNI, the findings are consistent with existing knowledge about AD. 

The proposed framework extends the proposal in \citet{zhao2019covariate} to high dimensional scenario. When $p$ is small, the proposed shrinkage estimator demonstrates lower squared loss than the sample covariance matrix as suggested in both theoretical results and simulation studies. Different from the linear shrinkage estimator introduced in \citet{ledoit2004well}, which was proposed for a single covariance matrix estimation, the shrinkage coefficients considered in this study are population level parameters shared across subjects. This is superior than the individual shrinkage as the proposed one leverages the accuracy of the sample covariance matrix and the variability in the eigenvalues across subjects.

In this study, the asymptotic properties are studied under the assumption that the covariance matrices have the same eigendecomposition. We leave the study of the consistency relaxing this assumption to future research.
The proposed shrinkage estimator is optimal with respect to a squared risk. However, this may overshrink the small eigenvalues~\citep{daniels2001shrinkage}. Other types of loss function, such as the Stein's loss, will be considered in the future.



\section*{Acknowledgments}


Data collection and sharing for this project was funded by the Alzheimer's Disease Neuroimaging Initiative (ADNI) (National Institutes of Health Grant U01 AG024904) and DOD ADNI (Department of Defense award number W81XWH-12-2-0012). ADNI is funded by the National Institute on Aging, the National Institute of Biomedical Imaging and Bioengineering, and through generous contributions from the following: AbbVie, Alzheimer's Association; Alzheimer's Drug Discovery Foundation; Araclon Biotech; BioClinica, Inc.; Biogen; Bristol-Myers Squibb Company; CereSpir, Inc.; Cogstate; Eisai Inc.; Elan Pharmaceuticals, Inc.; Eli Lilly and Company; EuroImmun; F. Hoffmann-La Roche Ltd and its affiliated company Genentech, Inc.; Fujirebio; GE Healthcare; IXICO Ltd.; Janssen Alzheimer Immunotherapy Research \& Development, LLC.; Johnson \& Johnson Pharmaceutical Research \& Development LLC.; Lumosity; Lundbeck; Merck \& Co., Inc.; Meso Scale Diagnostics, LLC.; NeuroRx Research; Neurotrack Technologies; Novartis Pharmaceuticals Corporation; Pfizer Inc.; Piramal Imaging; Servier; Takeda Pharmaceutical Company; and Transition Therapeutics. The Canadian Institutes of Health Research is providing funds to support ADNI clinical sites in Canada. Private sector contributions are facilitated by the Foundation for the National Institutes of Health (\url{www.fnih.org}). The grantee organization is the Northern California Institute for Research and Education, and the study is coordinated by the Alzheimer’s Therapeutic Research Institute at the University of Southern California. ADNI data are disseminated by the Laboratory for Neuro Imaging at the University of Southern California.

\clearpage

\appendix
\counterwithin{figure}{section}
\counterwithin{table}{section}
\counterwithin{equation}{section}
\counterwithin{lemma}{section}
\counterwithin{theorem}{section}

\bigskip\bigskip
\begin{center}
  \Large{\textbf{Supplementary Materials}}
\end{center}

This supplementary material collects the technical proof of the theorems in the main text and additional simulation results.
\section{Theory and Proof}
\label{appendix:sec:proof}

\subsection{Proof of Theorem~\ref{thm:Sigma_solution} and Lemma~\ref{lemma:par_relation}}
\label{appendix:sub:proof_SigmaSolution}

\begin{proof}
  Given $(\bgamma,\bbeta)$, $\mathbb{E}(\bgamma^\top\bS_{i}\bgamma)=\bgamma^\top\Sigma_{i}\bgamma=\exp(\bx_{i}^\top\bbeta)$. For the objective function in \eqref{eq:opt_cov}, under the constraint that $\Sigma_{i}^{*}=\rho\mu\boldsymbol{\mathrm{I}}+(1-\rho)\bS_{i}$, we have
  \begin{eqnarray*}
    f(\mu,\rho) &=& \frac{1}{n}\sum_{i=1}^{n}\mathbb{E}\left\{\bgamma^\top\Sigma_{i}^{*}\bgamma-\exp(\bx_{i}^\top\bbeta)\right\}^{2} \\
    &=& \frac{1}{n}\sum_{i=1}^{n}\left[\rho^{2}\left\{\mu(\bgamma^\top\bgamma)-\exp(\bx_{i}^\top\bbeta)\right\}^{2}+(1-\rho)^{2}\mathbb{E}\left\{\bgamma^\top\bS_{i}\bgamma-\exp(\bx_{i}^\top\bbeta)\right\}^{2}\right].
  \end{eqnarray*}
  In order to minimize the objective function, as the objective function is convex, derivatives are firstly taken over $\mu$ and $\rho$.

  For $\mu$,
  \[
    \frac{\partial f}{\partial\mu}=\rho^{2}\frac{1}{n}\sum_{i=1}^{n}2\left\{\mu(\bgamma^\top\bgamma)-\exp(\bx_{i}^\top\bbeta)\right\}(\bgamma^\top\bgamma)=0,
  \]
  \[
    \Rightarrow \quad \mu=\frac{1}{n(\bgamma^\top\bgamma)}\sum_{i=1}^{n}\exp(\bx_{i}^\top\bbeta).
  \]

  For $\rho$, let $\phi_{i}^{2}=\{\mu(\bgamma^\top\bgamma)-\exp(\bx_{i}^\top\bbeta)\}^{2}$ and $\psi_{i}^{2}=\mathbb{E}\{\bgamma^\top\bS_{i}\bgamma-\exp(\bx_{i}^\top\bbeta)\}^{2}$,
  \[
    \frac{\partial f}{\partial\rho}=2\rho\left(\frac{1}{n}\sum_{i=1}^{n}\phi_{i}^{2}\right)-2(1-\rho)\left(\frac{1}{n}\sum_{i=1}^{n}\psi_{i}^{2}\right)=0,
  \]
  \[
    \Rightarrow \quad \rho=\frac{\sum_{i=1}^{n}\psi_{i}^{2}}{\sum_{k=1}^{n}\phi_{i}^{2}+\sum_{i=1}^{n}\psi_{i}^{2}}.
  \]
  Let $\delta_{i}^{2}=\mathbb{E}\{\bgamma^\top\bS_{i}\bgamma-\mu(\bgamma^\top\bgamma)\}^{2}$, then $\delta_{i}^{2}=\phi_{i}^{2}+\psi_{i}^{2}$. Let $\phi^{2}=\sum_{i=1}^{n}\phi_{i}^{2}/n$, $\psi^{2}=\sum_{i=1}^{n}\psi_{i}^{2}/n$, and $\delta^{2}=\sum_{i=1}^{n}\delta_{i}^{2}/n$ (thus, $\delta^{2}=\phi^{2}+\psi^{2}$), the optimizer of problem \eqref{eq:opt_cov} is
  \[
    \Sigma_{i}^{*}=\frac{\psi^{2}}{\delta^{2}}\mu\boldsymbol{\mathrm{I}}+\frac{\phi^{2}}{\delta^{2}}\bS_{i}, \quad i=1,\dots,n.
  \]
  The minimum value of the function is
  \begingroup
  \allowdisplaybreaks
  \begin{eqnarray*}
    && \frac{1}{n}\sum_{i=1}^{n}\mathbb{E}\left\{\bgamma^\top\Sigma_{i}^{*}\bgamma-\exp(\bx_{i}^\top\bbeta)\right\}^{2} \\
    &=& \frac{1}{n}\sum_{i=1}^{n}\mathbb{E}\left\{\frac{\psi^{2}}{\delta^{2}}\mu\bgamma^\top\bgamma+\frac{\phi^{2}}{\delta^{2}}\bgamma^\top\bS_{i}\bgamma-\frac{\psi^{2}+\phi^{2}}{\delta^{2}}\exp(\bx_{i}^\top\bbeta)\right\}^{2} \\
    &=& \frac{1}{n}\sum_{i=1}^{n}\left[\mathbb{E}\left\{\frac{\psi^{2}}{\delta^{2}}\mu\bgamma^\top\bgamma-\frac{\psi^{2}}{\delta^{2}}\exp(\bx_{i}^\top\bbeta)\right\}^{2}+\mathbb{E}\left\{\frac{\phi^{2}}{\delta^{2}}\bgamma^\top\bS_{i}\bgamma-\frac{\phi^{2}}{\delta^{2}}\exp(\bx_{i}^\top\bbeta)\right\}^{2}\right] \\
    &=& \frac{1}{n}\sum_{i=1}^{n}\left(\frac{\psi^{4}}{\delta^{4}}\phi_{i}^{2}+\frac{\phi^{4}}{\delta^{4}}\psi_{i}^{2}\right) \\
    &=& \frac{\psi^{4}\phi^{2}+\phi^{4}\psi^{2}}{\delta^{4}} \\
    &=& \frac{\phi^{2}\psi^{2}}{\delta^{2}}.
  \end{eqnarray*}
  \endgroup
\end{proof}

\subsection{Proof of Proposition~\ref{prop:asymp_alg1}}
\label{appendix:sub:proof_asymp_alg1}

\begin{proof}
  Under Assumptions A2 and A5, the eigenvectors of $\bar{\bS}$ are consistent estimators of $\Pi$. Replace $\bgamma$ with its estimate in Theorems~\ref{thm:Si_consistent}--\ref{thm:optimal2} and Theorem~\ref{thm:beta_consist}, the consistency of $\bbeta$ follows.
\end{proof}

\subsection{Proof of Lemma~\ref{lemma:shpar_bound}}

\begin{proof}
  \begin{enumerate}[(1)]
    \item For $\mu$,
      \[
        \mu=\frac{1}{n(\bgamma^\top\bgamma)}\sum_{i=1}^{n}\exp(\bx_{i}^\top\bbeta)=\frac{1}{n}\sum_{i=1}^{n}\frac{\bgamma^\top\Sigma_{i}\bgamma}{\bgamma^\top\bgamma}\leq\frac{1}{n}\sum_{i=1}^{n}\|\Sigma_{i}\|_{2}^{2}.
      \] 
    Under Assumption A2,
      \begin{eqnarray*}
        \frac{1}{n}\sum_{i=1}^{n}\|\Sigma_{i}\|_{2}^{2} &=& \frac{1}{n}\sum_{i=1}^{n}\|\Lambda_{i}\|_{2}^{2} \\
        &\leq& \frac{1}{n}\sum_{i=1}^{n}\|\Lambda_{i}\|_{F}^{2} \\
        &=& \frac{1}{n}\sum_{i=1}^{n}\left\{\frac{1}{p}\sum_{j=1}^{p}\mathbb{E}(z_{i1j}^{2})^{2}\right\} \\
        &=& \frac{1}{n}\sum_{i=1}^{n}\left\{\frac{1}{p}\sum_{j=1}^{p}\mathbb{E}(z_{i1j}^{4})\right\} \\
        &\leq& \frac{1}{n}\sum_{i=1}^{n}\sqrt{\frac{1}{p}\sum_{j=1}^{p}\mathbb{E}(z_{i1j})^{8}} \\
        &\leq& \frac{1}{n}\sum_{i=1}^{n}\sqrt{C_{2}} \\
        &=& \sqrt{C_{2}},
      \end{eqnarray*}
    where $\|\cdot\|_{F}$ is the Frobenius norm of a matrix.
    \item For $\phi^{2}$, upper limits of $\phi_{i}^{2}$ is derived first.
      \begin{eqnarray*}
        \phi_{i}^{2} &=& \left\{\mu(\bgamma^\top\bgamma)-\exp(\bx_{i}^\top\bbeta)\right\}^{2} \\
        &\leq& \mu^{2}(\bgamma^\top\bgamma)^{2}+\{\exp(\bx_{i}^\top\bbeta)\}^{2} \\
        &=& \mu^{2}(\bgamma^\top\bgamma)^{2}+(\bgamma^\top\Sigma_{i}\bgamma)^{2} \\
        &\leq& \left(\mu^{2}+\|\Sigma_{i}\|_{2}^{4}\right)(\bgamma^\top\bgamma)^{2}.
      \end{eqnarray*}
    From the above derivation, we have
      \[
        \mu^{2}\leq C_{2}, \text{ and } \|\Sigma_{i}\|_{2}^{2}=\|\Lambda_{i}\|_{2}^{2}\leq\|\Lambda_{i}\|_{F}^{2}\leq\sqrt{C_{2}}.
      \]
    Since $\bgamma$ is given, without loss of generality, assume that $\|\bgamma\|_{2}=1$, i.e., $\bgamma^\top\bgamma=1$. Then,
      \[
        \phi_{i}^{2}\leq 2C_{2}(\bgamma^\top\bgamma)=2C_{2}.
      \]
    Thus,
      \[
        \phi^{2}=\frac{1}{n}\sum_{i=1}^{n}\phi_{i}^{2}\leq 2C_{2}.
      \]
    \item For $\psi^{2}$, analogously, $\psi_{i}^{2}$ is considered first.
      \[
        \psi_{i}^{2}=\mathbb{E}\left\{\bgamma^\top\bS_{i}\bgamma-\exp(\bx_{i}^\top\bbeta)\right\}^{2}=\mathbb{E}\left\{\bgamma^\top(\bS_{i}-\Sigma_{i})\bgamma\right\}^{2}\leq(\bgamma^\top\bgamma)^{2}\mathbb{E}\|\bS_{i}-\Sigma_{i}\|_{2}^{2}
      \]
      \begingroup
      \allowdisplaybreaks
      \begin{eqnarray*}
        \mathbb{E}\|\bS_{i}-\Sigma_{i}\|_{F}^{2} &=& \frac{1}{p}\sum_{j=1}^{p}\sum_{k=1}^{p}\mathbb{E}\left\{\left(\frac{1}{T_{i}}\sum_{t=1}^{T_{i}}y_{itj}y_{itk}-\sigma_{ijk}\right)^{2}\right\} \\
        &=& \frac{1}{p}\sum_{j=1}^{p}\sum_{k=1}^{p}\mathbb{E}\left\{\left(\frac{1}{T_{i}}\sum_{t=1}^{T_{i}}z_{itj}z_{itk}-\lambda_{ijk}\right)^{2}\right\} \\
        &=& \frac{1}{p}\sum_{j=1}^{p}\sum_{k=1}^{p}\mathrm{Var}\left(\frac{1}{T_{i}}\sum_{t=1}^{T_{i}}z_{itj}z_{itk}\right) \\
        &=& \frac{1}{p}\sum_{j=1}^{p}\sum_{k=1}^{p}\frac{1}{T_{i}}\mathrm{Var}(z_{i1j}z_{i1k}) \\
        &\leq& \frac{1}{pT_{i}}\sum_{j=1}^{p}\sum_{k=1}^{p}\mathbb{E}(z_{i1j}^{2}z_{i1k}^{2}) \\
        &\leq& \frac{1}{pT_{i}}\sum_{j=1}^{p}\sum_{k=1}^{p}\sqrt{\mathbb{E}z_{i1j}^{4}}\sqrt{\mathbb{E}z_{i1k}^{4}} \\
        &\leq& \frac{p}{T_{i}}\left(\frac{1}{p}\sum_{j=1}^{p}\sqrt{\mathbb{E}z_{i1j}^{4}}\right)^{2} \\
        &\leq& \frac{p}{T_{i}}\left(\frac{1}{p}\sum_{j=1}^{p}\mathbb{E}z_{i1j}^{4}\right) \\
        &\leq& \frac{p}{T_{i}}\sqrt{\frac{1}{p}\sum_{j=1}^{p}\mathbb{E}z_{i1j}^{8}} \\
        &\leq& C_{1}\sqrt{C_{2}}
      \end{eqnarray*}
      \endgroup
    Thus, for $\psi^{2}$,
      \[
        \psi^{2}=\frac{1}{n}\sum_{i=1}^{n}\psi_{i}^{2}\leq\frac{1}{n}\sum_{i=1}^{n}(\bgamma^\top\bgamma)^{2}C_{1}\sqrt{C_{2}}=C_{1}\sqrt{C_{2}}.
      \]
    \item Finally, for $\delta^{2}$, 
      \[
        \delta^{2}=\phi^{2}+\psi^{2}\leq 2C_{2}+C_{1}\sqrt{C_{2}}.
      \]
  \end{enumerate}
\end{proof}

\subsection{Proof of Lemma~\ref{lemma:shpar_consist}}
\begin{proof}
  In the proof of Lemma~\ref{lemma:shpar_consist}, here, it is assumed that $\bgamma$ is a column of $\Pi_{i}$ indexed by $j_{i}$, for $i=1,\dots,n$ (Assumption A4).
  \begin{enumerate}[(i)]
    \item First, we prove the consistency of $\hat{\delta}_{i}^{2}$.
      \begin{eqnarray*}
        \hat{\delta}_{i}^{2}-\delta_{i}^{2} &=& \left\{\bgamma^\top\bS_{i}\bgamma-\mu(\bgamma^\top\bgamma)\right\}^{2}-\mathbb{E}\left\{\bgamma^\top\bS_{i}\bgamma-\mu(\bgamma^\top\bgamma)\right\}^{2} \\
        &=& \left\{(\bgamma^\top\bS_{i}\bgamma)^{2}-\mathbb{E}(\bgamma^\top\bS_{i}\bgamma)^{2}\right\}-2\mu(\bgamma^\top\bgamma)\left\{(\bgamma^\top\bS_{i}\bgamma)-\mathbb{E}(\bgamma^\top\bS_{i}\bgamma)\right\}
      \end{eqnarray*}
    Under Assumption A4, 
      \[
        \bgamma^\top\bS_{i}\bgamma=\frac{1}{T_{i}}\sum_{t=1}^{T_{i}}\bgamma^\top\by_{it}\by_{it}^\top\bgamma=\frac{1}{T_{i}}\sum_{t=1}^{T_{i}}z_{itj_{i}}^{2}.
      \]
      \[
        (\bgamma^\top\bS_{i}\bgamma)^{2}=\frac{1}{T_{i}^{2}}\left(\sum_{t=1}^{T_{i}}z_{itj_{i}}^{2}\right)^{2}=\frac{1}{T_{i}^{2}}\sum_{t=1}^{T_{i}}z_{itj_{i}}^{4}+\frac{1}{T_{i}^{2}}\sum_{t\neq s}z_{itj_{i}}^{2}z_{isj_{i}}^{2}.
      \]
      \[
        \mathbb{E}(\bgamma^\top\bS_{i}\bgamma)^{2}=\frac{1}{T_{i}^{2}}T_{i}\mathbb{E}z_{i1j_{i}}^{4}+\frac{1}{T_{i}}T_{i}(T_{i}-1)\left(\mathbb{E}z_{itj_{i}}^{2}\right)^{2}=\frac{1}{T_{i}}\mathbb{E}z_{i1j_{i}}^{4}+\frac{T_{i}(T_{i}-1)}{T_{i}^{2}}(\bgamma^\top\Sigma_{i}\bgamma)^{2}.
      \]
    For $\forall~\epsilon>0$,
      \begin{eqnarray*}
        \mathbb{P}\left\{|(\bgamma^\top\bS_{i}\bgamma)-\mathbb{E}(\bgamma^\top\bS_{i}\bgamma)|\geq\epsilon\right\} &\leq& \frac{1}{\epsilon^{2}}\mathrm{Var}(\bgamma^\top\bS_{i}\bgamma) \\
        &=& \frac{1}{\epsilon^{2}}\left[\mathbb{E}(\bgamma^\top\bS_{i}\bgamma)^{2}-\left\{\mathbb{E}(\bgamma^\top\bS_{i}\bgamma)\right\}^{2}\right] \\
        &=& \frac{1}{\epsilon^{2}}\left\{\frac{1}{T_{i}^{2}}\mathbb{E}z_{i1j_{i}}^{4}+\frac{T_{i}(T_{i}-1)}{T_{i}^{2}}(\bgamma^\top\Sigma_{i}\bgamma)^{2}-(\bgamma^\top\Sigma_{i}\bgamma)^{2}\right\} \\
        &\overset{T_{i}\rightarrow\infty}{\longrightarrow}& 0.
      \end{eqnarray*}
      \begingroup
      \allowdisplaybreaks
      \begin{eqnarray*}
        && \mathbb{E}(\bgamma^\top\bS_{i}\bgamma)^{4} \\
        &=& \frac{1}{T_{i}^{4}}\mathbb{E}\left(\sum_{t=1}^{T_{i}}z_{itj_{i}}^{2}\right)^{4} \\
        &=& \frac{1}{T_{i}^{4}}\left\{\sum_{t}\mathbb{E}z_{itj_{i}}^{8}+2\sum_{t\neq s}\mathbb{E}z_{itj_{i}}^{4}z_{isj_{i}}^{4}+2\sum_{u}\mathbb{E}\left(z_{iuj_{i}}^{4}\sum_{t\neq s}z_{itj_{i}}^{2}z_{isj_{i}}^{2}\right)+\sum_{u\neq v}\sum_{t\neq s}\mathbb{E}\left(z_{itj_{i}}^{2}z_{isj_{i}}^{2}z_{iuj_{i}}^{2}z_{ivj_{i}}^{2}\right)\right\} \\
        &=& \frac{1}{T_{i}^{4}}\left\{T_{i}\mathbb{E}z_{i1j_{i}}^{8}+2T_{i}(T_{i}-1)(\mathbb{E}z_{i1j_{i}}^{4})^{2}+2T_{i}^{2}(T_{i}-1)\mathbb{E}z_{i1j_{i}}^{4}(\mathbb{E}z_{i1j_{i}}^{2})^{2}+T_{i}^{2}(T_{i}-1)^{2}(\mathbb{E}z_{i1j_{i}}^{2})^{4}\right\}.
      \end{eqnarray*}
      \endgroup
      \[
        \left\{\mathbb{E}(\bgamma^\top\bS_{i}\bgamma)^{2}\right\}^{2}=\frac{1}{T_{i}^{2}}(\mathbb{E}z_{i1j_{i}}^{4})^{2}+\frac{2T_{i}(T_{i}-1)}{T_{i}^{3}}\mathbb{E}z_{i1j_{i}}^{4}(\bgamma\Sigma_{i}\bgamma)^{2}+\frac{T_{i}^{2}(T_{i}-1)^{2}}{T_{i}^{4}}(\bgamma\Sigma_{i}\bgamma)^{4}.
      \]
    For $\forall~\epsilon>0$,
      \begin{eqnarray*}
        \mathbb{P}\left\{|(\bgamma^\top\bS_{i}\bgamma)^{2}-\mathbb{E}(\bgamma^\top\bS_{i}\bgamma)^{2}|\geq\epsilon\right\} &\leq& \frac{1}{\epsilon^{2}}\mathrm{Var}(\bgamma^\top\bS_{i}\bgamma)^{2} \\
        &=& \frac{1}{\epsilon^{2}}\left[\mathbb{E}(\bgamma^\top\bS_{i}\bgamma)^{4}-\left\{\mathbb{E}(\bgamma^\top\bS_{i}\bgamma)^{2}\right\}^{2}\right] \\
        &=& \frac{1}{\epsilon^{2}}\left\{\frac{1}{T_{i}^{3}}\mathbb{E}z_{i1j_{i}}^{8}+\frac{T_{i}-2}{T_{i}^{3}}(\mathbb{E}z_{i1j_{i}}^{4})^{2}\right\} \\
        &\overset{T_{i}\rightarrow\infty}{\longrightarrow}& 0.
      \end{eqnarray*}
    Therefore, as $T_{\min}=\min_{i}T_{i}\rightarrow\infty$,
      \[
        \mathbb{E}\left(\hat{\delta}_{i}^{2}-\delta_{i}^{2}\right)^{2}\rightarrow 0, \text{ for } i=1,\dots,n, \text{ and } \mathbb{E}\left(\hat{\delta}^{2}-\delta^{2}\right)^{2}\rightarrow 0.
      \]
    \item Secondly, prove the consistency of $\hat{\psi}_{i}^{2}$, for $i=1,\dots,n$.
      \[
        \hat{\psi}_{i}^{2}-\psi_{i}^{2}=\frac{1}{T_{i}}\left\{\bgamma^\top\bS_{i}\bgamma-\exp(\bx_{i}^\top\bbeta)\right\}^{2}-\mathbb{E}\left\{\bgamma^\top\bS_{i}\bgamma-\exp(\bx_{i}^\top\bbeta)\right\}^{2}.
      \]
      \begin{eqnarray*}
        \mathbb{E}\left\{\bgamma^\top\bS_{i}\bgamma-\exp(\bx_{i}^\top\bbeta)\right\}^{2} &=& \mathbb{E}\left\{\frac{1}{T_{i}}\sum_{t}z_{itj_{i}}^{2}-\exp(\bx_{i}^\top\bbeta)\right\}^{2} \\
        &=& \frac{1}{T_{i}^{2}}\sum_{t}\mathrm{Var}(z_{itj_{i}}^{2}) \\
        &=& \frac{1}{T_{i}}\mathrm{Var}(z_{i1j_{i}}^{2}).
      \end{eqnarray*}
      \begin{eqnarray*}
        \hat{\psi}_{i}^{2}-\psi_{i}^{2} &=& \frac{1}{T_{i}}\left[\left\{\bgamma^\top\bS_{i}\bgamma-\exp(\bx_{i}^\top\bbeta)\right\}^{2}-\mathrm{Var}(z_{i1j_{i}}^{2})\right] \\
        &=& \frac{1}{T_{i}}\left[(\bgamma^\top\bS_{i}\bgamma)^{2}-\mathbb{E}z_{i1j_{i}}^{4}-2\exp(\bx_{i}^\top\bbeta)\left\{\bgamma^\top\bS_{i}\bgamma-\exp(\bx_{i}^\top\bbeta)\right\}\right].
      \end{eqnarray*}
    From above derivation and the fact that $\mathbb{E}(\bgamma^\top\bS_{i}\bgamma)=\bgamma^\top\Sigma_{i}\bgamma=\exp(\bx_{i}^\top\bbeta)$, as $T_{i}\rightarrow\infty$, for $\forall~\epsilon>0$,
      \[
        \mathbb{P}\left\{|(\bgamma^\top\bS_{i}\bgamma)-\mathbb{E}(\bgamma^\top\bS_{i}\bgamma)|\geq\epsilon\right\}\rightarrow 0.
      \]
    As both $(\bgamma^\top\bS_{i}\bgamma)^{2}$ and $\mathbb{E}z_{i1j_{i}}^{4}$ are bounded, then, as $T_{\min}=\min_{i}T_{i}\rightarrow\infty$,
      \[
        \mathbb{E}\left(\hat{\psi}_{i}^{2}-\psi_{i}^{2}\right)^{2}\rightarrow0, \text{ for } i=1,\dots,n.
      \]
    Let $\tilde{\psi}_{i}^{2}=\min(\hat{\psi}_{i}^{2},\hat{\delta}_{i}^{2})$.
      \[
        \tilde{\psi}_{i}^{2}-\psi_{i}^{2}=\min(\hat{\psi}_{i}^{2},\hat{\delta}_{i}^{2})-\psi_{i}^{2}\leq\hat{\psi}_{i}^{2}-\psi_{i}^{2}\leq|\hat{\psi}_{i}^{2}-\psi_{i}^{2}|\leq\max\left(|\hat{\psi}_{i}^{2}-\psi_{i}^{2}|,|\hat{\delta}_{i}^{2}-\delta_{i}^{2}|\right).
      \]
    $\delta_{i}^{2}=\phi_{i}^{2}+\psi_{i}^{2}\geq\psi_{i}^{2}$, then
      \begin{eqnarray*}
        \tilde{\psi}_{i}^{2}-\psi_{i}^{2} &=& \min(\hat{\psi}_{i}^{2},\hat{\delta}_{i}^{2})-\psi_{i}^{2} \\
        &=& \min\left(\hat{\psi}_{i}^{2}-\psi_{i}^{2},\hat{\delta}_{i}^{2}-\psi_{i}^{2}\right) \\
        &\geq& \min\left(\hat{\psi}_{i}^{2}-\psi_{i}^{2},\hat{\delta}_{i}^{2}-\delta_{i}^{2}\right) \\
        &\geq& \min\left(-|\hat{\psi}_{i}-\psi_{i}^{2}|,-|\hat{\delta}_{i}^{2}-\delta_{i}^{2}|\right) \\
        &\geq& -\max\left(|\hat{\psi}_{i}-\psi_{i}^{2}|,|\hat{\delta}_{i}^{2}-\delta_{i}^{2}|\right).
      \end{eqnarray*}
      \[
        \mathbb{E}(\tilde{\psi}_{i}^{2}-\psi_{i}^{2})^{2}\leq\mathbb{E}\left\{\max\left(|\hat{\psi}_{i}-\psi_{i}^{2}|,|\hat{\delta}_{i}^{2}-\delta_{i}^{2}|\right)^{2}\right\}\leq\mathbb{E}(\hat{\psi}_{i}^{2}-\psi_{i}^{2})^{2}+\mathbb{E}(\hat{\delta}_{i}^{2}-\delta_{i}^{2})^{2}.
      \]
    Therefore, as $T_{\min}=\min_{i}T_{i}\rightarrow\infty$,
      \[
        \mathbb{E}\left(\tilde{\psi}_{i}^{2}-\psi_{i}^{2}\right)^{2}\rightarrow0, \text{ for } i=1,\dots,n, \text{ and } \mathbb{E}\left(\hat{\psi}^{2}-\psi^{2}\right)^{2}\rightarrow0.
      \]
    \item Lastly, $\hat{\phi}_{i}^{2}=\hat{\delta}_{i}^{2}-\hat{\psi}_{i}^{2}$. The consistency of $\hat{\phi}_{i}^{2}$ (for $i=1,\dots,n$) and $\hat{\phi}^{2}$ are straightforward.
  \end{enumerate}
\end{proof}

\subsection{Proof of Theorem~\ref{thm:Si_consistent}}

In order to prove Theorem~\ref{thm:Si_consistent}, the following lemma is firstly introduced. This lemma is also used to prove Lemma~\ref{appendix:lemma:const_prod} in the next section.
\begin{lemma}\label{appendix:lemma:converg_seq}
  If $a_{i}^{2}$ is a sequence of nonnegative random variables (implicitly indexed by $T_{i}$) whose expectations converge to zero, for $i=1,\dots,n$, and $\kappa_{1},\kappa_{2}$ are two nonrandom scalars, and
  \[
    \frac{a_{i}^{2}}{\hat{\delta}_{i}^{\kappa_{1}}\delta_{i}^{\kappa_{2}}}\leq 2(\hat{\delta}_{i}^{2}+\delta_{i}^{2}) \quad \text{a.s.},
  \]
  then, as $T_{\min}=\min_{i}T_{i}\rightarrow\infty$,
  \[
    \mathbb{E}\left(\frac{a_{i}^{2}}{\hat{\delta}_{i}^{\kappa_{1}}\delta_{i}^{\kappa_{2}}}\right)\rightarrow 0.
  \]
  Analogously, if $a^{2}$ is a sequence of nonnegative random variables (implicitly indexed by $T_{\min}=\min_{i}T_{i}$) whose expectations converge to zero, and $\kappa_{1},\kappa_{2}$ are two nonrandom scalars, and
  \[
    \frac{a^{2}}{\hat{\delta}^{\kappa_{1}}\delta^{\kappa_{2}}}\leq 2(\hat{\delta}^{2}+\delta^{2}) \quad \text{a.s.},
  \]
  then, as $T_{\min}=\min_{i}T_{i}\rightarrow\infty$,
  \[
    \mathbb{E}\left(\frac{a^{2}}{\hat{\delta}^{\kappa_{1}}\delta^{\kappa_{2}}}\right)\rightarrow 0.
  \]
\end{lemma}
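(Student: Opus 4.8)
The plan is to set $X_i := a_i^2/(\hat\delta_i^{\kappa_1}\delta_i^{\kappa_2})$, note that $X_i\ge 0$ and, by hypothesis, $X_i\le 2(\hat\delta_i^2+\delta_i^2)$ a.s., and then combine this almost-sure domination with two facts already available. Lemma~\ref{lemma:shpar_bound} supplies a constant $C$ with $\delta_i^2\le C$ for $i=1,\dots,n$ once $T_{\min}$ is large, and part~(i) of Lemma~\ref{lemma:shpar_consist} gives $\eta_i^2:=\mathbb{E}(\hat\delta_i^2-\delta_i^2)^2\to 0$ as $T_i\to\infty$. From these I would record the consequences I use repeatedly: $|\mathbb{E}\hat\delta_i^2-\delta_i^2|\le\eta_i$, hence $\mathbb{E}\hat\delta_i^2\le\delta_i^2+\eta_i$; and $\mathbb{E}\hat\delta_i^4\le 2\eta_i^2+2\delta_i^4$, hence $M_i:=\mathbb{E}(\hat\delta_i^2+\delta_i^2)^2$ is bounded uniformly over all large $T_i$.

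Next, fixing $\epsilon>0$, I would split according to the size of the deterministic number $\delta_i^2$. In the regime $\delta_i^2<c^2$ (with $c$ a small constant to be chosen), using only the a.s.\ bound, $\mathbb{E}X_i\le 2\mathbb{E}(\hat\delta_i^2+\delta_i^2)=2(\mathbb{E}\hat\delta_i^2+\delta_i^2)\le 4\delta_i^2+2\eta_i<4c^2+2\eta_i$, which is $<\epsilon$ once $c$ is small and $T_i$ is large. In the complementary regime $\delta_i^2\in[c^2,C]$, the factor $\delta_i^{\kappa_2}$ is bounded below by a positive constant depending only on $c,C,\kappa_2$; on the event $E_i:=\{|\hat\delta_i^2-\delta_i^2|<\min(c^2/2,C)\}$ one has $\hat\delta_i^2\in[c^2/2,2C]$, so $\hat\delta_i^{\kappa_1}$ is likewise bounded below and thus $\hat\delta_i^{\kappa_1}\delta_i^{\kappa_2}\ge c'>0$ on $E_i$, while Chebyshev gives $\mathbb{P}(E_i^c)\le\eta_i^2/\min(c^2/2,C)^2\to 0$. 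Then I would decompose $\mathbb{E}X_i=\mathbb{E}(X_i\mathbf{1}_{E_i})+\mathbb{E}(X_i\mathbf{1}_{E_i^c})$: on $E_i$ we have $X_i\le a_i^2/c'$, so the first term is at most $\mathbb{E}(a_i^2)/c'\to 0$; and $X_i\le 2(\hat\delta_i^2+\delta_i^2)$ together with Cauchy--Schwarz bounds the second term by $2\sqrt{M_i}\,\sqrt{\mathbb{P}(E_i^c)}\to 0$. Since, once $c$ (hence $c'$) is fixed, the threshold on $T_i$ needed in both regimes is a finite maximum, $\mathbb{E}X_i<\epsilon$ for all large $T_i$; as $\epsilon$ was arbitrary, $\mathbb{E}X_i\to 0$, and letting $T_{\min}\to\infty$ gives the first assertion. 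The averaged statement would follow by repeating the argument with $\delta^2,\hat\delta^2,a^2$ in place of $\delta_i^2,\hat\delta_i^2,a_i^2$, invoking Lemma~\ref{lemma:shpar_bound} and the averaged conclusions of Lemma~\ref{lemma:shpar_consist}.

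The hard part will be the degenerate regime $\delta_i^2\to 0$ (and hence $\hat\delta_i^2\to 0$ with high probability): there the denominator $\hat\delta_i^{\kappa_1}\delta_i^{\kappa_2}$ may be arbitrarily small, so no pointwise bound of the form $X_i\le a_i^2/\mathrm{const}$ is available, and the only usable control is the hypothesized a.s.\ domination $X_i\le 2(\hat\delta_i^2+\delta_i^2)$, whose right-hand side happens to be small exactly in this regime. Reconciling this with the non-degenerate regime --- where instead a concentration estimate on $\hat\delta_i^2$ keeps the denominator away from zero with high probability --- is what forces the two-case split in place of a one-line Markov argument.
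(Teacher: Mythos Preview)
Your proposal is correct and follows essentially the same route as the paper: a two-case split according to whether the deterministic quantity $\delta_i^2$ is small, using the a.s.\ domination $X_i\le 2(\hat\delta_i^2+\delta_i^2)$ directly in the degenerate case and, in the non-degenerate case, combining the lower bound on the denominator (via $L^2$-concentration of $\hat\delta_i^2$ about $\delta_i^2$) with $\mathbb{E}a_i^2\to 0$. The only cosmetic difference is that on the low-probability event $E_i^c$ you invoke Cauchy--Schwarz with the second-moment bound $M_i$, whereas the paper bounds the integrand pointwise (since $\hat\delta_i^2\le\epsilon/16$ on its bad event and $\delta_i^2$ is deterministically bounded) and multiplies by the probability; both devices yield the same conclusion.
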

\begin{proof}
  For a fixed $\epsilon>0$, let $\mathcal{T}_{i}$ denote the set of indices $T_{i}$ such that $\delta_{i}^{2}\leq\epsilon/8$. In Lemma~\ref{lemma:shpar_consist}, it is proved that $\mathbb{E}(\hat{\delta}_{i}^{2}-\delta_{i}^{2})^{2}\rightarrow 0$. Thus, there exists an integer $T_{i1}$ such that $\forall~T_{i}\geq T_{i1}$,
    \[
      \mathbb{E}|\hat{\delta}_{i}^{2}-\delta_{i}^{2}|\leq\epsilon/4.
    \]
  For $\forall~T_{i}\geq T_{i1}$ in the set $\mathcal{T}_{i}$,
    \[
      \mathbb{E}\left(\frac{a_{i}^{2}}{\hat{\delta}_{i}^{\kappa_{1}}\delta_{i}^{\kappa_{2}}}\right)\leq 2\left(\mathbb{E}\hat{\delta}_{i}^{2}+\delta_{i}^{2}\right)\leq 2\left(\mathbb{E}|\hat{\delta}_{i}^{2}-\delta_{i}^{2}|+2\delta_{i}^{2}\right)\leq 2\left(\frac{\epsilon}{4}+2\times\frac{\epsilon}{8}\right)=\epsilon.
    \]
  Consider the complementary of set $\mathcal{T}_{i}$, since $\mathbb{E}a_{i}^{2}\rightarrow 0$, there exists an integer $T_{i2}$ such that, $\forall~T_{i}\geq T_{i2}$,
    \[
      \mathbb{E}a^{2}\leq\frac{\epsilon^{\kappa_{1}+\kappa_{2}+1}}{2^{4\kappa_{1}+3\kappa_{2}+1}}.
    \]
  $\delta_{i}^{2}$ is bounded by $2C_{2}+C_{1}\sqrt{C_{2}}$. Then, there exists an integer $T_{i3}$ such that, for $\forall~T_{i}\geq T_{i3}$
    \[
      \mathbb{P}\left(|\hat{\delta}_{i}^{2}-\delta_{i}^{2}|\geq\frac{\epsilon}{16}\right)\leq\frac{4\epsilon}{16(2C_{2}+C_{1}\sqrt{C_{2}})+\epsilon}.
    \]
  Let $\boldsymbol{\mathrm{1}}_{\{\cdot\}}$ denote the indicator function. For $\forall~T_{i}\geq\max(T_{i2},T_{i3})$ outside the set $\mathcal{T}_{i}$, then
  \begin{eqnarray*}
    && \mathbb{E}\left(\frac{a_{i}^{2}}{\hat{\delta}_{i}^{\kappa_{1}}\delta_{i}^{\kappa_{2}}}\right) \\
    &=& \mathbb{E}\left(\frac{a_{i}^{2}}{\hat{\delta}_{i}^{\kappa_{1}}\delta_{i}^{\kappa_{2}}}\boldsymbol{\mathrm{1}}_{\{\hat{\delta}_{i}^{2}\leq\epsilon/16\}}\right)+\mathbb{E}\left(\frac{a_{i}^{2}}{\hat{\delta}_{i}^{\kappa_{1}}\delta_{i}^{\kappa_{2}}}\boldsymbol{\mathrm{1}}_{\{\hat{\delta}_{i}^{2}>\epsilon/16\}}\right) \\
    &\leq& \mathbb{E}\left\{2(\hat{\delta}_{i}^{2}+\delta_{i}^{2})\boldsymbol{\mathrm{1}}_{\{\hat{\delta}_{i}^{2}\leq\epsilon/16\}}\right\}+\left(\frac{16}{\epsilon}\right)^{\kappa_{1}}\left(\frac{8}{\epsilon}\right)^{\kappa_{2}}\mathbb{E}\left(a_{i}^{2}\boldsymbol{\mathrm{1}}_{\{\hat{\delta}_{i}^{2}>\epsilon/16\}}\right) \\
    &\leq& 2\left\{(2C_{2}+C_{1}\sqrt{C_{2}})+\frac{\epsilon}{16}\right\}\mathbb{P}\left(|\hat{\delta}_{i}^{2}-\delta_{i}^{2}|\geq\frac{\epsilon}{16}\right)+\left(\frac{16}{\epsilon}\right)^{\kappa_{1}}\left(\frac{8}{\epsilon}\right)^{\kappa_{2}}\mathbb{E}(a_{i}^{2}) \\
    &\leq& 2\left\{(2C_{2}+C_{1}\sqrt{C_{2}})+\frac{\epsilon}{16}\right\}\frac{4\epsilon}{16(2C_{2}+C_{1}\sqrt{C_{2}})+\epsilon}+\left(\frac{16}{\epsilon}\right)^{\kappa_{1}}\left(\frac{8}{\epsilon}\right)^{\kappa_{2}}\frac{\epsilon^{\kappa_{1}+\kappa_{2}+1}}{2^{4\kappa_{1}+3\kappa_{2}+1}} \\
    &\leq& \epsilon.
  \end{eqnarray*}
  Bringing together the results inside and outside the set $\mathcal{T}_{i}$, for $\forall~T_{i}\geq\max(T_{i1},T_{i2},T_{i3})$,
    \[
      \mathbb{E}\left(\frac{a_{i}^{2}}{\hat{\delta}_{i}^{\kappa_{1}}\delta_{i}^{\kappa_{2}}}\right)\leq\epsilon.
    \]
  The proof of the second part follows the same strategy.
\end{proof}
Now, we prove Theorem~\ref{thm:Si_consistent}.
\begin{proof}
  We first prove that $\bS_{i}^{*}$ is a consistent estimator of $\Sigma_{i}^{*}$.
  \begin{eqnarray*}
    \|\bS_{i}^{*}-\Sigma_{i}^{*}\|^{2} &=& \max_{\bgamma\neq\boldsymbol{\mathrm{0}}}\frac{\|\bgamma^\top(\bS_{i}^{*}-\Sigma_{i}^{*})\bgamma\|^{2}}{\bgamma^\top\bgamma} \\
    &=& \max_{\bgamma\neq\boldsymbol{\mathrm{0}}}\frac{1}{\bgamma^\top\bgamma}\left\|\left(\frac{\hat{\phi}^{2}}{\hat{\delta}^{2}}-\frac{\phi^{2}}{\delta^{2}}\right)\left(\bgamma^\top\bS_{i}\bgamma-\mu\bgamma^\top\bgamma\right)\right\|^{2} \\
    &=& \max_{\bgamma\neq\boldsymbol{\mathrm{0}}}\frac{1}{\bgamma^\top\bgamma}\left(\frac{\hat{\phi}^{2}}{\hat{\delta}^{2}}-\frac{\phi^{2}}{\delta^{2}}\right)^{2}\hat{\delta}_{i}^{2}.
  \end{eqnarray*}
  \begin{eqnarray*}
    \frac{1}{n}\sum_{i=1}^{n}\|\bS_{i}^{*}-\Sigma_{i}^{*}\|^{2} &=& \max_{\bgamma\neq\boldsymbol{\mathrm{0}}}\frac{1}{\bgamma^\top\bgamma}\frac{(\hat{\phi}^{2}\delta^{2}-\phi^{2}\hat{\delta}^{2})^{2}}{\hat{\delta}^{4}\delta^{4}}\frac{1}{n}\sum_{i=1}^{n}\hat{\delta}_{i}^{2} \\
    &=& \max_{\bgamma\neq\boldsymbol{\mathrm{0}}}\frac{1}{\bgamma^\top\bgamma}\frac{(\hat{\phi}^{2}\delta^{2}-\phi^{2}\hat{\delta}^{2})^{2}}{\hat{\delta}^{2}\delta^{4}}.
  \end{eqnarray*}
  Using the fact that $\phi^{2}\leq\delta^{2}$ and $\hat{\phi}^{2}\leq\hat{\delta}^{2}$,
    \[
      \frac{(\hat{\phi}^{2}\delta^{2}-\phi^{2}\hat{\delta}^{2})^{2}}{\hat{\delta}^{2}\delta^{4}}\leq\hat{\delta}^{2}\leq2(\hat{\delta}^{2}+\delta^{2}).
    \]
  In Lemma~\ref{lemma:shpar_consist}, it is shown that $\mathbb{E}(\hat{\phi}^{2}-\phi^{2})^{2}$ and $\mathbb{E}(\hat{\delta}^{2}-\delta^{2})^{2}$ converge to zero. In addition, Lemma~\ref{lemma:shpar_bound} shows that $\phi^{2}$ and $\delta^{2}$ are bounded. Thus,
    \begin{eqnarray*}
      \mathbb{E}\left(\hat{\phi}^{2}\delta^{2}-\phi^{2}\hat{\delta}^{2}\right)^{2} &=& \mathbb{E}\left\{(\hat{\phi}^{2}-\phi^{2})\delta^{2}-\phi^{2}(\hat{\delta}^{2}-\delta^{2})\right\}^{2} \\
      &\leq& \delta^{4}\mathbb{E}(\hat{\phi}^{2}-\phi^{2})^{2}+\phi^{4}\mathbb{E}(\hat{\delta}^{2}-\delta^{2})^{2} \\
      &\rightarrow& 0.
    \end{eqnarray*}
  Let $a^{2}=(\hat{\phi}^{2}\delta^{2}-\phi^{2}\hat{\delta}^{2})^{2}$, $\kappa_{1}=2$ and $\kappa_{2}=4$, then $\mathbb{E}a^{2}\rightarrow 0$, and using Lemma~\ref{appendix:lemma:converg_seq},
    \[
      \mathbb{E}\frac{(\hat{\phi}^{2}\delta^{2}-\phi^{2}\hat{\delta}^{2})^{2}}{\hat{\delta}^{2}\delta^{4}}\rightarrow 0.
    \]
  Thus,
    \[
      \frac{1}{n}\sum_{i=1}^{n}\mathbb{E}\|\bS_{i}^{*}-\Sigma_{i}^{*}\|^{2}\rightarrow 0.
    \]
  And therefore, for $\forall~i$,
    \[
      \mathbb{E}\|\bS_{i}^{*}-\Sigma_{i}^{*}\|^{2}\rightarrow 0.
    \]
  For the second statement,
    \begin{eqnarray*}
      \mathbb{E}\left|\|\bS_{i}^{*}-\Sigma_{i}\|^{2}-\|\Sigma_{i}^{*}-\Sigma_{i}\|^{2}\right| &=& \mathbb{E}\left|\langle \bS_{i}^{*}-\Sigma_{i}^{*},\bS_{i}^{*}+\Sigma_{i}^{*}-2\Sigma_{i}\rangle\right| \\
      &\leq& \sqrt{\mathbb{E}\|\bS_{i}^{*}-\Sigma_{i}^{*}\|^{2}}\sqrt{\mathbb{E}\|\bS_{i}^{*}+\Sigma_{i}^{*}-2\Sigma_{i}\|^{2}} \\
      &\rightarrow& 0.
    \end{eqnarray*}
  Therefor,
    \[
      \mathbb{E}\left\{\bgamma^\top\bS_{i}^{*}\bgamma-\exp(\bx_{i}^\top\bbeta)\right\}^{2}-\mathbb{E}\left\{\bgamma^\top\Sigma_{i}^{*}\bgamma-\exp(\bx_{i}^\top\bbeta)\right\}^{2}\rightarrow 0.
    \]
\end{proof}

\subsection{Proof of Theorem~\ref{thm:optimal1}}

Before proving Theorem~\ref{thm:optimal1}, we first provide the solution to the optimization problem~\eqref{eq:opt_sample}. Let
\[
  f(\rho_{1},\rho_{2})=\frac{1}{n}\sum_{i=1}^{n}\left\{\bgamma^\top(\rho_{1}\boldsymbol{\mathrm{I}}+\rho_{2}\bS_{i})\bgamma-\exp(\bx_{i}^\top\bbeta)\right\}^{2}.
\]
\[
  \frac{\partial f}{\partial\rho_{1}}=\frac{1}{n}\sum_{i=1}^{n}2(\bgamma^\top\bgamma)\left\{\rho_{1}\bgamma^\top\bgamma+\rho_{2}\bgamma^\top\bS_{i}\bgamma-\exp(\bx_{i}^\top\bbeta)\right\}=0
\]
\[
  \frac{\partial f}{\partial\rho_{2}}=\frac{1}{n}\sum_{i=1}^{n}2(\bgamma^\top\bS_{i}\bgamma)\left\{\rho_{1}(\bgamma^\top\bgamma)+\rho_{2}(\bgamma^\top\bS_{i}\bgamma)-\exp(\bx_{i}^\top\bbeta)\right\}=0.
\]
\[
  \Rightarrow \quad \rho_{2}=\frac{\sum_{i}(\bgamma^\top\bS_{i}\bgamma)\exp(\bx_{i}^\top\bbeta)/n-(\sum_{i}\bgamma^\top\bS_{i}\bgamma/n)(\sum_{i}\exp(\bx_{i}^\top\bbeta)/n)}{\sum_{i}(\bgamma^\top\bS_{i}\bgamma)^{2}/n-(\sum_{i}\bgamma^\top\bS_{i}\bgamma/n)^{2}}.
\]
\begin{eqnarray*}
  \rho_{1} &=& \frac{1}{\bgamma^\top\bgamma}\left\{\frac{1}{n}\sum_{i=1}^{n}\exp(\bx_{i}^\top\bbeta)-\frac{1}{n}\sum_{i=1}^{n}\rho_{2}(\bgamma^\top\bS_{i}\bgamma)\right\} \\
  &=& \frac{1}{\bgamma^\top\bgamma}\frac{(\sum_{i}\bgamma^\top\bS_{i}\bgamma/n)(\sum_{i}(\bgamma^\top\bS_{i}\bgamma)\exp(\bx_{i}^\top\bbeta)/n)-(\sum_{i}\exp(\bx_{i}^\top\bbeta)/n)(\sum_{i}(\bgamma^\top\bS_{i}\bgamma)^{2}/n)}{\sum_{i}(\bgamma^\top\bS_{i}\bgamma)^{2}/n-(\sum_{i}\bgamma^\top\bS_{i}\bgamma/n)^{2}}.
\end{eqnarray*}

In order to prove Theorem~\ref{thm:optimal1}, the following lemma is introduced.
\begin{lemma}\label{appendix:lemma:const_prod}
  For given $(\bgamma,\bbeta)$, let $T_{\min}=\min_{i}T_{i}$, as $T_{\min}\rightarrow\infty$, for $\forall~i\in\{1,\dots,n\}$,
  \[
    \mathbb{E}\left(\left|\frac{\hat{\phi}_{i}^{2}\hat{\psi}_{i}^{2}}{\hat{\delta}_{i}^{2}}-\frac{\phi_{i}^{2}\psi_{i}^{2}}{\delta_{i}^{2}}\right|\right)\rightarrow 0.
  \]
  Then, as $n,T_{\min}\rightarrow \infty$,
  \[
    \mathbb{E}\left(\left|\frac{\hat{\phi}^{2}\hat{\psi}^{2}}{\hat{\delta}^{2}}-\frac{\phi^{2}\psi^{2}}{\delta^{2}}\right|\right)\rightarrow 0.
  \] 
\end{lemma}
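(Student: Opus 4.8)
The plan is to show that the sample ratio $\hat\phi_i^2\hat\psi_i^2/\hat\delta_i^2$ and the population ratio $\phi_i^2\psi_i^2/\delta_i^2$ are each of order $\psi_i^2$, which tends to zero, and to bound their difference by their sum. Three facts are used. (i) The population quantities are deterministic, $\phi_i^2\le\delta_i^2$ by Lemma~\ref{lemma:par_relation}, and from the computation in the proof of Lemma~\ref{lemma:shpar_consist} one has $\psi_i^2=\mathbb{E}\{\bgamma^\top\bS_i\bgamma-\exp(\bx_i^\top\bbeta)\}^2=\mathrm{Var}(z_{i1j_i}^2)/T_i$; since $\mathrm{Var}(z_{i1j_i}^2)$ is a $T_i$-free finite constant (finite under Assumption~A3, and equal to $2\{\exp(\bx_i^\top\bbeta)\}^2$ under normality), $\psi_i^2\to 0$ as $T_i\to\infty$, so $\phi_i^2\psi_i^2/\delta_i^2\le\psi_i^2\to 0$ with the convention $0/0:=0$. (ii) Reading $\hat\psi_i^2$ and $\hat\phi_i^2$ as the truncated quantities that actually enter $\bS_i^*$, so that $0\le\hat\psi_i^2\le\hat\delta_i^2$ and hence $\hat\phi_i^2/\hat\delta_i^2\in[0,1]$, we get $0\le\hat\phi_i^2\hat\psi_i^2/\hat\delta_i^2\le\hat\psi_i^2$. (iii) Directly from the definition, $\mathbb{E}\hat\psi_i^2\le\mathbb{E}\!\left[T_i^{-1}\{\bgamma^\top\bS_i\bgamma-\exp(\bx_i^\top\bbeta)\}^2\right]=\psi_i^2/T_i$.

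Combining (i)--(iii) with the elementary inequality $|A-B|\le A+B$ for $A,B\ge 0$,
\[
\mathbb{E}\!\left(\left|\frac{\hat\phi_i^2\hat\psi_i^2}{\hat\delta_i^2}-\frac{\phi_i^2\psi_i^2}{\delta_i^2}\right|\right)\;\le\;\mathbb{E}\!\left(\frac{\hat\phi_i^2\hat\psi_i^2}{\hat\delta_i^2}\right)+\frac{\phi_i^2\psi_i^2}{\delta_i^2}\;\le\;\mathbb{E}\hat\psi_i^2+\psi_i^2\;\le\;\frac{\psi_i^2}{T_i}+\psi_i^2\;\le\;2\psi_i^2\;\longrightarrow\;0
\]
as $T_i\to\infty$, hence as $T_{\min}\to\infty$, which is the first assertion. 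The averaged statement follows by the same argument applied to the subject-averages: Lemma~\ref{lemma:par_relation} gives $\phi^2\le\delta^2$, the truncation gives $\hat\phi^2/\hat\delta^2\in[0,1]$, one still has $\mathbb{E}\hat\psi^2\le n^{-1}\sum_i\psi_i^2/T_i$, and $\phi^2\psi^2/\delta^2\le\psi^2=n^{-1}\sum_i\mathrm{Var}(z_{i1j_i}^2)/T_i=O(T_{\min}^{-1})\to 0$ under the moment and growth conditions; hence $\mathbb{E}(|\hat\phi^2\hat\psi^2/\hat\delta^2-\phi^2\psi^2/\delta^2|)\to 0$ as $n,T_{\min}\to\infty$.

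There is no hard analytic step here; the one thing that needs care is that $\hat\delta_i^2$ sits in a denominator and can be arbitrarily small, so $\hat\phi_i^2\hat\psi_i^2/\hat\delta_i^2$ is not automatically integrable. This is handled structurally rather than probabilistically: the truncation $\hat\psi_i^2\wedge\hat\delta_i^2$ built into $\bS_i^*$ forces $\hat\phi_i^2/\hat\delta_i^2\le 1$, which yields $\hat\phi_i^2\hat\psi_i^2/\hat\delta_i^2\le\hat\psi_i^2$ and keeps everything $O(\psi_i^2)$. If instead one wishes to follow the template of the proof of Theorem~\ref{thm:Si_consistent}, one writes the difference over the common denominator as $(\hat\phi_i^2\hat\psi_i^2\delta_i^2-\phi_i^2\psi_i^2\hat\delta_i^2)/(\hat\delta_i^2\delta_i^2)$, uses $\hat\phi_i^2,\hat\psi_i^2\le\hat\delta_i^2$ and $\phi_i^2,\psi_i^2\le\delta_i^2$ to bound the absolute numerator a.s.\ by $\hat\delta_i^2\delta_i^2(\hat\delta_i^2+\delta_i^2)$ (so the hypothesis of Lemma~\ref{appendix:lemma:converg_seq} holds with $\kappa_1=\kappa_2=2$), decomposes the numerator as $\hat\phi_i^2\hat\psi_i^2(\delta_i^2-\hat\delta_i^2)+\hat\delta_i^2\hat\psi_i^2(\hat\phi_i^2-\phi_i^2)+\hat\delta_i^2\phi_i^2(\hat\psi_i^2-\psi_i^2)$, bounds its expectation by Cauchy--Schwarz using the $L^2$-consistencies in Lemma~\ref{lemma:shpar_consist}, and applies Lemma~\ref{appendix:lemma:converg_seq}; that route is closer to the paper's style but requires higher-order moment control on $\hat\delta_i$ that the short argument above avoids.
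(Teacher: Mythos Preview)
Your argument is correct and takes a genuinely different, more elementary route than the paper. The paper writes the difference over the common denominator $\hat\delta_i^2\delta_i^2$, verifies the a.s.\ domination hypothesis of Lemma~\ref{appendix:lemma:converg_seq} (with $\kappa_1=\kappa_2=2$), shows $\mathbb{E}|\hat\phi_i^2\hat\psi_i^2\delta_i^2-\phi_i^2\psi_i^2\hat\delta_i^2|\to 0$ by a telescoping decomposition together with the $L^2$-consistencies of Lemma~\ref{lemma:shpar_consist}, and then invokes Lemma~\ref{appendix:lemma:converg_seq}. You instead exploit the specific fact that $\psi_i^2=\mathrm{Var}(z_{i1j_i}^2)/T_i\to 0$, bound each of the two ratios separately by $\psi_i^2$ (resp.\ $\hat\psi_i^2$ with $\mathbb{E}\hat\psi_i^2\le\psi_i^2/T_i$), and conclude via $|A-B|\le A+B$. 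This bypasses Lemma~\ref{appendix:lemma:converg_seq} entirely and needs no higher-moment control on $\hat\delta_i^2$; the price is that it relies on the truncated reading $\hat\psi_i^2\le\hat\delta_i^2$ (which you flag explicitly), whereas the paper's route is a reusable template that applies whenever the numerator tends to zero in $L^1$ and the a.s.\ domination holds. Your final paragraph already sketches the paper's approach accurately, so you have both proofs in hand.
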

\begin{proof}
  \[
    \frac{\hat{\phi}_{i}^{2}\hat{\psi}_{i}^{2}}{\hat{\delta}_{i}^{2}}-\frac{\phi_{i}^{2}\psi_{i}^{2}}{\delta_{i}^{2}}=\frac{\hat{\phi}_{i}^{2}\hat{\psi}_{i}^{2}\delta_{i}^{2}-\phi_{i}^{2}\psi_{i}^{2}\hat{\delta}_{i}^{2}}{\hat{\delta}_{i}^{2}\delta_{i}^{2}}.
  \]
  Let $a_{i}^{2}=|\hat{\phi}_{i}^{2}\hat{\psi}_{i}^{2}\delta_{i}^{2}-\phi_{i}^{2}\psi_{i}^{2}\hat{\delta}_{i}^{2}|$, $\kappa_{1}=2$ and $\kappa_{2}=2$. First need to verify the assumptions in Lemma~\ref{appendix:lemma:converg_seq}.
  \[
    \left|\frac{\hat{\phi}_{i}^{2}\hat{\psi}_{i}^{2}}{\hat{\delta}_{i}^{2}}-\frac{\phi_{i}^{2}\psi_{i}^{2}}{\delta_{i}^{2}}\right|\leq\frac{\hat{\phi}_{i}^{2}\hat{\psi}_{i}^{2}}{\hat{\delta}_{i}^{2}}+\frac{\phi_{i}^{2}\psi_{i}^{2}}{\delta_{i}^{2}}\leq\hat{\phi}_{i}^{2}+\phi_{i}^{2}\leq\hat{\delta}_{i}^{2}+\delta_{i}^{2}\leq 2(\hat{\delta}_{i}^{2}+\delta_{i}^{2}), \quad \text{a.s.}.
  \]
  Furthermore,
    \begin{eqnarray*}
      && \mathbb{E}\left(|\hat{\phi}_{i}^{2}\hat{\psi}_{i}^{2}\delta_{i}^{2}-\phi_{i}^{2}\psi_{i}^{2}\hat{\delta}_{i}^{2}|\right) \\
      &=& \mathbb{E}\left\{\left|(\hat{\phi}_{i}^{2}\hat{\psi}_{i}^{2}-\phi_{i}^{2}\psi_{i}^{2})\delta_{i}^{2}-\phi_{i}^{2}\psi_{i}^{2}(\hat{\delta}_{i}^{2}-\delta_{i}^{2})\right|\right\} \\
      &=& \mathbb{E}\left\{\left|(\hat{\phi}_{i}^{2}-\phi_{i}^{2})(\hat{\psi}_{i}^{2}-\psi_{i}^{2})\delta_{i}^{2}+\phi_{i}^{2}(\hat{\psi}_{i}^{2}-\psi_{i}^{2})\delta_{i}^{2}+(\hat{\phi}_{i}^{2}-\phi_{i}^{2})\psi_{i}^{2}\delta_{i}^{2}-\phi_{i}^{2}\psi_{i}^{2}(\hat{\delta}_{i}^{2}-\delta_{i}^{2})\right|\right\} \\
      &\leq& \sqrt{\mathbb{E}(\hat{\phi}_{i}^{2}-\phi_{i}^{2})^{2}}\sqrt{\mathbb{E}(\hat{\psi}_{i}^{2}-\psi_{i}^{2})^{2}}\delta_{i}^{2}+\phi_{i}^{2}\mathbb{E}|\hat{\psi}_{i}^{2}-\psi_{i}^{2}|\delta_{i}^{2}+\mathbb{E}|\hat{\phi}_{i}^{2}-\phi_{i}^{2}|\psi_{i}^{2}\delta_{i}^{2}-\phi_{i}^{2}\psi_{i}^{2}\mathbb{E}|\hat{\delta}_{i}^{2}-\delta_{i}^{2}|.
    \end{eqnarray*}
  The right-hand side converges to zero. Therefore, $\mathbb{E}a_{i}^{2}\rightarrow 0$, conditions in Lemma~\ref{appendix:lemma:converg_seq} are satisfied. Therefore,
    \[
      \mathbb{E}\left|\frac{\hat{\phi}_{i}^{2}\hat{\psi}_{i}^{2}}{\hat{\delta}_{i}^{2}}-\frac{\phi_{i}^{2}\psi_{i}^{2}}{\delta_{i}^{2}}\right|\rightarrow 0.
    \]
  Analogously, it can be shown that
    \[
      \mathbb{E}\left|\frac{\hat{\phi}^{2}\hat{\psi}^{2}}{\hat{\delta}^{2}}-\frac{\phi^{2}\psi^{2}}{\delta^{2}}\right|\rightarrow 0.
    \]
\end{proof}
Next, we prove Theorem~\ref{thm:optimal1}.
\begin{proof}
  Let $\alpha_{i}=(\bgamma^\top\Sigma_{i}\bgamma)(\bgamma^\top\bS_{i}\bgamma)-\{\mu(\bgamma^\top\bgamma)\}^{2}$ and $\alpha=\sum_{i=1}^{n}\alpha_{i}/n$. $\mathbb{E}(\alpha_{i})=\exp^{2}(\bx_{i}^\top\bbeta)-\mu^{2}(\bgamma^\top\bgamma)^{2}$, then
    \[
      \mathbb{E}\alpha=\frac{1}{n}\sum_{i=1}^{n}\exp^{2}(\bx_{i}^\top\bbeta)-\mu^{2}(\bgamma^\top\bgamma)=\phi^{2}.
    \]
  First, need to prove that $\alpha-\phi^{2}$ converges to zero in quadratic mean.
  \begin{eqnarray*}
    && \mathrm{Var}(\alpha_{i}) \\
    &=& \mathrm{Var}\left\{(\bgamma^\top\Sigma_{i}\bgamma)(\bgamma^\top\bS_{i}\bgamma)-\mu^{2}(\bgamma^\top\bgamma)^{2}\right\} \\
    &=& \mathrm{Var}\left\{(\bgamma^\top\Sigma_{i}\bgamma)(\bgamma^\top\bS_{i}\bgamma)\right\}+\mathrm{Var}\left\{\mu^{2}(\bgamma^\top\bgamma)^{2}\right\}-2\mathrm{Cov}\left\{(\bgamma^\top\Sigma_{i}\bgamma)(\bgamma^\top\bS_{i}\bgamma),\mu^{2}(\bgamma^\top\bgamma)^{2}\right\} \\
    &=& \mathrm{Var}\left\{(\bgamma^\top\Sigma_{i}\bgamma)(\bgamma^\top\bS_{i}\bgamma)\right\}.
  \end{eqnarray*}
  \[
    (\bgamma^\top\Sigma_{i}\bgamma)(\bgamma^\top\bS_{i}\bgamma)=\lambda_{ij_{i}}\left(\frac{1}{T_{i}}\sum_{t=1}^{T_{i}}z_{itj_{i}}^{2}\right).
  \]
  \begingroup
  \allowdisplaybreaks
  \begin{eqnarray*}
    \mathrm{Var}\left\{(\bgamma^\top\Sigma_{i}\bgamma)(\bgamma^\top\bS_{i}\bgamma)\right\} &=& \mathrm{Var}\left\{\frac{1}{T_{i}}\sum_{t=1}^{T_{i}}\lambda_{ij_{i}}z_{itj_{i}}^{2}\right\} \\
    &=& \frac{1}{T_{i}}\mathrm{Var}\left(\lambda_{ij_{i}}z_{i1j_{i}}^{2}\right) \\
    &\leq& \frac{1}{T_{i}}\mathbb{E}\left(\lambda_{ij_{i}}z_{i1j_{i}}^{2}\right)^{2} \\
    &\leq& \frac{1}{T_{i}}\mathbb{E}\lambda_{ij_{i}}^{2}z_{i1j_{i}}^{4} \\
    &\leq& \frac{1}{T_{i}}\left(\mathbb{E}z_{i1j_{i}}^{2}\right)^{2}\mathbb{E}z_{i1j_{i}}^{4} \\
    &\leq& \frac{1}{T_{i}}\left(\mathbb{E}z_{i1j_{i}}^{4}\right)^{2} \\
    &\leq& \frac{1}{T_{i}}\mathbb{E}z_{i1j_{i}}^{8} \\
    &\leq& \frac{C_{2}}{T_{i}}.
  \end{eqnarray*}
  \endgroup
  \[
    \mathrm{Var}(\alpha)=\frac{1}{n^{2}}\sum_{i=1}^{n}\mathrm{Var}(\alpha_{i})\leq\frac{C_{2}}{n^{2}}\sum_{i=1}^{n}\frac{1}{T_{i}}\rightarrow 0, \text{ as } T_{\min}=\min_{i}T_{i}\rightarrow\infty.
  \]
  This proves that $\alpha-\phi^{2}$ converges to $0$ in quadratic mean. In the following, we prove that $\bS_{i}^{*}$ is a consistent estimator of $\Sigma_{i}^{**}$.
  \[
    \bS_{i}^{*}=\frac{\hat{\psi}^{2}}{\hat{\delta}^{2}}\mu\boldsymbol{\mathrm{I}}+\frac{\hat{\phi}^{2}}{\hat{\delta}^{2}}\bS_{i}=\frac{\hat{\delta}^{2}-\hat{\psi}^{2}}{\hat{\delta}^{2}}\mu\boldsymbol{\mathrm{I}}+\frac{\hat{\phi}^{2}}{\hat{\delta}^{2}}\bS_{i}=\mu\boldsymbol{\mathrm{I}}+\frac{\hat{\phi}^{2}}{\hat{\delta}^{2}}(\bS_{i}-\mu\boldsymbol{\mathrm{I}}).
  \]
  \[
    \Sigma_{i}^{**}=\rho_{1}\boldsymbol{\mathrm{I}}+\rho_{2}\bS_{i}=(\rho_{1}+\rho_{2}\mu)\boldsymbol{\mathrm{I}}+\rho_{2}(\bS_{i}-\mu\boldsymbol{\mathrm{I}}).
  \]
  \begin{eqnarray*}
    \frac{1}{n}\sum_{i=1}^{n}\|\bS_{i}^{*}-\Sigma_{i}^{**}\|^{2} &=& \frac{1}{n}\sum_{i=1}^{n}\left\|(\mu-\rho_{1}-\rho_{2}\mu)\boldsymbol{\mathrm{I}}+\left(\frac{\hat{\phi}^{2}}{\hat{\delta}^{2}}-\rho_{2}\right)(\bS_{i}-\mu\boldsymbol{\mathrm{I}})\right\|^{2} \\
    &=& \frac{1}{n}\sum_{i=1}^{n}\left\{\max_{\bgamma\neq 0}~\frac{1}{\bgamma^\top\bgamma}\left\|(\mu-\rho_{1}-\rho_{2}\mu)(\bgamma^\top\bgamma)+\left(\frac{\hat{\phi}^{2}}{\hat{\delta}^{2}}-\rho_{2}\right)(\bgamma^\top\bS_{i}\bgamma-\mu(\bgamma^\top\bgamma))\right\|^{2}\right\} \\
    &=& \max_{\bgamma\neq 0}~\left\{(\mu-\rho_{1}-\rho_{2}\mu)^{2}(\bgamma^\top\bgamma)+\frac{1}{\bgamma^\top\bgamma}\left(\frac{\hat{\phi}^{2}}{\hat{\delta}^{2}}-\rho_{2}\right)^{2}\hat{\delta}_{i}^{2} \right. \\
    && \quad \quad \quad \left.+2(\mu-\rho_{1}-\rho_{2}\mu)\left(\frac{\hat{\phi}^{2}}{\hat{\delta}^{2}}-\rho_{2}\right)\left(\frac{1}{n}\sum_{i=1}^{n}\bgamma^\top\bS_{i}\bgamma-\mu(\bgamma^\top\bgamma)\right)\right\}.
  \end{eqnarray*}
  \begin{eqnarray*}
    && (\mu-\rho_{1}-\rho_{2}\mu)^{2} \\
    &=& \frac{(\sum_{i}\bgamma^\top\bS_{i}\bgamma/n-\sum_{i}\exp(\bx_{i}^\top\bbeta)/n)^{2}\left\{(\sum_{i}\bgamma^\top\bS_{i}\bgamma/n)(\sum_{i}\exp(\bx_{i}^\top\bbeta)/n)-\sum_{i}(\bgamma^\top\bS_{i}\bgamma)\exp(\bx_{i}^\top\bbeta)/n\right\}^{2}}{(\bgamma^\top\bgamma)^{2}\left\{(\sum_{i}\bgamma^\top\bS_{i}\bgamma/n)^{2}-\sum_{i}(\bgamma^\top\bS_{i}\bgamma)^{2}/n\right\}^{2}}.
  \end{eqnarray*}
  \begin{eqnarray*}
    && \mathbb{E}\left\{\frac{1}{n}\sum_{i}\bgamma^\top\bS_{i}\bgamma-\frac{1}{n}\sum_{i}\exp(\bx_{i}^\top\bbeta)\right\}^{2} \\
    &=& \frac{1}{n^{2}}\sum_{i}\mathbb{E}\left\{\bgamma^\top\bS_{i}\bgamma-\exp(\bx_{i}^\top\bbeta)\right\}^{2}+\frac{1}{n^{2}}\sum_{i\neq i'}\mathbb{E}\left\{\bgamma^\top\bS_{i}\bgamma-\exp(\bx_{i}^\top\bbeta)\right\}\left\{\bgamma^\top\bS_{i'}\bgamma-\exp(\bx_{i'}^\top\bbeta)\right\}.
  \end{eqnarray*}
  \begin{eqnarray*}
    \mathbb{E}\left\{\bgamma^\top\bS_{i}\bgamma-\exp(\bx_{i}^\top\bbeta)\right\}^{2} &=& \mathbb{E}\left\{\bgamma^\top\bS_{i}\bgamma-\mathbb{E}(\bgamma^\top\bS_{i}\bgamma)\right\}^{2} \\
    &=& \mathrm{Var}(\bgamma^\top\bS_{i}\bgamma) \\
    &=& \frac{1}{T_{i}}\mathbb{E}z_{i1j_{i}}^{4}+\frac{T_{i}(T_{i}-1)}{T_{i}^{2}}(\bgamma^\top\Sigma_{i}\bgamma)^{2}-(\bgamma^\top\Sigma_{i}\bgamma)^{2} \\
    &\overset{T_{i}\rightarrow\infty}{\longrightarrow}& 0.
  \end{eqnarray*}
  It is assumed that the samples/subjects are independent, therefore,
    \[
      \mathbb{E}\left\{\bgamma^\top\bS_{i}\bgamma-\exp(\bx_{i}^\top\bbeta)\right\}\left\{\bgamma^\top\bS_{i'}\bgamma-\exp(\bx_{i'}^\top\bbeta)\right\}=0.
    \]
  Thus,
    \[
      \mathbb{E}\left\{\frac{1}{n}\sum_{i}\bgamma^\top\bS_{i}\bgamma-\frac{1}{n}\sum_{i}\exp(\bx_{i}^\top\bbeta)\right\}^{2}\rightarrow0, \text{ as } T_{\min}\rightarrow\infty.
    \]
    \begin{eqnarray*}
      && \mathbb{E}\left\{\left(\frac{1}{n}\sum_{i}\bgamma^\top\bS_{i}\bgamma\right)\left(\frac{1}{n}\sum_{i}\exp(\bx_{i}^\top\bbeta)\right)-\frac{1}{n}\sum_{i}(\bgamma^\top\bS_{i}\bgamma)\exp(\bx_{i}^\top\bbeta)\right\}^{2} \\
      &\leq& \mathbb{E}\left(\frac{1}{n}\sum_{i}\bgamma^\top\bS_{i}\bgamma\right)^{2}\left(\frac{1}{n}\sum_{i}\exp(\bx_{i}^\top\bbeta)\right)^{2}+\mathbb{E}\left\{\frac{1}{n}\sum_{i}(\bgamma^\top\bS_{i}\bgamma)\exp(\bx_{i}^\top\bbeta)\right\}^{2}.
    \end{eqnarray*}
    \begin{eqnarray*}
      && \mathbb{E}\left(\frac{1}{n}\sum_{i}\bgamma^\top\bS_{i}\bgamma\right)^{2} \\
      &=& \frac{1}{n^{2}}\sum_{i}\mathbb{E}(\bgamma^\top\bS_{i}\bgamma)^{2}+\frac{1}{n^{2}}\sum_{i\neq i'}\mathbb{E}(\bgamma^\top\bS_{i}\bgamma)(\bgamma^\top\bS_{i'}\bgamma) \\
      &=& \frac{1}{n^{2}}\sum_{i}\left\{\frac{1}{T_{i}^{2}}\mathbb{E}z_{itj_{i}}^{4}+\frac{1}{T_{i}^{2}}\sum_{t\neq s}\mathbb{E}z_{itj_{i}}^{2}z_{isj_{i}}^{2}\right\}+\frac{1}{n^{2}}\sum_{i\neq i'}\left(\frac{1}{T_{i}^{2}}\sum_{t=1}^{T_{i}}\mathbb{E}z_{itj_{i}}^{2}\right)\left(\frac{1}{T_{i'}^{2}}\sum_{t=1}^{T_{i'}}\mathbb{E}z_{i'tj_{i'}}^{2}\right) \\
      &=& \frac{1}{n^{2}}\sum_{i}\left\{\frac{1}{T_{i}}\mathbb{E}z_{i1j_{i}}^{4}+\frac{T_{i}(T_{i}-1)}{T_{i}^{2}}(\bgamma^\top\Sigma_{i}\bgamma)^{2}\right\}+\frac{1}{n^{2}}\sum_{i\neq i'}\left(\frac{1}{T_{i}}(\bgamma^\top\Sigma_{i}\bgamma)\right)\left(\frac{1}{T_{i'}}(\bgamma^\top\Sigma_{i'}\bgamma)\right) \\
      &\overset{T_{\min}\rightarrow\infty}{\longrightarrow}& \frac{1}{n^{2}}\sum_{i}(\bgamma^\top\Sigma_{i}\bgamma)^{2}.
    \end{eqnarray*}
    \begingroup
    \allowdisplaybreaks
    \begin{eqnarray*}
      && \mathbb{E}\left\{\frac{1}{n}\sum_{i}(\bgamma^\top\bS_{i}\bgamma)\exp(\bx_{i}^\top\bbeta)\right\}^{2} \\
      &=& \frac{1}{n^{2}}\sum_{i}\sum_{i}\mathbb{E}(\bgamma^\top\bS_{i}\bgamma)^{2}\exp^{2}(\bx_{i}^\top\bbeta)+\frac{1}{n^{2}}\sum_{i\neq i'}\mathbb{E}(\bgamma^\top\bS_{i}\bgamma)\exp(\bx_{i}^\top\bbeta)\mathbb{E}(\bgamma^\top\bS_{i'}\bgamma)\exp(\bx_{i'}^\top\bbeta) \\
      &=& \frac{1}{n^{2}}\sum_{i}\left\{\frac{1}{T_{i}}\mathbb{E}z_{itj_{i}}^{4}+\frac{T_{i}(T_{i}-1)}{T_{i}^{2}}(\bgamma^\top\Sigma_{i}\bgamma)^{2}\right\}(\bgamma^\top\Sigma_{i}\bgamma)^{2}+\frac{1}{n^{2}}\sum_{i\neq i'}(\bgamma^\top\Sigma_{i}\bgamma)^{2}(\bgamma^\top\Sigma_{i'}\bgamma)^{2} \\
      &\overset{T_{\min}\rightarrow\infty}{\longrightarrow}& \frac{1}{n^{2}}\sum_{i}(\bgamma^\top\Sigma_{i}\bgamma)^{4}+\frac{1}{n^{2}}\sum_{i\neq i'}(\bgamma^\top\Sigma_{i}\bgamma)^{2}(\bgamma^\top\Sigma_{i'}\bgamma)^{2}.
    \end{eqnarray*}
    \endgroup
    \begin{eqnarray*}
      && \mathbb{E}\left\{\left(\frac{1}{n}\sum_{i}\bgamma^\top\bS_{i}\bgamma\right)\left(\frac{1}{n}\sum_{i}\exp(\bx_{i}^\top\bbeta)\right)-\frac{1}{n}\sum_{i}(\bgamma^\top\bS_{i}\bgamma)\exp(\bx_{i}^\top\bbeta)\right\}^{2} \\
      &\leq& \mathbb{E}\left(\frac{1}{n}\sum_{i}\bgamma^\top\bS_{i}\bgamma\right)^{2}\left(\frac{1}{n}\sum_{i}\exp(\bx_{i}^\top\bbeta)\right)^{2}+\mathbb{E}\left\{\frac{1}{n}\sum_{i}(\bgamma^\top\bS_{i}\bgamma)\exp(\bx_{i}^\top\bbeta)\right\}^{2} \\
      &\overset{T_{\min}\rightarrow\infty}{\longrightarrow}& \frac{1}{n^{2}}\sum_{i}(\bgamma^\top\Sigma_{i}\bgamma)^{2}\left(\frac{1}{n}\sum_{i}(\bgamma^\top\Sigma_{i}\bgamma)\right)^{2}+\frac{1}{n^{2}}\sum_{i}(\bgamma^\top\Sigma_{i}\bgamma)^{4}+\frac{1}{n^{2}}\sum_{i\neq i'}(\bgamma^\top\Sigma_{i}\bgamma)^{2}(\bgamma^\top\Sigma_{i'}\bgamma)^{2}.
    \end{eqnarray*}
  The above quantity on the right is bounded by a constant from above. Therefore, as $T_{\min}\rightarrow\infty$,
    \[
      (\mu-\rho_{1}-\rho_{2}\mu)^{2}\rightarrow0.
    \]

    \[
      \left(\frac{\hat{\phi}^{2}}{\hat{\delta}^{2}}-\rho_{2}\right)^{2}=\left(\frac{\hat{\phi}^{2}}{\hat{\delta}^{2}}-\frac{{\phi}^{2}}{\hat{\delta}^{2}}\right)^{2}+\left(\frac{{\phi}^{2}}{\hat{\delta}^{2}}-\frac{\alpha}{\hat{\delta}^{2}}\right)^{2}+\left(\frac{\alpha}{\hat{\delta}^{2}}-\rho_{2}\right)^{2}.
    \]
    Since $\hat{\delta}^{4}$ is bounded,
    \[
      \mathbb{E}(\hat{\phi}^{2}-\phi^{2})^{2}\rightarrow0 \quad \Rightarrow \quad \mathbb{E}\left(\frac{\hat{\phi}^{2}}{\hat{\delta}^{2}}-\frac{{\phi}^{2}}{\hat{\delta}^{2}}\right)^{2}\rightarrow 0;
    \]
    \[
      \mathbb{E}(\phi^{2}-\alpha)^{2}\rightarrow 0 \quad \Rightarrow \quad \mathbb{E}\left(\frac{{\phi}^{2}}{\hat{\delta}^{2}}-\frac{\alpha}{\hat{\delta}^{2}}\right)^{2}\rightarrow 0.
    \]
  Let $\rho_{2}=\rho_{2}^{(1)}/\rho_{2}^{(2)}$, where
    \begin{eqnarray*}
      \rho_{2}^{(1)} &=& \frac{1}{n}\sum_{i}(\bgamma^\top\bS_{i}\bgamma)\exp(\bx_{i}^\top\bbeta)-\left(\frac{1}{n}\sum_{i}\bgamma^\top\bS_{i}\bgamma\right)\left(\frac{1}{n}\sum_{i}\exp(\bx_{i}^\top\bbeta)\right), \\
      \rho_{2}^{(2)} &=& \frac{1}{n}\sum_{i}(\bgamma^\top\bS_{i}\bgamma)^{2}-\left(\frac{1}{n}\sum_{i}\bgamma^\top\bS_{i}\bgamma\right)^{2}.
    \end{eqnarray*}
    \[
      \mathbb{E}\left(\alpha-\rho_{2}^{(1)}\right)^{2}=\left(\frac{1}{n}\sum_{i}\exp(\bx_{i}^\top\bbeta)\right)^{2}\mathbb{E}\left\{\frac{1}{n}\sum_{i}(\bgamma^\top\bS_{i}\bgamma)-\frac{1}{n}\sum_{i}\exp(\bx_{i}^\top\bbeta)\right\}^{2}\rightarrow 0.
    \]
    \begin{eqnarray*}
      \hat{\delta}^{2} &=& \frac{1}{n}\sum_{i=1}^{n}\left\{\bgamma^\top\bS_{i}\bgamma-\mu(\bgamma^\top\bgamma)\right\}^{2} \\
      &=& \frac{1}{n}\sum_{i=1}^{n}(\bgamma^\top\bS_{i}\bgamma)^{2}-2\left(\frac{1}{n}\sum_{i=1}^{n}\bgamma^\top\bS_{i}\bgamma\right)\left(\frac{1}{n}\sum_{i=1}^{n}\exp(\bx_{i}^\top\bbeta)\right)+\left(\frac{1}{n}\sum_{i=1}^{n}\exp(\bx_{i}^\top\bbeta)\right)^{2}.
    \end{eqnarray*}
  It can be concluded that as $T_{\min}\rightarrow\infty$,
    \[
      \mathbb{E}\left(\hat{\delta}-\rho_{2}^{2}\right)^{2}=\mathbb{E}\left\{\frac{1}{n}\sum_{i=1}^{n}(\bgamma^\top\bS_{i}\bgamma)-\frac{1}{n}\sum_{i=1}^{n}\exp(\bx_{i}^\top\bbeta)\right\}^{2}\rightarrow 0,
    \]
  and
    \[
      \mathbb{E}\left(\frac{\hat{\phi}^{2}}{\hat{\delta}^{2}}-\rho_{2}\right)^{2}\rightarrow 0.
    \]

    \[
      \mathbb{E}\left\{\frac{1}{n}\sum_{i=1}^{n}\|\bS_{i}^{*}-\Sigma_{i}^{**}\|^{2}\right\}\rightarrow 0, \quad \Rightarrow \quad \mathbb{E}\|\bS_{i}^{*}-\Sigma_{i}^{**}\|^{2}\rightarrow 0.
    \]
  This implies that
    \[
      \mathbb{E}\left\{\bgamma^\top\bS_{i}^{*}\bgamma-\exp(\bx_{i}^\top\bbeta)\right\}^{2}-\mathbb{E}\left\{\bgamma^\top\Sigma_{i}^{**}\bgamma-\exp(\bx_{i}^\top\bbeta)\right\}^{2}\rightarrow 0.
    \]
\end{proof}

\subsection{Proof of Theorem~\ref{thm:optimal2}}

\begin{proof}
  For the first statement,
  \begin{eqnarray*}
    && \lim_{T_{\min}\rightarrow\infty}\inf_{T_{i}\geq T_{\min}}\left[\frac{1}{n}\sum_{i=1}^{n}\mathbb{E}\left\{\bgamma^\top\hat{\Sigma}_{i}\bgamma-\exp(\bx_{i}^\top\bbeta)\right\}^{2}-\frac{1}{n}\sum_{i=1}^{n}\mathbb{E}\left\{\bgamma^\top\bS_{i}^{*}\bgamma-\exp(\bx_{i}^\top\bbeta)\right\}^{2}\right] \\
    &\geq& \inf\left[\frac{1}{n}\sum_{i=1}^{n}\mathbb{E}\left\{\bgamma^\top\hat{\Sigma}_{i}\bgamma-\exp(\bx_{i}^\top\bbeta)\right\}^{2}-\frac{1}{n}\sum_{i=1}^{n}\mathbb{E}\left\{\bgamma^\top\Sigma_{i}^{**}\bgamma-\exp(\bx_{i}^\top\bbeta)\right\}^{2}\right] \\
    && +\lim\left[\frac{1}{n}\sum_{i=1}^{n}\mathbb{E}\left\{\bgamma^\top\Sigma_{i}^{**}\bgamma-\exp(\bx_{i}^\top\bbeta)\right\}^{2}-\frac{1}{n}\sum_{i=1}^{n}\mathbb{E}\left\{\bgamma^\top\bS_{i}^{*}\bgamma-\exp(\bx_{i}^\top\bbeta)\right\}^{2}\right].
  \end{eqnarray*}
  By Theorem~\ref{thm:optimal1}, the second term on the right converges to zero, and the first term is $\geq 0$ by the definition of $\Sigma_{i}^{**}$.

  For the second statement,
  \begin{eqnarray*}
    && \lim_{T_{\min}\rightarrow\infty}\left[\frac{1}{n}\sum_{i=1}^{n}\mathbb{E}\left\{\bgamma^\top\hat{\Sigma}_{i}\bgamma-\exp(\bx_{i}^\top\bbeta)\right\}^{2}-\frac{1}{n}\sum_{i=1}^{n}\mathbb{E}\left\{\bgamma^\top\bS_{i}^{*}\bgamma-\exp(\bx_{i}^\top\bbeta)\right\}^{2}\right]=0 \\
    &\Leftrightarrow& \lim_{T_{\min}\rightarrow\infty}\left[\frac{1}{n}\sum_{i=1}^{n}\mathbb{E}\left\{\bgamma^\top\hat{\Sigma}_{i}\bgamma-\exp(\bx_{i}^\top\bbeta)\right\}^{2}-\frac{1}{n}\sum_{i=1}^{n}\mathbb{E}\left\{\bgamma^\top\Sigma_{i}^{**}\bgamma-\exp(\bx_{i}^\top\bbeta)\right\}^{2}\right]=0 \\
    &\Leftrightarrow& \lim_{T_{\min}\rightarrow\infty} \mathbb{E}\left\{\bgamma^\top\hat{\Sigma}_{i}\bgamma-\exp(\bx_{i}^\top\bbeta)\right\}^{2}-\mathbb{E}\left\{\bgamma^\top\Sigma_{i}^{**}\bgamma-\exp(\bx_{i}^\top\bbeta)\right\}^{2}=0 \\
    &\Leftrightarrow& \lim_{T_{\min}\rightarrow\infty}\mathbb{E}\|\bgamma^\top\hat{\Sigma}_{i}\bgamma-\bgamma^\top\Sigma_{i}^{**}\bgamma\|^{2}=0 \\
    &\Leftrightarrow& \lim_{T_{\min}\rightarrow\infty}\mathbb{E}\|\bgamma^\top\hat{\Sigma}_{i}\bgamma-\bgamma^\top\bS_{i}^{*}\bgamma\|^{2}=0 \\
    &\Leftrightarrow& \lim_{T_{\min}\rightarrow\infty}\mathbb{E}\|\hat{\Sigma}_{i}-\bS_{i}^{*}\|^{2}=0.
  \end{eqnarray*}
  This finishes the proof of this theorem.
\end{proof}

\subsection{\texorpdfstring{$\bS_{i}^{*}$}{} is well-conditioned}
\label{appendix:sub:Si_willcondition}

In this section, we show that the proposed estimator $\bS_{i}^{*}$ is well-conditioned and thus, invertible. This is achieved by two steps: for $i=1,\dots,n$, (1) prove that the largest eigenvalue of $\bS_{i}^{*}$ is bounded in probability; (2) prove that the smallest eigenvalue of $\bS_{i}^{*}$ is bounded away from zero in probability. The proof follows the same strategy as in \citet{ledoit2004well}, but considers the case with multiple covariance matrices.

The covariance matrix $\Sigma_{i}$ has the eigendecomposition as $\Sigma_{i}=\Pi_{i}\Lambda_{i}\Pi_{i}^\top$. Let $\bU_{i}=\Lambda_{i}^{-1/2}\bY_{i}$. Denote $\lambda_{\max}(\bA)$ and $\lambda_{\min}(\bA)$ as the maximum and minimum eigenvalue of a matrix $\bA$, respectively.
\begin{eqnarray*}
  \lambda_{\max}(\bS_{i}^{*}) &=& \lambda_{\max}\left(\frac{\hat{\psi}^{2}}{\hat{\delta}^{2}}\mu\boldsymbol{\mathrm{I}}+\frac{\hat{\phi}^{2}}{\hat{\delta}^{2}}\bS_{i}\right) \\
  &=& \frac{\hat{\psi}^{2}}{\hat{\delta}^{2}}\mu+\frac{\hat{\phi}^{2}}{\hat{\delta}^{2}}\lambda_{\max}(\bS_{i}).
\end{eqnarray*}
\[
  \mu=\frac{1}{n}\sum_{i=1}^{n}\exp(\bx_{i}^\top\bbeta)=\frac{1}{n}\sum_{i=1}^{n}\lambda_{ij_{i}}\leq\max_{i}\lambda_{\max}(\Lambda_{i}).
\]
\[
  \lambda_{\max}(\bS_{i})=\lambda_{\max}\left(\frac{1}{T_{i}}\Lambda^{1/2}\bU_{i}\bU_{i}^\top\Lambda_{i}^{1/2}\right)\leq\lambda_{\max}\left(\frac{1}{T_{i}}\bU_{i}\bU_{i}^\top\right)\lambda_{\max}(\Lambda_{i})\leq \lambda_{\max}\left(\frac{1}{T_{i}}\bU_{i}\bU_{i}^\top\right)\max_{i}\lambda_{\max}(\Lambda_{i}).
\]
Assume that $p/T_{\max}$ converges to a limit, denoted as $c$. Based on Assumption A1, $c\leq C_{1}$. Based on the results in \citet{yin1988limit}, as $T_{\min}=\min_{i}T_{i}\rightarrow\infty$, for $i=1,\dots,n$,
\[
  \lim~\lambda_{\max}\left(\frac{1}{T_{i}}\bU_{i}\bU_{i}^\top\right)= (1+\sqrt{c})^{2}, \quad \text{a.s.}
\]
This implies that
\[
  \mathbb{P}\left\{\lambda_{\max}(\bS_{i}^{*})\leq(1+\sqrt{c})^{2}\max_{i}\lambda_{\max}(\Lambda_{i})\right\}\rightarrow 1,
\]
and
\[
  \mathbb{P}\left\{\lambda_{\max}(\bS_{i}^{*})\leq(1+\sqrt{C_{1}})^{2}\max_{i}\lambda_{\max}(\Lambda_{i})\right\}\rightarrow 1.
\]
Therefore, if $p/T_{\max}$ converges to a constant, the largest eigenvalue of $\bS_{i}^{*}$ is bounded in probability. If $p/T_{\max}$ has no limit, under Assumption A1, there exists a subsequence such that $p/T_{\max}$ converges. Along this sequence, the largest eigenvalue of $\bS_{i}^{*}$ is bounded in probability. This is true for any converging sequence, and in addition, the upper bound is independent of the particular subsequence. As a result, it holds for the whole sequence.

Next, we show that the smallest eigenvalue of $\bS_{i}^{*}$ is bounded away from zero in probability. Analogously, we have
\[
  \lambda_{\min}(\bS_{i})=\lambda_{\min}\left(\frac{1}{T_{i}}\Lambda^{1/2}\bU_{i}\bU_{i}^\top\Lambda_{i}^{1/2}\right)\geq\lambda_{\min}\left(\frac{1}{T_{i}}\bU_{i}\bU_{i}^\top\right)\lambda_{\min}(\Lambda_{i})\geq \lambda_{\min}\left(\frac{1}{T_{i}}\bU_{i}\bU_{i}^\top\right)\min_{i}\lambda_{\min}(\Lambda_{i}).
\]
First, assume $p/T_{\max}$ converges to a constant $c$. If $c\in(0,1)$, based on the results in \citet{bai1993limit}, 
\[
  \lim~\lambda_{\min}\left(\frac{1}{T_{i}}\bU_{i}\bU_{i}^\top\right)=(1-\sqrt{c})^{2}, \quad \text{a.s.}
\]
Assume $c\leq1-\kappa$ for some $\kappa\in(0,1)$. One can conclude that
\[
  \mathbb{P}\left\{\lambda_{\min}(\bS_{i}^{*})\geq(1-\sqrt{1-\kappa})^{2}\min_{i}\lambda_{\min}(\Lambda_{i})\right\}\rightarrow 1.
\]
When $c>1-\kappa$, we propose to identify a lower bound from the following
\[
  \lambda_{\min}(\bS_{i}^{*})=\lambda_{\min}\left(\frac{\hat{\psi}^{2}}{\hat{\delta}^{2}}\mu\boldsymbol{\mathrm{I}}+\frac{\hat{\phi}^{2}}{\hat{\delta}^{2}}\bS_{i}\right)\geq \frac{\hat{\psi}^{2}}{\hat{\delta}^{2}}\mu.
\]
Compare the right-hand side in the above to it population counterpart,
\[
  \frac{\hat{\psi}^{2}}{\hat{\delta}^{2}}\mu-\frac{\psi^{2}}{\delta^{2}}\mu=\mu\left\{\frac{\hat{\psi}^{2}-\psi^{2}}{\delta^{2}}+\hat{\psi}^{2}\left(\frac{1}{\hat{\delta}^{2}}-\frac{1}{\delta^{2}}\right)\right\}.
\]
From Lemmas \ref{lemma:shpar_bound} and \ref{lemma:shpar_consist}, we can show that the above converges to zero in probability. First, consider $\psi^{2}=\sum_{i=1}^{n}\psi_{i}^{2}/n$, where $\psi_{i}^{2}=\mathbb{E}\{\bgamma^\top(\bS_{i}-\Sigma_{i})\bgamma\}^{2}$. From the proof of Lemma~\ref{lemma:shpar_bound},
\begin{eqnarray*}
  \mathbb{E}\|\bS_{i}-\Sigma_{i}\|^{2} &=& \frac{1}{pT_{i}}\sum_{j=1}^{p}\sum_{k=1}^{p}\mathbb{E}(z_{i1j}^{2}z_{i1k}^{2})-\frac{1}{pT_{i}}\sum_{j=1}^{p}\sum_{k=1}^{p}\lambda_{ijk}^{2} \\
  &=& \frac{p}{T_{i}}\left\{\frac{1}{p^{2}}\sum_{j=1}^{p}\sum_{k=1}^{p}\mathbb{E}(z_{i1j}^{2}z_{i1k}^{2})\right\}-\frac{1}{pT_{i}}\sum_{j=1}^{p}\lambda_{ijj}^{2}.
\end{eqnarray*}
As $T_{\min}\rightarrow\infty$, the second term on the right-hand side converges to zero. For $\epsilon>0$, there exists a constant $M>0$ such that when $T_{\min}>M$, $\sum_{j=1}^{p}\lambda_{ijj}^{2}/(pT_{i})<\epsilon$. Thus, $\psi_{i}^{2}\geq(1-\kappa)-\epsilon$ and $\psi^{2}\geq(1-\kappa)-\epsilon$.
\[
  \lambda_{\min}(\bS_{i}^{*})\geq\frac{\hat{\psi}^{2}}{\hat{\delta}^{2}}\mu=\frac{\psi^{2}}{\delta^{2}}\mu+\left(\frac{\hat{\psi}^{2}}{\hat{\delta}^{2}}\mu-\frac{\psi^{2}}{\delta^{2}}\mu\right)\geq \frac{\psi^{2}}{\delta^{2}}\mu-\epsilon\geq\frac{\psi^{2}}{2C_{2}+C_{1}\sqrt{C_{2}}}-\epsilon\geq\frac{(1-\kappa)-\epsilon}{2C_{2}+C_{1}\sqrt{C_{2}}}-\epsilon.
\]
For a choice of $\epsilon$, we have
\[
  \mathbb{P}\left\{\lambda_{\min}(\bS_{i}^{*})\geq\frac{1-\kappa}{2(2C_{2}+C_{1}\sqrt{C_{2}})}\right\}\rightarrow 1.
\]
Therefore, for both $c\leq 1-\kappa$ and $c>1-\kappa$, the smallest eigenvalue of $\bS_{i}^{*}$ is bounded away from zero. Analogous to the proof of the largest eigenvalue, for the case that $p/T_{\max}$ does not have a limit, we can also have the conclusion for the whole sequence. Since both the largest and the smallest eigenvalues are bounded, $\bS_{i}^{*}$ is well-conditioned and invertible.

\subsection{Proof of Lemma~\ref{lemma:beta_equiv} and Theorem~\ref{thm:beta_consist}}

We first proof Lemma~\ref{lemma:beta_equiv}.
\begin{proof}
  \[
    \mathbb{E}(\bgamma^\top\Sigma_{i}^{*}\bgamma)=\frac{\psi^{2}}{\delta^{2}}\mu(\bgamma^\top\bgamma)+\frac{\phi^{2}}{\delta^{2}}\mathbb{E}(\bgamma^\top\bS_{i}\bgamma)=\frac{\psi^{2}}{\delta^{2}}\mu(\bgamma^\top\bgamma)+\frac{\phi^{2}}{\delta^{2}}\exp(\bx_{i}^\top\bbeta)=\exp(\bx_{i}^\top\bbeta^{*}).
  \]
  \[
    \frac{\sum_{i}\exp(\bx_{i}^\top\bbeta^{*})/n}{\sum_{i}\exp(\bx_{i}^\top\bbeta)/n}=\frac{\psi^{2}}{\delta^{2}}\frac{\mu(\bgamma^\top\bgamma)}{\sum_{i}\exp(\bx_{i}^\top\bbeta)/n}+\frac{\phi^{2}}{\delta^{2}}=\frac{\psi^{2}}{\delta^{2}}+\frac{\phi^{2}}{\delta^{2}}=1.
  \]
  \[
    \Rightarrow \quad \frac{1}{n}\sum_{i=1}^{n}\exp(\bx_{i}^\top\bbeta^{*})=\frac{1}{n}\sum_{i=1}^{n}\exp(\bx_{i}^\top\bbeta).
  \]
  Therefore,
    \[
      \bbeta^{*}=\bbeta.
    \]
\end{proof}

Next, we prove that the proposed estimator $\bbeta$ is a consistent estimator (Theorem~\ref{thm:beta_consist}).
\begin{proof}
  Using the consistency of pseudo-likelihood estimator~\citep{gong1981pseudo} and the conclusion in Lemma~\ref{lemma:beta_equiv}, $\hat{\bbeta}$ is a consistent estimator of $\bbeta$.
\end{proof}


\section{Additional Simulation Results}
\label{appendix:sec:sim}

\subsection{\texorpdfstring{$\bgamma$}{} unknown}
\label{appendix:sub:sim_gamma_unknown}

Here, we present the performance of estimating the fourth dimension (D4) when $\bgamma$ is unknown (Figure~\ref{appendix:fig:sim_gammaunknown_asmp}). From the figures, as $n$ and $T$ increase, the estimate of the covariance matrices, the projection and the model coefficient converge to the truth.
\begin{figure}[!ph]
  \begin{center}
    \subfloat[Bias of $\hat{\beta}_{1}$]{\includegraphics[width=0.3\textwidth]{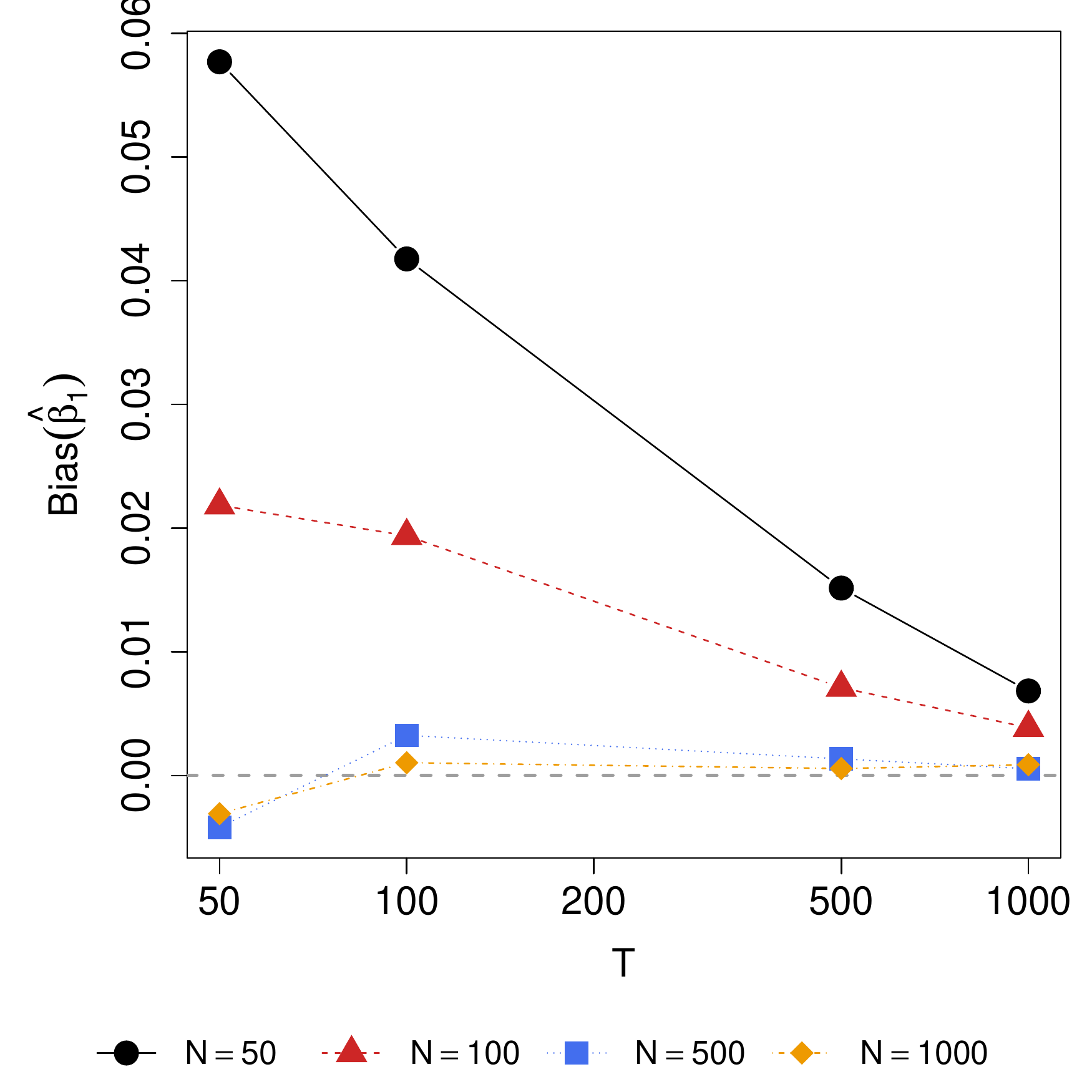}}
    \enskip{}
    \subfloat[MSE of $\hat{\beta}_{1}$]{\includegraphics[width=0.3\textwidth]{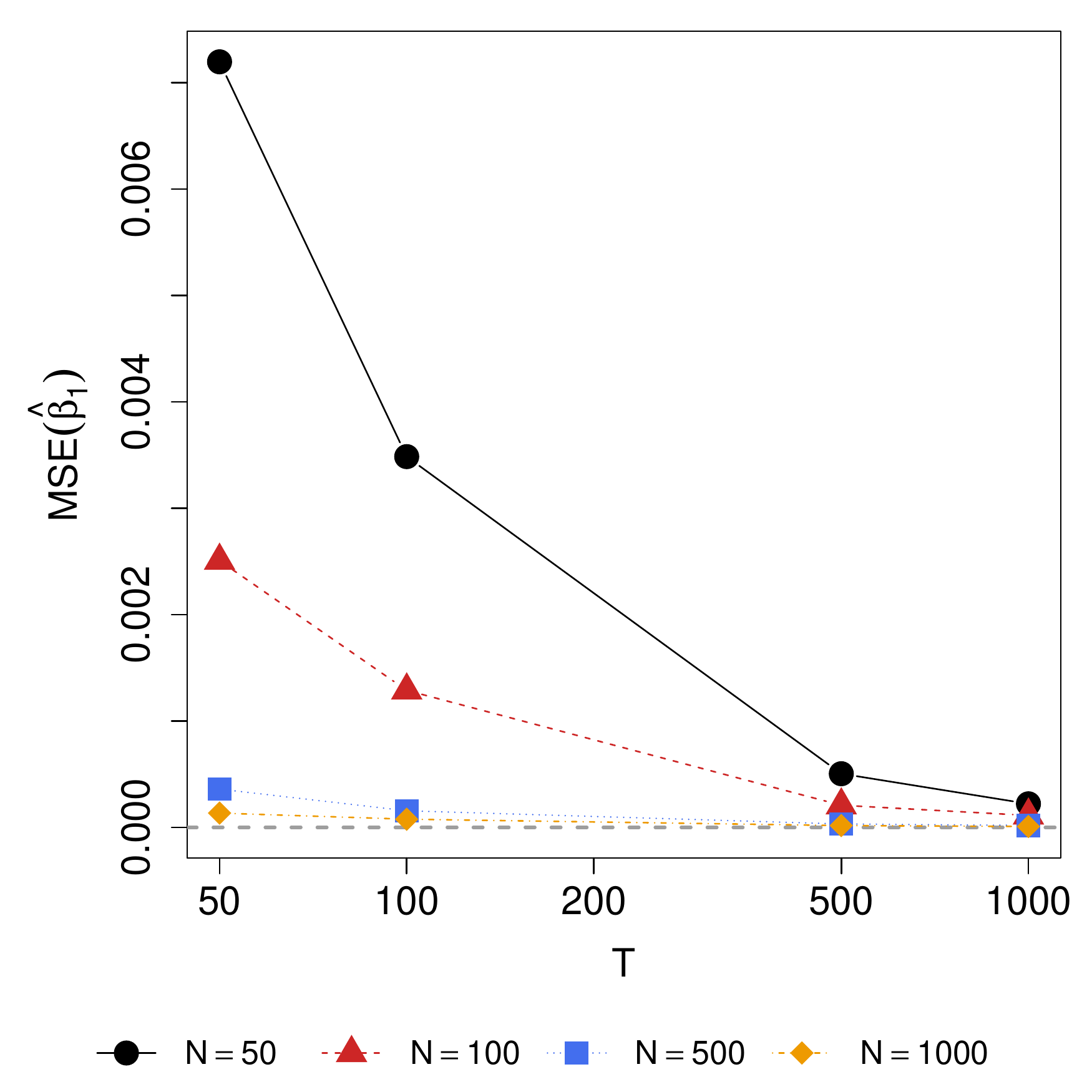}}
    \enskip{}
    \subfloat[Coverage probability of $\hat{\beta}_{1}$]{\includegraphics[width=0.3\textwidth]{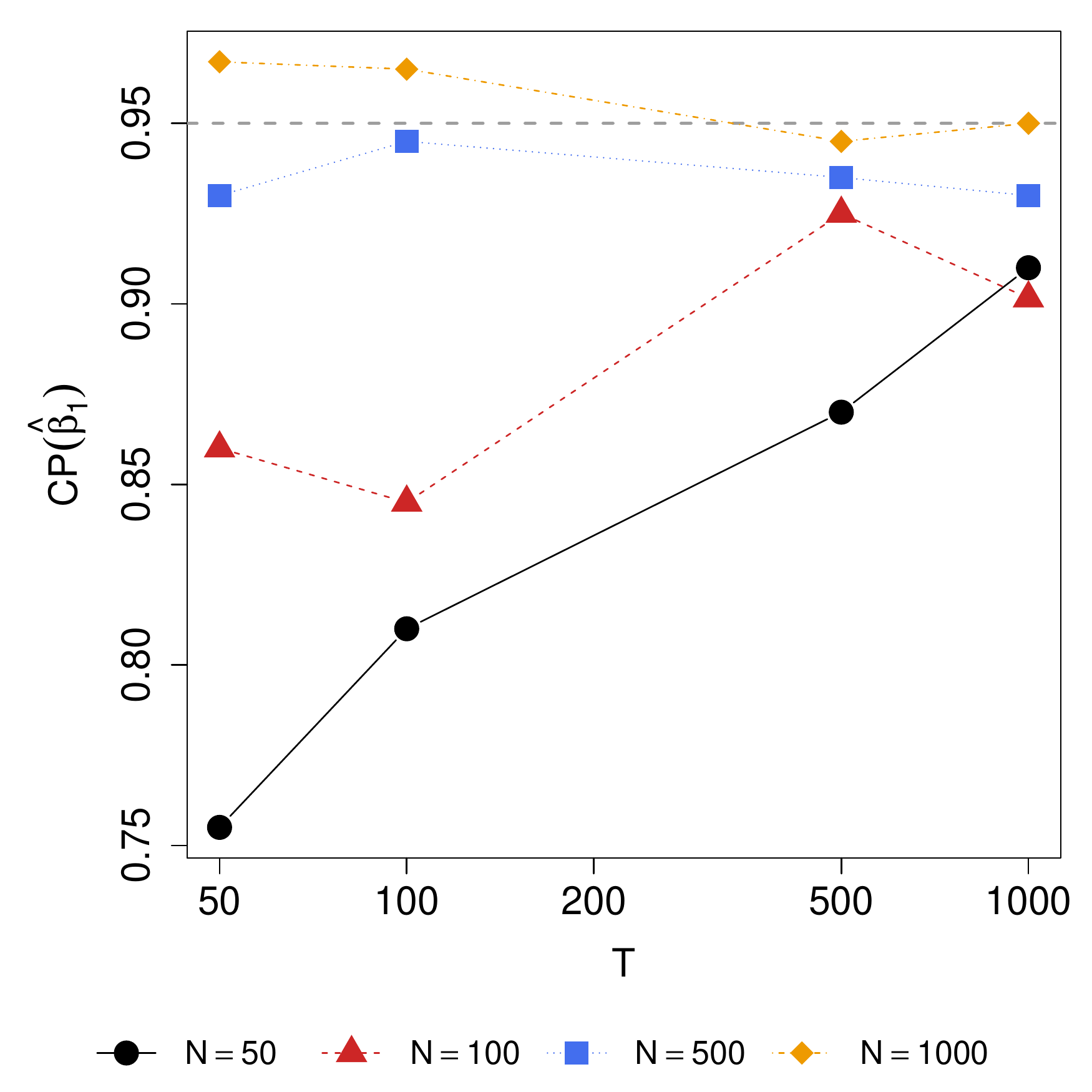}}

    \subfloat[MSE of $\hat{\lambda}_{ij}$]{\includegraphics[width=0.3\textwidth]{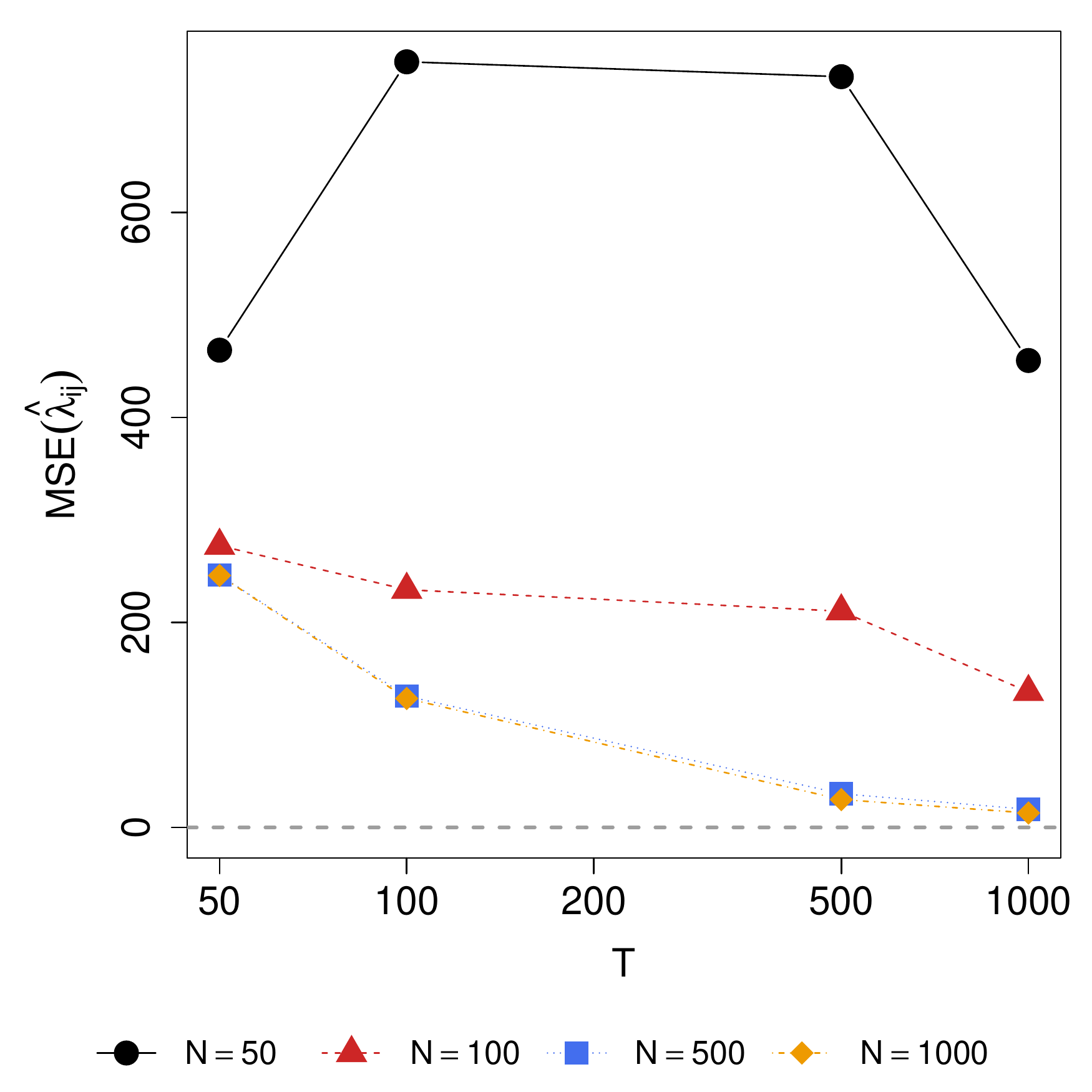}}
    \enskip{}
    \subfloat[Similarity metric $|\langle\hat{\bgamma},\bpi_{4}\rangle|$]{\includegraphics[width=0.3\textwidth]{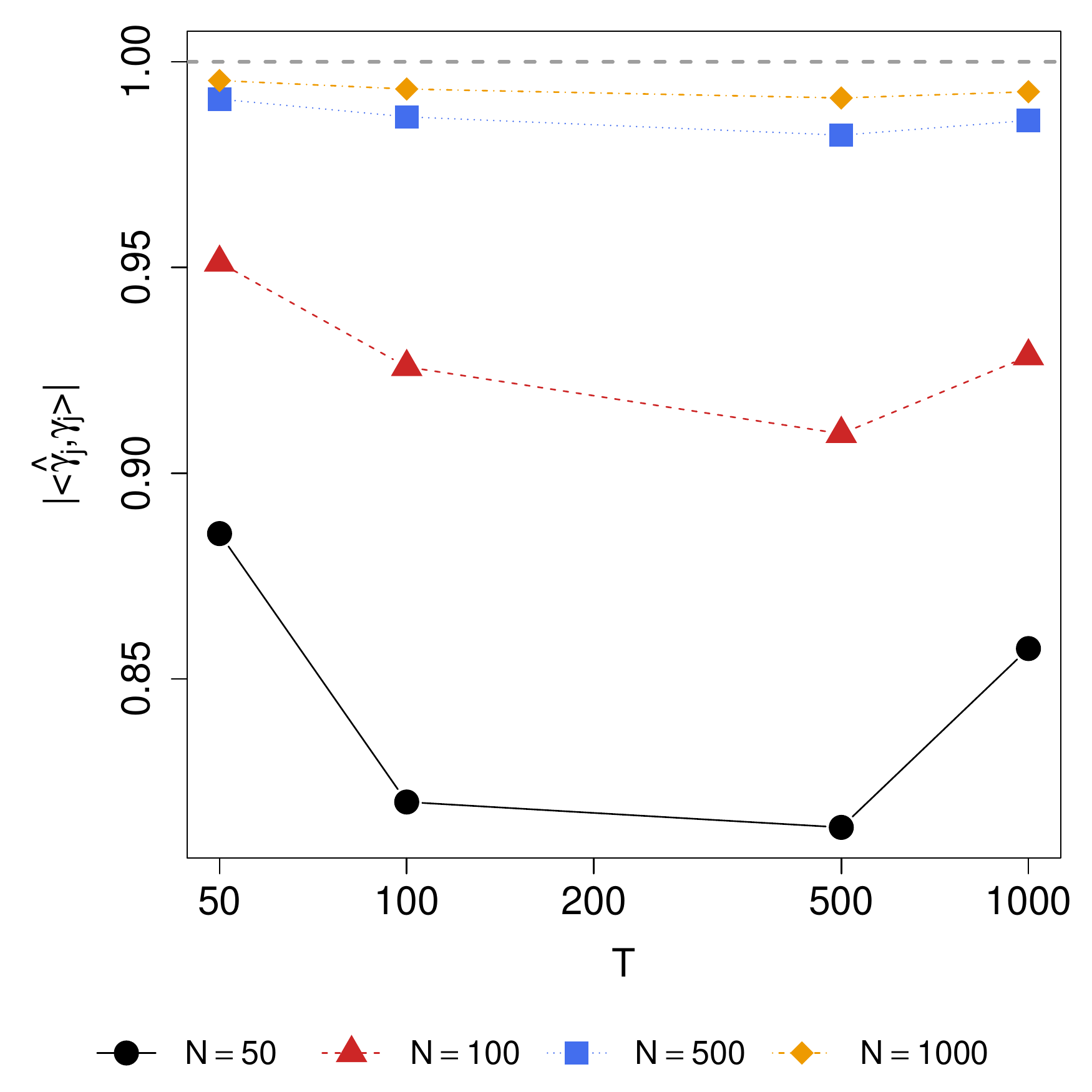}}
  \end{center}
  \caption{\label{appendix:fig:sim_gammaunknown_asmp}Estimation performance of PS-CAP in estimating the fourth dimension (D4) when $\bgamma$ is unknown. For $\hat{\beta}_{1}$, (a) bias, (b) mean squared error (MSE) and (c) coverage probability (CP) are presented, where CP is obtained from 500 bootstrap samples. For the eigenvalues $\hat{\lambda}_{ij}$, (d) MSE is presented. For $\hat{\bgamma}$, (e) similarity to $\bpi_{4}$ is presented. Data dimension $p=100$. Sample sizes vary from $n=50,100,500,100$ and $T_{i}=T=50,100,500,1000$.}
\end{figure}

\bibliographystyle{apalike}
\bibliography{Bibliography}

\end{document}